\documentclass[prx, amsfonts, amssymb, amsmath, nofootinbib, showkeys, singlecolumn,superscriptaddress, notitlepage, preprintnumbers]{revtex4-2}

\usepackage{epsfig}
\usepackage{bm}
\usepackage{amssymb}
\usepackage{amsmath}
\usepackage{color}
\usepackage{dcolumn}
\usepackage{tensor}
\usepackage[utf8]{inputenc}
\usepackage{amsmath}
\usepackage{amssymb}
\usepackage{amsthm} 
\usepackage{amscd, amsfonts, mathrsfs}
\usepackage{cases}
\usepackage{verbatim} 
\usepackage{epsf}
\usepackage{xcolor}
\usepackage[bookmarksnumbered,pdfpagelabels=true,plainpages=false,colorlinks=true,
   linkcolor=black,citecolor=black,urlcolor=black]{hyperref}
\usepackage{enumitem}
\usepackage{empheq}

\theoremstyle{plain}
\newtheorem{theorem}{Theorem}

\newtheorem{corollary}[theorem]{Corollary}

\theoremstyle{definition}
\newtheorem{remark}[theorem]{Remark}

\newtheorem{definition}{Definition}

\allowdisplaybreaks

\newcommand{\al}{\alpha}

\newcommand{\Umb}{\textrm{\textbf{U}}}

\newcommand{\be}{\begin{equation}}
\newcommand{\ee}{\end{equation}}
\newcommand{\bea}{\begin{eqnarray}}
\newcommand{\eea}{\end{eqnarray}}

\newcommand{\bml}{\begin{subequations}}
\newcommand{\eml}{\end{subequations}}

\newcommand{\bbm}{\begin{bmatrix}}
\newcommand{\ebm}{\end{bmatrix}}
\newcommand{\bvm}{\begin{vmatrix}}
\newcommand{\evm}{\end{vmatrix}}

\usepackage{soul}

\begin{document}
\title{Nonlinear Causality and Strong Hyperbolicity of Einstein-Israel-Stewart Theories of Transient Relativistic Fluid Dynamics}

\date{\today}

\author{Ian Cordeiro}
\email{itc2@illinois.edu}
\affiliation{Illinois Center for Advanced Studies of the Universe\\ Department of Physics, 
University of Illinois Urbana-Champaign, Urbana, IL 61801, USA}

\author{Enrico Speranza}
\email{enrico.speranza@unifi.it}
\affiliation{Department of Physics and Astronomy, University of Florence, 50019 Sesto Fiorentino, Italy}

\author{F\'abio S.\ Bemfica}
\email{fabio.bemfica@ufrn.br}
\affiliation{Department of Mathematics, Vanderbilt University, Nashville, TN, USA}
\affiliation{Escola de Ci\^encias e Tecnologia, Universidade Federal do Rio Grande do Norte, RN, 59072-970, Natal, Brazil}

\author{Marcelo M. Disconzi}
\email{marcelo.disconzi@vanderbilt.edu}
\affiliation{Department of Mathematics, Vanderbilt University, Nashville, TN, USA}

\author{Jorge Noronha}
\email{jn0508@illinois.edu}
\affiliation{Illinois Center for Advanced Studies of the Universe\\ Department of Physics, 
University of Illinois Urbana-Champaign, Urbana, IL 61801, USA}

\begin{abstract}
We present the first complete analysis of nonlinear causality and local well-posedness for a very general class of bulk and shear viscous theories of relativistic transient fluid dynamics, which encompasses (i) the original Israel-Stewart theory derived from entropy-current arguments, (ii) approaches derived from kinetic theory, and (iii) resummed gradient-expansion based formulations as particular subcases. Our work establishes, for the first time, simultaneously necessary and sufficient algebraic conditions for causality, alongside sufficient conditions guaranteeing strong hyperbolicity, in the full nonlinear regime. These results are rigorously proven for both systems coupled to Einstein's equations featuring a dynamic metric and on a fixed background, with or without a cosmological constant, and include baryon conservation  (in the absence of heat/diffusion currents). The conditions are purely algebraic, require no simplifying spacetime symmetry assumptions or a specific equation of state, and allow all transport coefficients to depend on the dissipative currents. We also demonstrate that the normalization, orthogonality, symmetry, and tracelessness physical constraints on the dynamical variables are properly propagated during the lifetime of the solutions. Our results provide a readily usable toolset with which one can investigate the domain of applicability of relativistic viscous fluid dynamics in numerical and phenomenological studies in heavy-ion collisions and astrophysics.
\end{abstract}

\maketitle

\tableofcontents


\section{Introduction}\label{sec:introduction}

The application of viscous fluid dynamics in a myriad of problems, ranging from sub-nuclear (heavy-ion collisions and the quark-gluon plasma) \cite{Heinz:2013th, Gale:2013da, Romatschke:2017ejr} to astrophysical length scales (neutron stars and black hole accretion flows) \cite{TheLIGOScientific:2017qsa, GBM:2017lvd, Monitor:2017mdv,Akiyama_2019, Event_Horizon_Telescope_Collaboration_2022, gravitycollab} reflects the ubiquitous presence of fluid-like behavior across an equally diverse range of out-of-equilibrium systems. Given that these applications often involve strong gravitational fields and relativistic flows, the underlying theory of \emph{relativistic fluid dynamics} must itself adhere to the core principles of relativity.  However, ensuring that relativistic hydrodynamics remains consistent with fundamental principles such as causality (i.e., that information cannot propagate faster than the speed of light) \cite{ChoquetBruhatGRBook} and the covariant stability of the equilibrium state \cite{Hiscock_Lindblom_stability_1983, Gavassino:2021cli,Gavassino:2021kjm,Gavassino:2021owo} is not automatically guaranteed by a simple covariant generalization of the non-relativistic Navier–Stokes equations \cite{LandauLifshitzFluids}. Searching for ways to formulate hydrodynamics consistent with causality and stability is a longstanding challenge, driving intense research efforts for decades, see \cite{Rocha:2023ilf}.

Early formulations of relativistic viscous fluid dynamics, such as those developed by Eckart \cite{EckartViscous} and Landau and Lifshitz \cite{LandauLifshitzFluids}, expressed the dissipative contributions to the energy-momentum tensor and other conserved currents as a truncated derivative expansion constructed using first-order derivatives of the standard hydrodynamic variables of ideal fluid dynamics: temperature, fluid velocity, and chemical potential. The specific form of the constitutive relations that define the dissipative currents in such theories can be derived by imposing that the second law of thermodynamics holds exactly (even for arbitrarily large variations from equilibrium) \cite{LandauLifshitzFluids}. However, these theories were later shown to display undesired features, such as superluminal signal propagation (acausality) \cite{PichonViscous} and exponentially growing instabilities \cite{Hiscock_Lindblom_instability_1985} around the equilibrium state. 

These issues were long thought to be an inherent flaw of the first-order approach. This perception was changed with the advent of the Bemfica-Disconzi-Noronha-Kovtun (BDNK) formalism \cite{Bemfica:2017wps,Kovtun:2019hdm,Bemfica:2019knx,Hoult:2020eho,Bemfica:2020zjp,Abboud:2023hos}, which clarified that the issues in the Eckart and Landau-Lifshitz theories stemmed not from the first-order formulation itself, but from particular choices when defining the hydrodynamic variables out of equilibrium ---a choice referred to as the \emph{hydrodynamic frame}.\footnote{This term should not be confused with the well-known concept of ``frame of reference'' in relativity, see \cite{Kovtun:2012rj}.} Causality and stability of first-order theories are intrinsically tied to the choice of hydrodynamic frame \cite{Bemfica:2017wps}, and the point of the BDNK formalism is to exploit this freedom created by the first-order truncation to employ hydrodynamic frames where the issues present in Eckart and Landau-Lifshitz theories do not appear. In addition to being linearly stable and causal in both linear and nonlinear regimes, BDNK theories have also been shown in \cite{DisconziBemficaRodriguezShaoSobolevConformal,Bemfica:2020zjp} to be strongly hyperbolic \cite{ReulaStrongHyperbolic,Shao:2023psr}, a mathematically rigorous property that guarantees local well-posedness---ensuring the existence and uniqueness of solutions over a finite timescale. 

However, in both heavy-ion and astrophysical contexts, the current standard is still predominantly Israel-Stewart (IS) theory \cite{ISRAEL1976310, MIS-6}---or, more precisely, its modern formulations such as the Denicol-Niemi-Molnár-Rischke (DNMR) equations~\cite{Denicol:2012cn}\footnote{It is this general class of kinetic-theory-derived equations that we consider in this work, and for which we will derive the first simultaneously necessary and sufficient conditions for nonlinear causality.}. In particular, IS-like theories address the acausal and unstable nature of Eckart and Landau-Lifshitz first-order theories by introducing a fundamentally different approach: they promote the viscous deviations to independent dynamic degrees of freedom. These dissipative fields obey their own nonlinear evolution equations, which are of a relaxation type. This means they dynamically relax from their initial non-equilibrium state towards the asymptotic values predicted by the Navier-Stokes (first-order) theory. This formalism, based on expanding the entropy current to second order in deviations from equilibrium, successfully patches the linear-order pathologies \cite{Hiscock_Lindblom_stability_1983,Olson:1989ey} while preserving the physical correspondence to the Eckart or Landau frames. In the linear regime near equilibrium, IS, DNMR, and other formulations, such as resummed Baier-Romatschke-Son-Starinets-Stephanov (BRSSS) theory \cite{Baier:2007ix} are equivalent, with important differences only arising from nonlinear terms.

Unlike BDNK theory, whose nonlinear causality and strong hyperbolicity are now well-established \cite{Bemfica:2017wps,Bemfica:2019knx,Bemfica:2020zjp}, the nonlinear properties of Israel-Stewart-type theories remain largely unknown. This fundamental disparity stems from the intrinsic mathematical structure of each theory. In BDNK, the principal part---the matrix governing the highest-order derivatives and thus the propagation of signals---remains unchanged between the linear and nonlinear regimes\footnote{Following standard practice in the community, by the linear regime we mean not a generic linearization of the system, but rather a linearization about thermodynamic equilibrium, i.e., linearization about states where all viscous contributions vanish.} \cite{Hoult:2023clg,Abboud:2023hos}. This property means that constraints which ensure causality at linear order naturally extend to the full nonlinear theory. In stark contrast, the principal part of IS-type theories undergoes a drastic change in the nonlinear regime. Fields that do not contribute to the principal part in a linear analysis actively participate nonlinearly, introducing new, complicated terms. This vastly increases the complexity of a full nonlinear analysis, making it difficult to establish rigorous guarantees of causality and hyperbolicity beyond linear perturbations.

Furthermore, IS-theory is traditionally applied in a specific hydrodynamic frame, analogous to the Landau or Eckart frames known from first-order approaches. This choice of frame has profound consequences for analyzing nonlinear causality. To illustrate, even for a simple system with a single non-scalar viscous current (like heat diffusion) in the Eckart frame, establishing causality requires understanding the roots of a 5th-order polynomial. Since no general analytic solution exists for quintic polynomials, this makes a complete causal characterization extremely difficult. Conversely, an analogous calculation in the Landau frame does yield a complete, analytic description of its causal region \cite{cordeiro2025diffusion}. This demonstrates that the problem of nonlinear causality in Israel-Stewart theories is not only a major open challenge but also one whose complexity is highly sensitive to the technical aspects concerning the choice of hydrodynamic frame.\footnote{For a formulation of Israel-Stewart theory in a general hydrodynamic frame, see \cite{Noronha:2021syv}.}

To date, general statements characterizing the full region of nonlinear causality are limited to specific, simplified cases of Israel-Stewart-type theories. These include radially expanding fluids \cite{Floerchinger_2018}, fluids where only particle number diffusion is taken into account \cite{cordeiro2025diffusion}, only bulk viscosity\footnote{In \cite{Bemfica:2019cop}, we note that the transport coefficients are allowed to depend on the bulk stress $\Pi$ directly (e.g., $\zeta = \zeta(\varepsilon, P, \Pi)$ where $\varepsilon$ and $P$ are the total energy density and equilibrium pressure, respectively), making it more general than traditional IS theory.} \cite{Bemfica:2019cop}, and only shear viscosity when paired with a magnetic field in the large-field limit \cite{Cordeiro:2023ljz}. For more general systems, progress has been made by deriving separate sets of necessary and sufficient conditions for causality. A significant advance was made for a broad class of equations---encompassing phenomenological Israel-Stewart \cite{MIS-6,Betz:2008me} and the more general DNMR theory \cite{Denicol:2012cn}---describing a single fluid with bulk and shear viscosity (without a baryon current) \cite{Bemfica:2020xym}. The analysis in \cite{Bemfica:2020xym} considered a class of equations notably more general than that in \cite{MIS-6,Betz:2008me}, as it allowed transport coefficients to depend directly on the viscous fluxes $\Pi$ and $\pi^{\mu\nu}$. These DNMR-derived equations include multiple terms that cannot be obtained from a standard entropy current approach, highlighting the complexity of establishing causality in realistic, high-order hydrodynamic theories.

The necessity of a nonlinear analysis is twofold. First, there is a fundamental mathematical difference from the linear case. In a linear causality analysis, the viscous fluxes (e.g., $\pi^{\mu\nu}$, $\Pi$) are treated as first-order perturbations around a background equilibrium state where these fluxes are zero. Consequently, they do not contribute to the principal part of the system of equations, and their nonlinear self-interactions are neglected. In the full nonlinear regime, however, these fluxes are fully dynamical variables. Their values are not small, and as mentioned above they contribute directly to the principal part, leading to causality constraints that explicitly bound their magnitudes. Second, from a physical perspective, many important instabilities---such as the firehose \cite{RosenbluthFirehose,ParkerFirehose} and mirror \cite{Southwood_Kivelson_Mirror} instabilities---are inherently nonlinear phenomena that depend directly on the amplitude of these viscous fluxes. Such phenomena are of significant interest in fields like black hole accretion and plasma physics. Therefore, a nonlinear causal analysis is essential not only for mathematical consistency but also for capturing or ruling out these potential physical phenomena in a hydrodynamical description \cite{Cordeiro:2023ljz}.

In particular, some models of the initial state of heavy-ion collisions can be so far from equilibrium that causality can be violated in up to 75\% of fluid cells at early times in state-of-the-art viscous hydrodynamical simulations \cite{Plumberg:2021bme}. This result, demonstrated in simulations using the DNMR equations, highlights a critical issue when  applying hydrodynamics at very early times in heavy-ion collisions \cite{Plumberg:2021bme,Chiu:2021muk,ExTrEMe:2023nhy}. 
As the system evolves, a comparable fraction of cells continues to exhibit causality violations. The measured extent of this acausal behavior is currently an underestimate, resulting from the use of conditions obtained in Ref.~\cite{Bemfica:2020xym} that are necessary albeit not sufficient to define the exact causal region. Thus, determining where hydrodynamic theories are consistent with causality and stability is both an outstanding theoretical question regarding how relativistic hydrodynamics should be formulated and an imperative phenomenological tool for extracting physically consistent initial conditions for hydrodynamics while removing or modifying those that are not. For a detailed study and discussion of the effects of acausality-driven instabilities in this context, see \cite{Gavassino:2025bsn}. 

Moreover, the persistence of acausal solutions in simulations is a profound physical concern. A foundational result by Gavassino \cite{Gavassino:2021owo} demonstrates that for any Lorentz covariant dissipative theory, acausality necessarily implies instability in some Lorentz frame. This conclusion  stems from a general physical mechanism: acausal propagation allows the thermodynamic arrow of time to be reversed via a Lorentz boost, converting dissipation into growth. Consequently, a theory admitting widespread nonlinearly acausal solutions can be physically pathological. The consequences of acausality for numerical simulations have been explored in \cite{Gavassino:2025bsn}. This underscores the critical need for the simultaneously necessary and sufficient nonlinear causality constraints we provide for the first time in this work, which are essential for determining the physical validity of hydrodynamic simulations.\

This paper resolves several major open questions in the field regarding the nonlinear properties and validity of the generalized Israel-Stewart class of theories that was addressed in \cite{Bemfica:2020xym}. First, our work extends the analysis to a dynamic spacetime metric and a non-zero baryon current, achieving two primary results: (i) we derive simultaneously necessary and sufficient algebraic inequalities that define the exact region of nonlinear causality, and (ii) we rigorously prove strongly hyperbolicity and local well-posedness throughout almost this entire region\footnote{Technically, only a few endpoints of the causality constraints are excluded from the proof, and even these points remain at least weakly hyperbolic.}. 

These results have significant physical ramifications. By incorporating a baryon current, our constraints become applicable to a wide range of baryon-rich environments like neutron star mergers and low-energy heavy-ion collisions, moving beyond the zero net baryon limit relevant for ultrarelativistic heavy-ion collisions (such as at the LHC). Furthermore, coupling to Einstein's equations self-consistently captures the dynamic interplay between the viscous fluid and spacetime, which is essential for modeling systems where the fluid's gravity is non-negligible, such as in black hole accretion flows and neutron star mergers. 

The derived causality constraints are algebraic inequalities that are completely general: they make no assumptions about the spacetime metric $g_{\mu\nu}$ (even allowing for a dynamical coupling to Einstein's equations, the intrinsic symmetries of the system) or an equation of state. Crucially, these constraints are simultaneously necessary and sufficient, thereby resolving contentious regions of unknown causality/acausality identified in \cite{Plumberg:2021bme,ExTrEMe:2023nhy}. \emph{Because such constraints are algebraic, they can in principle be readily checked at each time step in numerical simulations, providing a definitive set of criteria to ensure causality of simulations.}

We also demonstrate that for a broad subclass of these theories---which can be viewed as a second-order extension of Maxwell-Cattaneo hydrodynamics---the necessary and sufficient causality constraints simplify significantly under reasonable physical conditions. This simplification makes the application of these constraints far less computationally intensive, providing a practical tool for future phenomenological studies. Finally, we prove for the first time that all the defining properties of the dynamic variables (e.g., the fact that the shear stress tensor is orthogonal to the flow velocity) are preserved for the lifetime of the solutions of the nonlinear equations.

This paper is organized as follows. In Section~\ref{sec:EOM}, we present the relevant equations of motion that define the class of relativistic viscous fluids we consider, and express them in matrix form. In Section~\ref{sec:causality}, we formally define what causality means for a system of partial differential equations (PDEs), and connect this definition to energy conditions and historical definitions of causality. We then apply this definition to the method of characteristics for PDEs and use it to derive a system of inequalities prescribing the exact region where the theory is nonlinearly causal. We also show how these conditions can generate significantly simpler necessary conditions using the conformal fluid case, though this process works in general. In Section~\ref{sec:hyperbolicity}, we then provide a set of sufficient constraints for strong hyperbolicity, and in Section~\ref{subsec:propagate}, we show that these solutions conserve fundamental symmetry constraints throughout their entire existence. Next, in Section~\ref{sec:MC}, we consider a subclass of theories that maintain the same overall structure of Maxwell-Cattaneo fluids, but include second-order transport coefficients. In this regime, we show that the complexity of the necessary and sufficient causality constraints significantly reduce under a physically realistic assumption. In Section~\ref{sec:Einstein}, we show that all of the previous results are compatible with Einstein's equations, in the sense that the new system of IS-type theory plus Einstein's equations with or without a (dynamic) cosmological constant satisfy the causality and strong hyperbolicity results with no added assumptions. We conclude this work and provide a few avenues for possible nonlinear analyses in other relevant cases of viscous hydrodynamics in \ref{sec:conclusions}. Technical material concerning the mathematical proofs presented in this work can be found in the appendices. 

\emph{Notation:} We use a mostly-plus metric signature and natural units $c=\hbar=k_B = 1$.

\section{The Equations of Motion}\label{sec:EOM}

We consider a system described by a single fluid, with baryon number density $n$ and average fluid four-velocity $u^\mu$ with normalization $u^\alpha u_\alpha = -1$, along with bulk $\Pi$ and shear $\pi^{\mu\nu}$ viscous corrections. For the sake of definiteness, this theory is formulated in the Landau hydrodynamic frame with zero particle number diffusion. The shear viscosity tensor in particular is symmetric, $\pi^{\mu\nu} = \pi^{\nu\mu}$, traceless $\tensor{\pi}{^\alpha_\alpha} = 0$, and orthogonal to the four-velocity ($u_\alpha\pi^{\alpha\mu} = 0^\mu$). The conserved matter currents describing this fluid are
\bml
\label{eq:conserved_quantities}
\bea
\label{eq:EMtensor-nodiffusion}
T^{\mu\nu} &=& \varepsilon u^\mu u^\nu + (P+\Pi)\Delta^{\mu\nu} + \pi^{\mu\nu},\\
J^\mu &=& nu^\mu,
\eea
\eml
where $T^{\mu\nu}$ is the energy-momentum tensor and $J^\mu$ is the baryon number current. Per the usual convention, $\varepsilon$ and $P$ are the energy density and equilibrium pressure, respectively, and $\Delta^{\mu\nu} = g^{\mu\nu} + u^\mu u^\nu$ is the projection tensor orthogonal to $u^\mu$. Conservation of energy $u_\alpha\nabla_\beta T^{\alpha\beta} = 0$ and momentum $\tensor{\Delta}{^\mu_\alpha}\nabla_\beta T^{\alpha\beta} = 0$, and baryon number $\nabla_\alpha J^\alpha = 0$ provide us with the respective equations of motion
\bml
\label{eq:conservationlaw}
\bea
\label{eq:energy-nodiffusion}
0 &=& u^\alpha\nabla_\alpha \varepsilon + E\nabla_\alpha u^\alpha + \pi^{\alpha\beta}\nabla_\beta u_\alpha,\\
\label{eq:momentum-nodiffusion}
0^\mu &=& Eu^\alpha\nabla_\alpha u^\mu + \Delta^{\mu\alpha}\nabla_\alpha(P+\Pi) + \nabla_\alpha\pi^{\mu\alpha} - u^\mu \pi^{\alpha\beta}\nabla_\beta u_\alpha,\\
\label{eq:number-nodiffusion}
0 &=& u^\alpha\nabla_\alpha n + n\nabla_\alpha u^\alpha,
\eea
\eml
where $E = \varepsilon + P + \Pi$ for convenience. We note that the constraints $u^\alpha u_\alpha = -1$ and $u_\alpha \pi^{\alpha\mu}=0$ have been used explicitly in deriving Eqs.\ \eqref{eq:energy-nodiffusion} and \eqref{eq:momentum-nodiffusion}. Consequently, the system \eqref{eq:conservationlaw} constitutes 5 independent dynamical equations. The supplemental DNMR equations required to complete the system (bulk $\Pi$ and 5 independent $\pi^{\mu\nu}$) are presented next. The promotion of $u^0$ and all 16 components of $\pi^{\mu\nu}$ to independent variables will be performed later when constructing the augmented system for the causality and strong hyperbolicity analysis.

\subsection{Transient Relativistic Fluid Dynamical Equations}\label{subsec:IS}

It is well-known that both the Israel-Stewart equations and the DNMR equations can be obtained from kinetic theory, for a review see \cite{Rocha:2023ilf}. A critical distinction of the DNMR formalism is that, while also derived from the Boltzmann equation, its final equations of motion are obtained via a systematic truncation procedure in powers of Knudsen and inverse Reynolds numbers \cite{Denicol:2012cn} for practical use. In this sense, DNMR can also be regarded as an effective theory consistently formulated in powers of Knudsen and inverse Reynolds numbers and, as such, their validity extends beyond the kinetic theory domain originally used in their derivation. The truncation in DNMR also means that, unlike the original phenomenological Israel-Stewart equations---which were constructed to exactly satisfy the second law of thermodynamics by design---the full, non-linear DNMR equations do not guarantee a manifestly non-negative entropy production rate (though this property is formally recovered when considering solutions sufficiently close to equilibrium). In fact, the second law of thermodynamics is satisfied in the underlying kinetic theory, but the truncated DNMR hydrodynamic equations do not preserve this property in a closed algebraic form (the additional terms in DNMR, which are absent in the original IS theory, are responsible for this difference). Nevertheless, we remark that it is possible to formally obtain IS relaxation equations using the DNMR formalism in the 14-moments approximation, see \cite{Denicol_2012}.

The relaxation equations for the dissipative currents motivated by DNMR that we use in this work can be expressed as
\bml
\label{eq:supplementalIS}
\bea
\label{eq:bulk-nodiffusion}
\tau_\Pi u^\alpha\nabla_\alpha\Pi + \Pi &=& -\zeta\nabla_\alpha u^\alpha -\delta_{\Pi\Pi}\Pi\nabla_\alpha u^\alpha - \lambda_{\Pi\pi}\pi^{\alpha\beta}\sigma_{\alpha\beta},\\
\label{eq:shear-nodiffusion}
\tau_\pi\tensor{\Delta}{^{\mu\nu}_{\beta\gamma}}u^\alpha\nabla_\alpha\pi^{\beta\gamma} + \pi^{\mu\nu} &=& -2\eta\sigma^{\mu\nu} - \delta_{\pi\pi}\pi^{\mu\nu}\nabla_\alpha u^\alpha -\tau_{\pi\pi}\pi^{\alpha\langle\mu}\tensor{\sigma}{^{\nu\rangle}_\alpha}-\lambda_{\pi\Pi}\Pi\sigma^{\mu\nu}.
\eea
\eml
Here, $\sigma^{\mu\nu} = \tensor{\Delta}{^{\mu\nu}_{\alpha\beta}}\nabla^\alpha u^\beta$ is the shear tensor in terms of the traceless symmetric projection tensor orthogonal to $u^\mu$ expressed as $\tensor{\Delta}{^{\mu\nu}_{\rho\sigma}} = \frac{1}{2}\left(\tensor{\Delta}{^\mu_\rho}\tensor{\Delta}{^\nu_\sigma} + \tensor{\Delta}{^\mu_\sigma}\tensor{\Delta}{^\nu_\rho}\right) - \frac{1}{3}\Delta^{\mu\nu}\Delta_{\rho\sigma}$. In DNMR \cite{Denicol_2012}, the transport coefficients $\{\tau_\Pi, \zeta, \tau_\pi, \eta\}$ represent contributions of first-order deviations from equilibrium, whereas $\{\delta_{\Pi\Pi}, \lambda_{\Pi\pi}, \delta_{\pi\pi}, \tau_{\pi\pi}, \lambda_{\pi\Pi}\}$ are second-order contributions. Here, we go beyond DNMR since in our analysis all of these transport coefficients can depend not only on $\varepsilon$ and $n$, but they can also depend on appropriate combinations of $\Pi$ and $\pi_{\mu\nu}$, e.g., $\zeta = \zeta(\varepsilon,n,\Pi, \pi_{\mu\nu}\pi^{\mu\nu})$. This illustrates how general the class of theories considered in this work is. Combining Eqs.~\eqref{eq:conservationlaw}--\eqref{eq:supplementalIS} provides us with the full prescription of equations of motion that define the generalized IS-theory we consider throughout the paper. \footnote{We note that in this work we have not included terms involving the kinematic vorticity $\omega_{\mu\nu}$ that are, in principle, present in the full set of DNMR equations, see \cite{Denicol:2012cn}. Such terms do appear from standard entropy current derivations \cite{MIS-6} and their phenomenological relevance to simulations is not clear at the moment. Thus, we leave their consideration to dedicated future work.}

\subsection{Covariant Formulation with Auxiliary Variables}\label{sec:augmented_system}

In the formulation of the DNMR equations above, the physical degrees of freedom are the energy density $\varepsilon$, the baryon density $n$, the three independent spatial components $u^i$ of the fluid velocity, the bulk viscous pressure $\Pi$, and the five independent components of the shear stress tensor $\pi^{\mu\nu}$. However, to facilitate a fully covariant analysis of causality and strong hyperbolicity via the characteristic determinant, it is advantageous to promote all components of $u^\mu$ and $\pi^{\mu\nu}$ to independent dynamical variables. This approach preserves manifest covariance, which is essential for the subsequent mathematical analysis. The pedagogical example below elucidates the construction of this ``augmented system'' and the circumstances under which it is equivalent to the original physical equations.

\begin{quote}
\textbf{Example: Ideal Fluid's Augmented System.}
To illustrate the core idea, let us first consider the simpler case of an ideal fluid, with an energy-momentum tensor $T^{\mu\nu}=\varepsilon u^\mu u^\nu+P\Delta^{\mu\nu}$ and a barotropic equation of state $P=P(\varepsilon)$. In the standard physical formulation, the dynamical variables are $\varepsilon$ and $u^i$, with $u^0$ determined algebraically from the normalization condition $u^\mu u_\mu = -1$. The equations of motion are derived from the conservation law $\nabla_\nu T^{\mu\nu}=0$, which yields:
\[
(\varepsilon+P)(u^\mu \nabla_\nu u^\nu+u^\nu\nabla_\nu u^\mu)+u^\mu u^\nu \nabla_\nu \varepsilon+c_s^2\Delta^{\mu\nu}\nabla_\nu \varepsilon=0,
\]
where $c_s^2=\partial P/\partial \varepsilon$ is the squared sound speed at zero baryon density. This covariant equation can be projected into 4 independent components, that is the projections $u_\mu\nabla_\nu T^{\mu\nu}=0$ and $C_i^\mu\nabla_\nu \tensor{T}{_\mu^\nu}=0$, where $\tensor{C}{^\nu_i}\equiv \tensor{g}{^\nu_i}-\tensor{g}{^\nu_0} u_i/u_0$ are three linearly independent four-vectors orthogonal to $u^\mu$. The resulting equations are:
\begin{align*}
u^\nu \nabla_\nu \varepsilon+(\varepsilon+P)C^\nu_j\nabla_\nu u^j&=0, \\
c_s^2C_i^\nu \nabla_\nu \varepsilon+(\varepsilon+P)A_{ij}u^\nu\nabla_\nu u^j&=0,
\end{align*}
where $A_{ij}\equiv \tensor{C}{^\mu_i}\Delta_{\mu\nu}\tensor{C}{^\nu_j}$ is an invertible matrix \cite{ChoquetBruhatGRBook}. While this formulation is physically equivalent to the original system, the manifest Lorentz covariance is lost, as the splitting relies on a specific choice of spatial components.

To construct a covariant ``augmented system'' for the ideal fluid, we promote $u^0$ to an independent variable, making the full set $\{\varepsilon, u^\mu\}$ dynamical. We now require five independent dynamical equations. These are obtained by splitting the conservation law into the energy equation $u_\mu \nabla_\nu T^{\mu\nu}=0$ and the four momentum equations $\Delta^\mu_{\ \nu} \nabla_\alpha T^{\alpha\nu}=0^\mu$. However, these five equations are not independent; contracting the four momentum equations with $u_\mu$ yields an identity ($0=0$), leaving an underdetermined system.

To resolve this, we explicitly use the derivative of the normalization constraint, $u_\nu \nabla_\alpha u^\nu=0$, to modify the momentum equations. This yields the following set of five independent, covariant equations for the augmented system:
\bml
\label{eq:Euler}
\begin{align}
\label{eq:Euler_a}
u^\nu \nabla_\nu \varepsilon+(\varepsilon+P)\nabla_\nu u^\nu&=0, \\
\label{eq:Euler_b}
c_s^2 \Delta^{\mu\nu} \nabla_\nu \varepsilon+(\varepsilon+P)u^\nu\nabla_\nu u^\mu&=0.
\end{align}
\eml
Crucially, if we now contract \eqref{eq:Euler_b} with $u_\mu$, we no longer find an identity but instead obtain the dynamical equation $u_\mu u^\nu\nabla_\nu u^\mu=0$, which can be recast as $u^\nu\nabla_\nu \Phi = 0$, where $\Phi \equiv u^\mu u_\mu + 1$. This equation governs the evolution of the constraint. The augmented system \eqref{eq:Euler_a}--\eqref{eq:Euler_b} thus admits a wider class of solutions than the original physical system, including those where $\Phi$ is an arbitrary constant. Therefore, it is instructive to use $\Delta^{\mu\nu}=g^{\mu\nu}-u^\mu u^\nu/u_\alpha u^\alpha$ in order to keep the orthogonality between $\Delta^{\mu\nu}$ and $u^\mu$ whenever $\Phi\ne 0$.

The equivalence between the systems is established as follows. Suppose $(\varepsilon, u^\mu)$ is a sufficiently regular solution of the augmented system \eqref{eq:Euler}. If the initial data satisfies the physical constraint $\Phi=0$, then the evolution equation $u^\nu\nabla_\nu \Phi = 0$, obtained by simply contracting \eqref{eq:Euler_b} with $u^\mu$ and which is locally well-posed, guarantees that $\Phi=0$ for the lifetime of solutions. Therefore, for such initial data, the solution of the augmented system also satisfies the original physical system with $u^\mu u_\mu = -1$.
\end{quote}
This ideal fluid example directly motivates our treatment of the full Israel-Stewart theory. For the DNMR equations, the relevant constraints on the dynamical variables are:
\begin{align}
\label{Constraints}
u^\alpha u_\alpha = -1, \quad \pi^{[\mu\nu]} = 0, \quad u_\alpha \pi^{\alpha\mu} = 0^\mu, \quad \pi^\alpha_{\ \alpha} = 0,
\end{align}
where $\pi^{[\mu\nu]} = (\pi^{\mu\nu} - \pi^{\nu\mu})/2$ denotes antisymmetrization of the respective indices. Through a similar procedure---promoting all components of $u^\mu$ and $\pi^{\mu\nu}$ to independent variables and using \eqref{Constraints} explicitly---we construct the augmented system for the DNMR equations, which is given by:
\bml
\label{eq:propsystem}
\bea
0 &=& u^\alpha\nabla_\alpha \varepsilon + E\nabla_\alpha u^\alpha + \tensor{\pi}{^\alpha_\nu}\nabla_\alpha u^\nu,\\
\label{eq:momentumprop}
0^\mu &=& Eu^\alpha\nabla_\alpha u^\mu + \Delta^{\mu\alpha}\nabla_\alpha(P+\Pi) + \nabla_\alpha\pi^{\mu\alpha} - u^\mu \pi^{\alpha\beta}\nabla_\beta u_\alpha,\\
0 &=& n\tensor{g}{^\alpha_\nu}\nabla_\alpha u^\nu + u^\alpha\nabla_\alpha n,\\
\label{eq:bulkprop}
\Pi + \tau_\Pi u^\alpha\nabla_\alpha\Pi &=& -\left[\left(\zeta + \delta_{\Pi\Pi}\Pi\right)\tensor{g}{^\alpha_\nu} + \lambda_{\Pi\pi}\tensor{\pi}{^\alpha_\nu}\right]\nabla_\alpha u^\nu ,\\
\label{eq:shearprop}
\pi^{\mu\beta} + \tau_\pi u^\alpha\tensor{g}{^\mu_\nu}\nabla_\alpha\pi^{\nu\beta}  &=& -\tensor{\mathcal{C}}{_\pi^\mu^\beta^\alpha_\nu}\nabla_\alpha u^\nu,
\eea
\eml
where 
\begin{align}
\label{eq:C_def}
\tensor{\mathcal{C}}{_\pi^\mu^\beta^\alpha_\nu}&\equiv \left(\eta + \frac{1}{2}\lambda_{\pi\Pi}\Pi\right)\left(\Delta^{\mu\alpha}\tensor{g}{^\beta_\nu}+\Delta^{\beta\alpha}\tensor{g}{^\mu_\nu}-\frac{2\Delta^{\mu\beta}\Delta^\alpha_\nu}{3}\right)- 2\tau_\pi u^\alpha \tensor{\pi}{^{(\mu}_\nu}u^{\beta)}  + \left(\delta_{\pi\pi}-\frac{\tau_{\pi\pi}}{3}\right)\pi^{\mu\beta}\tensor{g}{^\alpha_\nu}\notag\\
&\quad\hspace{1cm} + \frac{\tau_{\pi\pi}}{2}\left(\tensor{\Delta}{^\alpha^{(\beta}}\tensor{\pi}{^{\mu)}_\nu} + \tensor{\pi}{^\alpha^{(\mu}}\tensor{g}{^{\beta)}_\nu}\right)-\frac{\tau_{\pi\pi}}{3}\Delta^{\mu\beta}\tensor{\pi}{^\alpha_\nu}    
\end{align}
This new formulation necessarily differs from the original physical equations given by the conservation laws $\nabla_\nu T^{\mu\nu}=0^\mu$ and $\nabla_\mu J^\mu=0$ with the supplemental equations \eqref{eq:supplementalIS}. The new equations \eqref{eq:propsystem} are obtained by explicitly substituting the constraints into the conservation laws---already obtained in \eqref{eq:conservationlaw}---and into \eqref{eq:supplementalIS}. For instance, the term $\tensor{\Delta}{^\mu^\nu_\alpha_\beta} u^\lambda \nabla_\lambda \pi^{\alpha\beta}$ in \eqref{eq:shear-nodiffusion} simplifies to $u^\lambda \nabla_\lambda \pi^{\mu\nu}- 2\pi^{\alpha(\mu}u^{\nu)} u^\lambda\nabla_\lambda u_\nu$ only after imposing \eqref{Constraints} with the particular use of $u^\mu u_\mu=-1$ as $u_\alpha \nabla_\mu u^\alpha=0$. Similarly, the conservation law $\nabla_\nu T^{\mu\nu}=0^\mu$ is split into the energy equation $u_\mu \nabla_\nu T^{\mu\nu}=0$ and the momentum equations $\tensor{\Delta}{^\lambda_\mu} \nabla_\nu T^{\mu\nu}=0^\lambda$. Again, the latter appears to provide 4 equations, but it contains a redundancy: contracting it with $u_\lambda$ yields an identity $0=0$. To obtain a linearly independent set of 4 spatial momentum equations and to write the temporal component $u^0$ as a dynamical variable, the differential form of the constraints $u^\alpha u_\alpha = -1$ and $\pi^{\mu\nu}u_\nu=0$ has been used. The same thing happens going from \eqref{eq:shear-nodiffusion} to \eqref{eq:shearprop} after the aforementioned change in $\tensor{\Delta}{^\mu^\nu_\alpha_\beta} u^\lambda \nabla_\lambda \pi^{\alpha\beta}$, giving rise to 16 dynamical independent equations instead the original 5 independent dynamical equations in \eqref{eq:shear-nodiffusion}. Consequently, the augmented system \eqref{eq:propsystem} used for our proofs admits a wider class of solutions than the physical DNMR theory. As such, it allows for initial data where the constraints \eqref{Constraints} are \textit{not} satisfied. Therefore, it is not guaranteed \textit{a priori} that solutions of this augmented system will preserve the constraints, even if they are imposed on the initial data. This fact has not been properly appreciated before in the literature, as it can also matter to numerical simulations \cite{Shen:2014vra,Most:2021rhr}. However, as rigorously proven for the first time in Section \ref{subsec:propagate}, these constraints are dynamically preserved. Consequently, the causality and well-posedness results derived from the analysis of the augmented system are guaranteed to hold for the physical DNMR theory with no further assumptions, which is helpful for finding numerical simulations of such equations.

\subsection{The Characteristic Determinant}\label{subsec:characteristics}

For all scalars, we assume the existence of an invertible, sufficiently smooth equation of state. For example,
\be
P \equiv P(\varepsilon,n).
\ee
As such, we use the notation $P_\varepsilon = (\partial P/\partial \varepsilon)_n$ and likewise for $P_n$. However, as remarked above, we shall also allow the transport coefficients to depend not just on $\varepsilon$ and $n$, but also on the viscous fluxes $\Pi$ and $\pi^{\mu\nu}$ such that, for instance,
\be
\frac{\eta}{\tau_\pi} \equiv \frac{\eta}{\tau_\pi}(\varepsilon,n,\Pi,\pi_{\alpha\beta}\pi^{\alpha\beta}),
\ee
and similarly for all others in Eqs.~\eqref{eq:propsystem}. 

We write equations \eqref{eq:propsystem} into the matrix form
\be
\label{eq:FirstOrderQPDE}
(\mathbb{A}^\alpha\partial_\alpha + \mathbb{B})\Umb = \boldsymbol{0},
\ee
where $\Umb = (u^\nu,\varepsilon,n,\Pi,\pi^{\nu 0},\pi^{\nu 1},\pi^{\nu 2},\pi^{\nu 3})^T\in\mathbb{R}^{23}$ is the column vector of the dynamic degrees of freedom for the system and $^T$ denotes transposition. For convenience, any fixed raised indices will denote representative column vector elements, whereas lowered indices will correspond to row vectors.

Here, $\mathbb{A}^\mu \equiv \mathbb{A}^\mu(\Umb)$ are the $23\times 23$ matrices containing the \emph{coefficients} of the highest (first) order derivative terms but not any coordinate derivatives themselves nor explicit dependence on spacetime points $x$, and $\mathbb{B}$ is a $23\times 23$ matrix containing all zeroth-order terms in coordinate derivatives. In this sense, the system is a \emph{quasilinear} system of partial differential equations. The characteristic surfaces $\{\varphi(x)=0\}$ are defined relative to their normal covectors $\phi_\mu=\partial_\mu \varphi(x)$ for spacetime point $x$. Substituting this ansatz into the system's equations and requiring the existence of non-trivial solutions yields the characteristic equation, which is the foundation of the method for causality, see Chapter II, Appendix I in \cite{Courant_and_Hilbert_book_2}. In this language, $\mathbb{A}^\alpha\phi_\alpha$ is known as the principal part of the theory which has the form
\be
\label{eq:fluidprincipalpart}
\mathbb{A}^\alpha\phi_\alpha =
\begin{bmatrix}
Ez\tensor{g}{^\mu_\nu} - u^\mu w_\nu& v^\mu P_\varepsilon& v^\mu P_n& v^\mu& \tensor{g}{^\mu_\nu}\phi_0& \tensor{g}{^\mu_\nu}\phi_1& \tensor{g}{^\mu_\nu}\phi_2& \tensor{g}{^\mu_\nu}\phi_3\\
E\phi_\nu + w_\nu& z& 0 & 0 & 0_\nu& 0_\nu& 0_\nu& 0_\nu\\
n\phi_\nu& 0 & z& 0 & 0_\nu& 0_\nu& 0_\nu& 0_\nu\\
\tensor{\mathcal{C}}{_\Pi_\nu}& 0& 0& \tau_\Pi z& 0_\nu& 0_\nu& 0_\nu& 0_\nu\\
\tensor{\mathcal{C}}{_\pi^{0\mu}_\nu}& 0^{\mu}& 0^{\mu}& 0^{\mu}& \tau_\pi z\tensor{g}{^\mu_\nu}& \tensor{0}{^\mu_\nu}& \tensor{0}{^\mu_\nu}& \tensor{0}{^\mu_\nu}\\
\tensor{\mathcal{C}}{_\pi^{1\mu}_\nu}& 0^{\mu}& 0^{\mu}& 0^{\mu}& \tensor{0}{^\mu_\nu}& \tau_\pi z\tensor{g}{^\mu_\nu}& \tensor{0}{^\mu_\nu}& \tensor{0}{^\mu_\nu}\\
\tensor{\mathcal{C}}{_\pi^{2\mu}_\nu}& 0^{\mu}& 0^{\mu}& 0^{\mu}& \tensor{0}{^\mu_\nu}& \tensor{0}{^\mu_\nu}& \tau_\pi z\tensor{g}{^\mu_\nu}& \tensor{0}{^\mu_\nu}\\
\tensor{\mathcal{C}}{_\pi^{3\mu}_\nu}& 0^{\mu}& 0^{\mu}& 0^{\mu}& \tensor{0}{^\mu_\nu}& \tensor{0}{^\mu_\nu}& \tensor{0}{^\mu_\nu}& \tau_\pi z\tensor{g}{^\mu_\nu}
\end{bmatrix}
\ee
where we have defined the tensors
\bml
\bea
\tensor{\mathcal{C}}{_\Pi_\nu} &=& (\zeta + \delta_{\Pi\Pi}\Pi)\phi_\nu + \lambda_{\Pi\pi}w_\nu,\\
\tensor{\mathcal{C}}{_\pi^{\beta \mu}_\nu} &=& \left(\eta + \frac{1}{2}\lambda_{\pi\Pi}\Pi\right)\left[v^\mu \tensor{g}{^\beta_\nu}+ v^\beta \tensor{g}{^\mu_\nu}-\frac{2\Delta^{\mu\beta}v_\nu}{3}\right]- \tau_\pi x\left(\tensor{\pi}{^\mu_\nu}u^\beta + u^\mu\tensor{\pi}{_\nu^\beta}\right) \notag\\
&&\quad + \left(\delta_{\pi\pi}-\frac{\tau_{\pi\pi}}{3}\right)\pi^{\mu\beta}\phi_\nu + \frac{\tau_{\pi\pi}}{2}\left(v^{(\beta}\tensor{\pi}{^{\mu)}_\nu} + w^{(\mu}\tensor{g}{^{\beta)}_\nu}\right)-\frac{\tau_{\pi\pi}}{3}\Delta^{\mu\beta}w_\nu,
\eea
\eml
along with the following projections along the characteristic vector
\be
z = u^\alpha\phi_\alpha,\quad v^\mu = \Delta^{\mu\alpha}\phi_\alpha,\quad w^\mu = \pi^{\mu\alpha}\phi_\alpha.
\ee
Here, $A^{(\mu\nu)} = (A^{\mu\nu}+A^{\nu\mu})/2$ denotes the symmetric portion of $A$. The characteristic determinant can be reduced to that of a $4\times 4$ matrix by using elementary row reductions:
\be
\det(\mathbb{A}^\alpha\phi_\alpha) = \tau_\Pi\tau_\pi^{16} z^{19}\det\tensor{\mathcal{X}}{^\mu_\nu},
\ee
where the $4\times 4$ matrix can be written as
\be
\tensor{\mathcal{X}}{^\mu_\nu} \equiv Ez\tensor{g}{^\mu_\nu} - u^\mu w_\nu - \frac{P_\varepsilon v^\mu}{z}\left(E\phi_\nu + w_\nu\right) - \frac{nP_n}{z}v^\mu\phi_\nu - \frac{v^\mu}{\tau_\Pi z}\mathcal{C}_{\Pi\nu} - \frac{\phi_\alpha \tensor{\mathcal{C}}{_\pi^{\alpha\mu}_\nu}}{\tau_\pi z}.
\ee
Given that the shear stress tensor $\pi^{\mu\nu}$ is symmetric, it possesses a set of linearly independent, orthonormal eigenvectors that form a basis in which the tensor is diagonal. Note that $\tensor{\pi}{^\mu_\alpha}u^\alpha = 0^\mu$, so $u^\mu$ is an eigenvector with eigenvalue $\Lambda_0 = 0$. Let $\{e_0^\mu\equiv u^\mu\}\cup\{e_a^\mu\}_{a = 1}^3$ be the set of orthonormal eigenvectors such that $\tensor{\pi}{^\mu_\alpha}e_A^\alpha = \Lambda_Ae_A^\mu$ for $A = 0,1,2,3$. We further note that since $\pi^{\mu\nu}$ is traceless, we have $\Lambda_1 + \Lambda_2 + \Lambda_3 = 0$. This orthonormal set of vectors is complete in $\mathbb{R}^4$, such that we may define a set of dual vectors $\{e_\mu^A\}_{A = 0}^3$ with
\be
e_\mu^A=\sum\limits_{B = 0}^3\eta^{AB}(e_B)_\mu,
\ee
where $\eta^{AB} = \textrm{diag}(-1,+1,+1,+1)$. Assume that roman indices are not summed over unless otherwise stated. The determinant of $\tensor{\mathcal{X}}{^\mu_\nu}$ takes a considerably simpler form after transforming into this new basis prescribed by the transformation matrix $(e_B^\mu)$ and its inverse $(e_\nu^A)$. Due to the cylic properties of the determinant, one can recognize that $\det(\tensor{\mathcal{X}}{^\mu_\nu}) = \det(e_\mu^A\tensor{\mathcal{X}}{^\mu_\nu}e_B^\nu)$ holds. Thus, the determinant may be expressed in the factorized form
\bml
\label{eq:nodiffusiondet}
\bea
\det(\mathbb{A}^\alpha\phi_\alpha) &=& E\left[\prod_{k = 0}^3(E + \Lambda_k)\right]v^{23}\tau_\Pi\tau_\pi^{16}\hat{z}^{15}\left[\hat{z}^2 -\frac{\frac{1}{2} (2\eta + \lambda_{\pi\Pi}\Pi)+\frac{1}{4} \tau_{\pi\pi}\Lambda_{\hat{v}^2}}{\tau_\pi E}\right]\mathcal{P}_3(\hat{z}^2,\chi^2,\kappa^2),\\
\mathcal{P}_3(\hat{z}^2,\chi^2,\kappa^2) &=& \hat{z}^6 - \mathcal{A}_2(\chi^2,\kappa^2)\hat{z}^4 + \mathcal{A}_1(\chi^2,\kappa^2)\hat{z}^2 -\mathcal{A}_0(\chi^2,\kappa^2),\\
\Lambda_{\hat{v}^2} &\equiv& \sum_{k = 1}^3 \Lambda_k \hat{v}_k^2(\chi^2,\kappa^2),
\eea
\eml
where we have defined $v_A \equiv v_\alpha e_A^\alpha$, noting that $v_0 = 0$ by construction. Furthermore, we write the ``normalized" characteristics as $\hat{z}^2 \equiv z^2/v^2$ and $\hat{v}_A^2 \equiv v_A^2/v^2$. The coefficients of the cubic polynomial $\mathcal{P}_3$, given by $\mathcal{A}_0(\chi^2,\kappa^2)$, $\mathcal{A}_1(\chi^2,\kappa^2)$, $\mathcal{A}_2(\chi^2,\kappa^2)$ are highly complicated nonlinear functions of the dynamic variables in $\Umb$ and $\hat{v}_A^2$. Their exact form is suppressed here for brevity, but the reader may view them in the attached link to our Mathematica notebook \cite{CordeiroGit2026}. Note that the basis of eigenvectors is orthonormal, so there exists angles $\theta_1,\theta_2$ such that $\kappa\equiv\cos\theta_1$ and $\chi\equiv\cos\theta_2$ such that $\theta_1\in[0,\pi]$ and $\theta_2\in[0,2\pi)$ where
\be
\label{eq:normalizedcharacteristics}
\hat{v}_1^2(\chi^2,\kappa^2) = \chi^2(1 - \kappa^2),\quad \hat{v}_2^2(\chi^2,\kappa^2) = (1 - \chi^2)(1 - \kappa^2),\quad \hat{v}_3^2(\kappa^2) = \kappa^2.
\ee
The following quantity is useful to define in terms of the angles:
\be
\Lambda_{\hat{v}^2} \equiv \sum_{k = 1}^3 \Lambda_k \hat{v}_k^2.
\ee
We remark that causality and strong hyperbolicity are governed by the roots of the characteristic determinant, which are guaranteed to have solutions in radicals since they are at most order $3$ in $\hat{z}^2$.

\section{Causality}\label{sec:causality}

Causality is the fundamental notion that information cannot travel at superluminal speeds. In this section, we present a formal definition of causality and connect it to the context of systems of partial differential equations and characteristics. In Section~\ref{subsec:DEC}, we clarify that this very general definition is agnostic to physical assumptions made on the theory, such as energy conditions. 

\subsection{The Dominant Energy Condition does not imply Causality}\label{subsec:DEC}

Energy conditions are constraints imposed by physically motivated expectations about the regime of validity of relativistic theories of matter (p. 219) \cite{WaldBookGR1984}. In this section, we discuss whether or not such physical assumptions imply fundamental properties such as causality. For example, the weak energy condition (WEC) is formulated on the expectation that the energy density in the reference frame of a timelike observer with four-velocity $\xi^\mu$ should be nonnegative. That is, $T^{\alpha\beta}\xi_\alpha\xi_\beta \geq 0$ for all such $\xi^\mu$ with $\xi_\alpha\xi^\alpha < 0$. Here, we consider the so-called dominant energy condition (DEC), which assumes the WEC, along with the expectation that the flow of energy-momentum in the frame of a timelike observer $t^\mu\equiv -T^{\mu\alpha}\xi_\alpha$ should be non-spacelike. That is, energy-momentum flow in the frame of a timelike observer should not exceed that of a null trajectory. Due to the obvious link to causality, one might ask whether or not the DEC implies nonlinear causality or vice versa. Here, we clarify that the DEC is a motivated assumption on the matter sector of a theory, but fails to imply causality. Furthermore, causality does not imply the DEC. Recall that for an ideal fluid, the singular necessary and sufficient causality condition is \cite{ChoquetBruhatGRBook}
\be
0\leq c_s^2\leq 1,
\ee
where $c_s^2 = P_\varepsilon + nP_n/(\varepsilon + P)$ is the hydrodynamic speed of sound at nonzero baryon density \cite{Rezzolla_Zanotti_book,AnileBook}. We will assume zero baryon number here, for simplicity, but the argument is valid in the general case as well. Note that the energy-momentum tensor reads
\be
T_{\textrm{eq.}}^{\mu\nu} = \varepsilon u^\mu u^\nu + P\Delta^{\mu\nu}
\ee
in this case. An equivalent (e.g. necessary \emph{and} sufficient) condition for the DEC states that the energy eigenvalue should dominate the principal pressures of the energy-momentum tensor:
\be
\varepsilon \geq |P|
\ee
Suppose that $P = \varepsilon + \varepsilon_0$ where $\varepsilon_0 > 0$. Then, $T_{\textrm{eq.}}^{\mu\nu}$ does not satisfy the DEC, but $c_s^2 = P_\varepsilon = 1$, so causality is satisfied. Now, suppose that $P = -\varepsilon/3$. Then $P_\varepsilon = -1/3 < 0$, which violates causality, whereas $|P| = \varepsilon/3 < \varepsilon$, so the DEC is satisfied. Thus, neither the DEC nor causality implies the other. This notion becomes a bit more (qualitatively) clear if we consider the more general case with bulk and nonzero baryon density previously treated in \cite{Bemfica:2019cop}. Removing the dynamics of the shear-stress tensor in Eq.~\eqref{eq:shear-nodiffusion} provides the (necessary and sufficient) nonlinear causality condition
\be
0\leq P_\varepsilon + \frac{1}{E}\left[nP_n + \frac{\zeta}{\tau_\Pi}\right] \leq 1,
\ee
where we remind the reader that $E = \varepsilon + P + \Pi$. Note that the energy-momentum tensor reads
\be
T_\Pi^{\mu\nu} = \varepsilon u^\mu u^\nu + (P + \Pi)\Delta^{\mu\nu}
\ee
in this case. The DEC then states
\be
\varepsilon \geq |P + \Pi|\quad\Leftrightarrow\quad 0\leq E\leq 2\varepsilon.
\ee
We mention that $E\geq 0$ appears like a generalization of the linear stability condition $\varepsilon + P \geq 0$ provided in \cite{Olson:1989ey}, whereas $E \leq 2\varepsilon$ stipulates that the out of equilibrium pressure $P + \Pi$ should not grow above the energy density. Unlike linearized causality and stability conditions, the DEC depends explicitly on viscous fluxes like nonlinear constraints, however, its immediate drawback is that it lacks any knowledge about the transport coefficients and partial derivatives $P_\varepsilon$ and $P_n$ that are present in both the linear and nonlinear calculations. This connection is perhaps unsurprising, as the DEC only contains information about the energy-momentum tensor itself and is agnostic about the evolution of the viscous currents. Thus, there should be no reason for the two conditions to be directly related.

A relevant follow-up question would then be whether or not there exists another condition, say $X$ (for instance, linear causality) such that $X$ and the DEC implies causality. One well-known result by Hawking and Ellis which translates to a \emph{vacuum conservation theorem} states that if $T^{\mu\nu}$ vanishes in some region and satisfies the DEC, then it also must vanish along its Cauchy development \cite{HawkingEllisBook}. Physically, this theorem states that the vacuum is preserved along non-spacelike curves, meaning that this interpretation of causality states that ``nothing can be created from nothingness."

We stress, however, that this vacuum conservation result is distinct than the standard precise form of formulating causality for system described by evolutionary PDEs, which is as follows, see \cite[Appendix B.2]{Disconzi-2024}.
Consider a globally hyperbolic spacetime\footnote{We recall that spacetimes that arise as a solution to the initial-value problem for Einstein's equations (with or without matter coupling) as well as Minkowski space are globally hyperbolic. Global hyperbolicity precludes the existence of closed timelike curves, although it must be stressed that this by itself is not enough to ensure causality, since there are many examples, such as Eckart's theory, that are defined in Minkowski space but violate causality.} and let $\psi$ be a field defined on spacetime. We say that $\psi$, propagates causally, if for any spacetime point $x$, $\psi(x)$ depends only on the causal past of $x$, noting that the causal past is well-defined in view of the global hyperbolicity assumption. In other words, any physical (e.g. causal) quantity can only be influenced by its causally connected past, and thus, it is completely oblivious to information from other points in spacetime that lie outside the lightcone. We highlight our definition of causality using Fig.~\ref{fig:causality}, which provides a rough sketch of the Cauchy development of a causal function. The key for this definition is to make precise what it means to ``depend only on the causal past of $x$," which is formalized in Definition \ref{def:causality}. 
\begin{figure}
    \centering
    \includegraphics[width=0.75\linewidth]{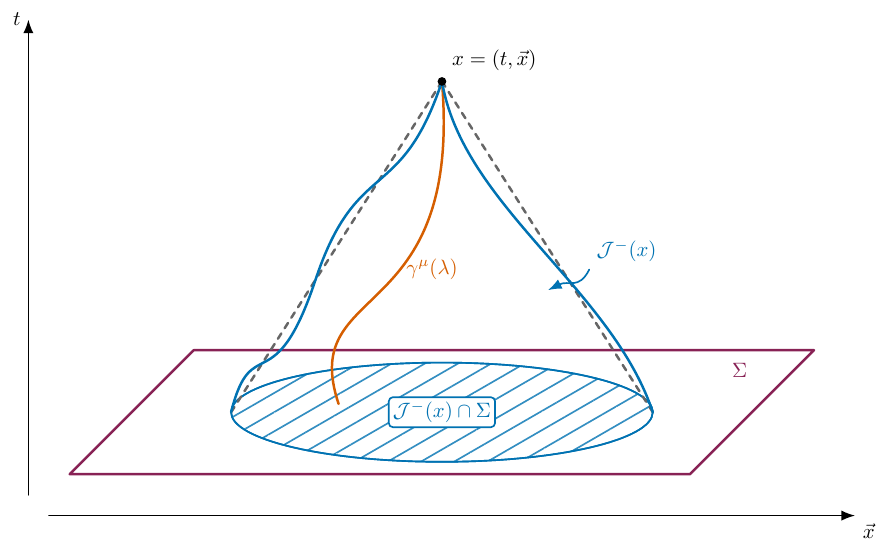}
    \caption{Sketch of the domain of influence. The Cauchy surface $\Sigma$ usually corresponds to the initial data, with $\mathcal{J}^-(x)\cap\Sigma$ being those points on $\Sigma$ that lie in the causal past of point $x$. The timelike curve $\gamma$ parameterized by some affine parameter $\lambda$ represents an instance of causal information transport between $\Sigma$ and $x$. A key facet of this definition is that any curves outside of $\mathcal{J}_-(x)$ do not influence the solution at $x$.}
    \label{fig:causality}
\end{figure}
\begin{definition}[Causality]\label{def:causality}
Let $(\mathcal{M}, g)$ be a globally hyperbolic spacetime. We are concerned with the Cauchy problem for a system of differential equations
\be
\label{E:PDE_causality_def}
\mathcal{D}\textbf{\textrm{U}} = \boldsymbol{0},
\ee
where $\mathcal{D}$ be a differential operator with unknown(s) $\textbf{\textrm{U}}$. 
Let $\textbf{\textrm{U}}_1$ be a solution to \eqref{E:PDE_causality_def} with initial data given along a Cauchy surface $\Sigma\subset\mathcal{M}$. Let $x \in \mathcal{M}$ be on the future of $\Sigma$. We say that $\textbf{\textrm{U}}_1(x)$ depends only on the causal past of $x$, denoted $\mathcal{J}^-(x)$, if the following happens. Let
$\textbf{\textrm{U}}_2$ be a solution to \eqref{E:PDE_causality_def} with initial data given along the same Cauchy surface $\Sigma$. Assume that the initial data for $\textbf{\textrm{U}}_1(x)$ and $\textbf{\textrm{U}}_2(x)$ agree on $\mathcal{J}^-(x) \cap \Sigma$. Then, $\textbf{\textrm{U}}_1(y) = \textbf{\textrm{U}}_2(y)$ for every $y \in \mathcal{J}^-(x)$. We say that \eqref{E:PDE_causality_def} is causal if solutions to \eqref{E:PDE_causality_def} depend only on the causal past of $x$ for every $x \in \mathcal{M}$ in the future of $\Sigma$.
\end{definition}

The physical interpretation of Definition \ref{def:causality} is simple. Since no causal curve can connect $\Sigma \backslash (\mathcal{J}^-(x) \cap \Sigma)$ to $x$ (the complement of $\mathcal{J}^-(x) \cap \Sigma$ inside $\Sigma$), the value of a solution at $x$ cannot depend on the data in the region  $\Sigma \backslash (\mathcal{J}^-(x) \cap \Sigma)$. Thus, two solutions whose data agree on $\mathcal{J}^-(x) \cap \Sigma$ (but might otherwise be different in $\Sigma \backslash (\mathcal{J}^-(x) \cap \Sigma)$) cannot be distinguished within $\mathcal{J}^-(x)$, which is the region composed by all causal curves that start on $\Sigma$ and reach $x$.

We observe that causality is a \emph{conditional statement:} given solutions to \eqref{E:PDE_causality_def}, it states a property of such solutions, namely, that the solution fields depend only on the causal past of $x$. It makes no claim about whether such solutions in fact exist. Thus, causality can be studied independently of the problem of existence of solutions.

In order to determine whether a PDE system is causal, the key ingredient is to show that the characteristic surfaces $\{ \varphi(x) = 0 \}$ of the PDE emanating from $x$ lie inside the lightcone with vertex at $x$ (for every $x$ in $\mathcal{M}$). This, in turn, boils down to show that the roots  of the characteristic determinant are always real and spacelike \cite[Appendix B.2]{Disconzi-2024}.
Thus, the bulk of the work in our causality analysis is the investigation of the roots of the characteristic determinant.

We may now distinguish this definition from the vacuum conservation theorem as follows. Consider the energy-momentum tensors $T_1^{\mu\nu}$ and $T_2^{\mu\nu}$ corresponding to $\widetilde{\textbf{\textrm{U}}}_1$ and $\widetilde{\textbf{\textrm{U}}}_2$, respectively, and assume that they satisfy the dominant energy condition and vanish in some $\mathcal{J}^-(x)\cap\Sigma$. The vacuum theorem then guarantees that at some later time $t$, each of these two tensors must (independently) also vanish at $x$. However, this theorem fails to guarantee that the new energy-momentum tensor $T_{1-2}^{\mu\nu}$ of the difference of solutions $\widetilde{\textbf{\textrm{U}}}_1 - \widetilde{\textbf{\textrm{U}}}_2$ also satisfies the theorem, or even the dominant energy condition. In Section~\ref{subsec:causalityconstraints}, we show that causality in the sense of Definition~\ref{def:causality} is equivalent to a set of algebraic constraints on the degrees of freedom of fluid theories of IS-type with shear and bulk viscosity. We thus reiterate that with the above discussion in mind, our causality constraints do not guarantee adherence or any relationship to the DEC or similar energy conditions themselves.

\subsection{Necessary and Sufficient Constraints}\label{subsec:causalityconstraints}
Here, we state the most general constraints for nonlinear causality and then connect them to relevant historical subcases. We reiterate that fully nonlinear necessary and sufficient constraints have never been derived for any subcase other than with bulk viscosity \cite{Bemfica:2019cop}, and thus, these cases either generalize previous constraints or present new bounds entirely.

As mentioned previously, the coefficients of the characteristic determinant are complicated nonlinear functions of the transport coefficients and dynamic degrees of freedom of the system. Consequently, while an analytical computation of the roots is possible, doing so is not very illuminating due to the sheer number and complexity of the terms. Although some level of complexity is unavoidable, instead of bounding the roots of the polynomials present in the characteristic determinant, here we instead constrain the \emph{coefficients of the polynomial} $\mathcal{A}_k$ in Eq.~\eqref{eq:nodiffusiondet} by an equivalent set of necessary and sufficient conditions. This method significantly reduces the computational complexity when calculating the constraints.

In the language of quasilinear PDEs, we state that a system $(\mathbb{A}^\alpha\partial_\alpha + \mathbb{B})\Umb = 0$ is \textit{causal} if, and only if (CI) the roots of the characteristic equation $\det(\mathbb{A}^\alpha\phi_\alpha) = 0$, given by $\phi_0 \equiv \phi_0(\phi_i)$ are real, where $\phi^\mu \equiv \nabla^\mu\varphi$, and $\{\varphi(x) = 0\}$ are the characteristic hypersurfaces and (CII) $\phi^\mu$ is non-timelike, i.e. $\phi^\al\phi_\al \geq 0$ \cite{ChoquetBruhatGRBook}. Put together, condition (CI) states that the characteristic speeds must be real and exist, and condition (CII) enforces that these speeds are causal. We state our primary result about causality below.
\begin{theorem}[Nonlinear Causality]\label{thm:causalitynodiffusion}
Let $\tau_\Pi,\tau_\pi\neq 0$,
\be
E + \Lambda_A \neq 0,\quad A = 0,1,2,3,
\ee
and $\Delta(\mathcal{P}_3;\chi^2,\kappa^2)$ be the discriminant of $\mathcal{P}_3$ defined by
\begin{align}
\label{eq:discriminant}
\Delta(\mathcal{P}_3;\chi^2,\kappa^2) \equiv& - 27\left[2\left(\frac{\mathcal{A}_2(\chi^2,\kappa^2)}{3}\right)^3 - \frac{\mathcal{A}_2(\chi^2,\kappa^2)}{3}\mathcal{A}_1(\chi^2,\kappa^2) +\mathcal{A}_0(\chi^2,\kappa^2)\right]^2\notag\\
&-4\left[\mathcal{A}_1(\chi^2,\kappa^2) - 3\left(\frac{\mathcal{A}_2(\chi^2,\kappa^2)}{3}\right)^2\right]^3.
\end{align}
Consider the inequalities:
\bml
\bea
\label{eq:C1}
0&\leq &\Delta(\mathcal{P}_3;\chi^2,\kappa^2),\\
\label{eq:C2}
0 &\leq & \frac{\mathcal{A}_2(\chi^2,\kappa^2)}{3} \leq 1,\quad 0\leq \mathcal{A}_1(\chi^2,\kappa^2),\quad 0\leq \mathcal{A}_0(\chi^2,\kappa^2)\\
\label{eq:C3}
0 &\leq & 1-\mathcal{A}_2(\chi^2,\kappa^2) + \mathcal{A}_1(\chi^2,\kappa^2) - \mathcal{A}_0(\chi^2,\kappa^2) ,\\
\label{eq:C4}
0 &\leq & 1-\frac{2}{3}\mathcal{A}_2(\chi^2,\kappa^2) + \frac{1}{3}\mathcal{A}_1(\chi^2,\kappa^2) ,\\
\label{eq:C5}
0 &\leq & \frac{\frac{1}{2} (2\eta + \lambda_{\pi\Pi}\Pi)+\frac{1}{4} \tau_{\pi\pi}\Lambda_{\hat{v}^2}}{\tau_\pi E} \leq 1.
\eea
\eml
Then, the system \eqref{eq:FirstOrderQPDE}, subject to the constraints \eqref{Constraints}, is causal if, and only if, Eq.~\eqref{eq:C1}--\eqref{eq:C5} hold for all $\chi^2,\kappa^2 \in [0,1]$.
\begin{proof}
The proof of the theorem can be divided into two major steps. The first step constrains the nontrivial characteristic wave speeds between $0$ and $1$. The second step uses fundamental results of polynomials related to Descartes sign rule which provide equivalent but significantly condensed necessary and sufficient conditions for causality than the bounds on the roots in step one. The interested reader may refer to Appendix~\ref{App:causality} for the details of the proof.
\end{proof}
\end{theorem}
We stress that these constraints are the most general of their kind to date, and can easily reproduce previous relevant results as discussed below.
\begin{enumerate}

\item\textbf{Only Bulk Viscosity:} If one removes all transport coefficients corresponding to shear-stress (keeping only $\tau_\Pi,\zeta,\delta_{\Pi\Pi}$), Theorem~\ref{thm:causalitynodiffusion} not only reproduces the single (necessary and sufficient) causality constraint in \cite{Bemfica:2019cop}, it also provides a second order correction to $\zeta/\tau_\Pi$ in the form:
\be
\label{eq:bulkcausality}
0 \leq P_\varepsilon + \frac{1}{E}\left(nP_n + \frac{\zeta + \delta_{\Pi\Pi}\Pi}{\tau_\Pi}\right) \leq 1.
\ee
Again, $E = \varepsilon + P + \Pi$ here. Setting $\delta_{\Pi\Pi} = 0$ exactly reproduces their constraint.

\item\textbf{No baryon number:} Previously, the most general nonlinear causality constraints for the exact same IS-like class of equations in Eq.~\eqref{eq:supplementalIS} was done with zero baryon number $n = 0$ --- a regime relevant in the context of ultrarelativistic heavy-ion collisions at the LHC. These constraints provided two sets of conditions \cite{Bemfica:2020xym}. The first set were necessary constraints in the sense that violating them also implied causality was violated. The second set were sufficient conditions, that if satisfied, guaranteed causality. However, even if these sufficient conditions were violated, it was still possible for the particular solution to be causal, since a non-trivial amount of these data points also satisfied the necessary conditions in common heavy-ion simulations \cite{Plumberg:2021bme,ExTrEMe:2023nhy}. 

Our constraints resolve this ambiguity by combining all causality conditions into a single set of necessary and sufficient constraints. Thus, if any of these conditions are violated, it is guaranteed that the result is acausal, and if all are satisfied, they are causal. We further remark that our results directly generalize the zero-baryon limit case (albeit with minor notational discrepancies) in \cite{Bemfica:2020xym} by setting $nP_n\rightarrow 0$ in the statement of Theorem~\ref{thm:causalitynodiffusion}. This simple limit provides a set of necessary and sufficient conditions for the zero baryon case and does not change the structure of the characteristic determinant (save for an extra factor of $x$). In fact, the role of the baryon number is simply a scalar correction to the bulk viscosity. Due to the tensor structure of the equations of motion, baryon number appears as $nP_n + \zeta/\tau_\Pi$ and never elsewhere. The necessary constraints may be generated by certain choices of angles satisfying $\hat{v}_1^2 + \hat{v}_2^2 + \hat{v}_3^2 = 1$. It is particularly straightforward to see the connection between our work and \cite{Bemfica:2020xym} through the constraint in Eq.~\eqref{eq:C5}, which directly recovers some of the necessary constraints.
\end{enumerate}
We emphasize that these constraints are relatively resource-intensive in the numerical sense as they are parameterized by the bounded parameters $\chi^2,\kappa^2\in[0,1]$ relative to the shear stress basis. Furthermore, the most complicated bound by far is the discriminant condition, which is nonlinear in the coefficients themselves. It is therefore important to consider further special cases (when applicable) that can eliminate $\chi^2$ and $\kappa^2$ and/or remove or provide a simpler necessary and sufficient condition that guarantees the reality of the roots. In the following sections, we consider two particular cases of interest that are now accessible with our general conditions - one being a generalized Maxwell-Cattaneo set of equations, in which $\delta_{\pi\pi}=\tau_{\pi\pi} = 0$, yet all other parameters are held to be general, and also the conformal fluid case.

\subsection{Conformal Fluid}\label{subsec:conformal}

Here, we consider the special case of a conformal fluid \cite{Baier:2007ix} to highlight how our nonlinear causality bounds can be used to generate nonlinear causality constraints. The equilibrium pressure satisfies the equation of state $P = \varepsilon/3$ and it is assumed that the only nontrivial transport coefficients are $\eta/\tau_\pi$ and $\delta_{\pi\pi} = \frac{4}{3}\tau_\pi$. Furthermore, we remove the bulk viscosity and associated transport coefficients at the level of the principal part. The shear-stress evolution equation from Eq.~\eqref{eq:shear-nodiffusion} evolves as
\be
\label{eq:shear-nodiffusion-CONF}
\tensor{\Delta}{^{\mu\nu}_{\beta\gamma}}u^\alpha\nabla_\alpha\pi^{\beta\gamma} + \frac{\pi^{\mu\nu}}{\tau_\pi} = -2\frac{\eta}{\tau_\pi}\sigma^{\mu\nu} - \frac{4}{3} \pi^{\mu\nu}\nabla_\alpha u^\alpha.
\ee
The determinant of the principal part in this case is found to be
\be
\label{eq:determinant-CONF}
\det(\mathbb{A}^\alpha\phi_\alpha) = \tau_\pi^{17}v^{17}\hat{z}^{15}\left(E\hat{z}^2 - \frac{\eta}{\tau_\pi}\right)\mathcal{P}_3(\hat{z}^2,\hat{v}_1^2,\hat{v}_2^2,\hat{v}_3^2).
\ee
We also remark that $sT = E = \varepsilon + P = 4\varepsilon/3$ in this case, where $s$ is the entropy density and $T$ the temperature. The cubic portion may be expressed analogously to Eq.~\eqref{eq:supplementalIS}, except with the coefficients as
\bml
\bea
\mathcal{A}_0 &=& \frac{4}{3}\left(\frac{\eta}{\tau_\pi}\right)^2\frac{\dfrac{5}{4}(E + \Lambda_{\hat{v}^2}) - \left(E - \dfrac{\eta}{\tau_\pi}\right)}{(E + \Lambda_1)(E + \Lambda_2)(E + \Lambda_3)} - \frac{4}{9}\frac{\eta}{\tau_\pi}\sum\limits_{\substack{j,k = 1\\ j< k}}^{3}\frac{\hat{v}_j^2\hat{v}_k^2(\Lambda_j - \Lambda_k)^2}{(E+\Lambda_1)(E + \Lambda_2)(E + \Lambda_3)},\\
\mathcal{A}_1 &=&\frac{4}{9}- \frac{4}{9}(E + \Lambda_{\hat{v}^2})\sum_{k = 1}^3\frac{\hat{v}_k^2}{E + \Lambda_k} + \frac{5}{3}\frac{\eta}{\tau_\pi}\sum_{k = 1}^3\frac{1 - \hat{v}_k^2}{E + \Lambda_k}\notag\\
&&\quad + \frac{4}{3}\frac{\eta}{\tau_\pi}\frac{\left(E - \dfrac{1}{4}\dfrac{\eta}{\tau_\pi}\right)(E + \Lambda_{\hat{v}^2})-3E\left(E -\dfrac{\eta}{\tau_\pi}\right)}{(E + \Lambda_1)(E + \Lambda_2)(E + \Lambda_3)},\\
\frac{\mathcal{A}_2}{3} &=& \frac{5}{9} + \frac{1}{3}\frac{\eta}{\tau_\pi}\sum_{k = 1}^3\frac{1+\hat{v}_k^2/3}{E + \Lambda_k} - \frac{4}{9}E\sum_{k = 1}^3\frac{\hat{v}_k^2}{E + \Lambda_k}.
\eea
\eml
The determinant in Eq.~\eqref{eq:determinant-CONF} has the exact same structure as the general case in Eq.~\eqref{eq:nodiffusiondet}, meaning that nonlinear causality is immediately granted by Theorem~\ref{thm:causalitynodiffusion} - one simply replaces the constraints in Eq.~\eqref{eq:C1}--\eqref{eq:C4} with the conformal coefficients above. In particular, since $\lambda_{\pi\Pi} = \tau_{\pi\pi} = 0$, the constraint in Eq.~\eqref{eq:C5} reduces to
\be
\label{eq:ConfC5}
0 \leq \frac{\eta}{\tau_\pi E}\leq 1,
\ee
where $\frac{\eta}{\tau_\pi E} = \frac{\eta}{s}\frac{1}{\tau_\pi T}$. Since Theorem~\ref{thm:causalitynodiffusion} provides a set of necessary and sufficient conditions for nonlinear causality, it follows that if a single condition (e.g. for any given angle $\chi^2,\kappa^2\in[0,1]$) is violated, then so is causality. This property implies that each condition - even those for a particular angle - are all necessary. For instance, the condition $\mathcal{A}_0\geq 0$ may be rewritten as 
\be
\frac{1}{3}\sum\limits_{\substack{j,k = 1\\ j< k}}^{3}\hat{v}_j^2\hat{v}_k^2(\Lambda_j - \Lambda_k)^2\leq \frac{\eta}{\tau_\pi}\left[\dfrac{1}{4}E + \dfrac{5}{4}\Lambda_{\hat{v}^2} + \dfrac{\eta}{\tau_\pi}\right],\\
\ee
Choosing $\hat{v}_1^2 = \hat{v}_3^2 = 1/2$ and $\hat{v}_2^2 = 0$ then provides a constraint on the largest difference between shear-stress eigenvalues:
\be
|\Lambda_3 - \Lambda_1|\leq \left[3\frac{\eta}{\tau_\pi}\left(E - \frac{5}{2}\Lambda_2 + 4\dfrac{\eta}{\tau_\pi}\right)\right]^{1/2},\\
\ee
Fixing the constraint along a particular principal axis, e.g. $\hat{v}_a^2 = 1$ provides us with the simple bound $\forall a = 1,2,3$:
\be
E + 5\Lambda_a + 4\frac{\eta}{\tau_\pi} \geq 0.
\ee
We also note that the set of necessary constraints for any permutation $(a,b,c)$ of $\{1,2,3\}$ may be generated from Eq.~\eqref{eq:C3} by fixing $\hat{v}_d^2 = 1$ for each $d = 1,2,3$:
\be
\left(E - \Lambda_a - 2\frac{\eta}{\tau_\pi}\right)\left(E + \Lambda_b - \frac{\eta}{\tau_\pi}\right)\left(E + \Lambda_c - \frac{\eta}{\tau_\pi}\right)\geq 0
\ee
Finally, we remark that although each of these conditions are necessary, they are not sufficient for ensuring causality alone. Therefore, all the constraints in Theorem~\ref{thm:causalitynodiffusion} must be satisfied in order to guarantee causality for a given spacetime point. In Section~\ref{sec:MC}, we will show a particular, yet very general class of theories in which Theorem~\ref{thm:causalitynodiffusion} reduces drastically in algebraic complexity. 

\subsection{Linear Constraints}\label{subsec:linear}

Theories of hydrodynamics, including those of the Israel-Stewart class are systems of coupled, highly nonlinear PDEs whose solutions are often difficult or analytically impossible\footnote{Some notable exceptions are, for example, \cite{Marrochio:2013wla,Hatta:2014gqa,Hatta:2014gga}.} to retrieve.  One common method of gaining a rough understanding of the region of validity of these theories therefore has been to linearize the equations of motion, in the sense that one expands the degrees of freedom in Eq.~\eqref{eq:FirstOrderQPDE} up to linear fluctuations from their value at the (assumed) global equilibrium state: $U_a = U_{\textrm{eq.},a} + \delta U_a + \mathcal{O}(\delta U_a^2)$ for $a = 1,2,\dots,23$. Higher order fluctuations are then excluded. Below, we show that our nonlinear calculation reduces appropriately to the linear regime of phenomenological IS-theory \emph{exactly} \cite{Hiscock_Lindblom_stability_1983,Olson:1989ey}, since all second-order terms (including those only found in DNMR theory) vanish in this regime.

Since non-ideal effects from viscosity vanish at equilibrium, $\Pi,\Lambda_A = 0$ for all $A = 0,1,2,3$, which drastically simplifies (and in this case, decouples) the characteristic determinant in Eq.~\eqref{eq:nodiffusiondet} to
\be
\frac{\det(\mathcal{A}^\alpha\phi_\alpha)}{v^{23}} = (\varepsilon + P)^4\tau_\Pi\tau_\pi^{16}\hat{z}^{15}\left(\hat{z}^2 - \frac{\eta}{\tau_\pi (\varepsilon + P)}\right)^3\left[\hat{z}^2 - P_\varepsilon - \frac{1}{\varepsilon + P}\left(nP_n +\frac{\zeta}{\tau_\Pi} +\frac{4}{3} \frac{\eta}{\tau_\pi}\right) \right] \Bigg|_{\textrm{eq.}}+ \mathcal{O}(\delta U_a)
\ee
where the subscript `eq.' is understood to mean that all terms are evaluated at their equilibrium values. One therefore retrieves the necessary and sufficient conditions for \emph{linear} causality of the system in Eq.~\eqref{eq:FirstOrderQPDE}:
\bml
\bea
0 &\leq & \frac{\eta}{\tau_\pi (\varepsilon + P)} \leq 1,\\
0 &\leq & P_\varepsilon + \frac{1}{\varepsilon + P}\left(nP_n +\frac{\zeta}{\tau_\Pi} +\frac{4}{3} \frac{\eta}{\tau_\pi}\right) \leq 1.
\eea
\eml
These constraints are well-known, see \cite{Romatschke:2009im}.
We reiterate that although these constraints are significantly simpler than those in Theorem~\ref{thm:causalitynodiffusion}, they fail to constrain the bulk and shear stress $\Pi,\Lambda_a$ directly, and therefore, may fail to capture relevant out-of-equilibrium phenomena, such as plasma instabilities when coupling the theory to electromagnetic fields \cite{Cordeiro:2023ljz}. In the conformal limit (in which $E = 4\varepsilon /3$), the singular necessary and sufficient linear causality bound is
\be
0 \leq \frac{\eta}{\tau_\pi \varepsilon} \leq \frac{4}{3},
\ee
which ends up being identical to the nonlinear case in Eq.~\eqref{eq:ConfC5} except with the out-of-equilibrium dependence of $\eta/\tau_\pi$ on $\Lambda_a$ replaced with its equilibrium values $\Lambda_a = 0$. Notice that the conditions in Eq.~\eqref{eq:C1}--\eqref{eq:C4} are not present in the linear case.

\section{Strong hyperbolicity and local well-posedness}\label{sec:hyperbolicity}

In the previous sections, we discussed nonlinear causality conditions for solutions of the system given by Eq.~\eqref{eq:conservationlaw} and \eqref{eq:supplementalIS}. Although these equations of motion may be evolved numerically, there is no prior guarantee that the numerical initial value problem provides and faithfully reproduces, within acceptable error bounds, the actual solutions of the equations motion. It then becomes imperative to have a mathematical guarantee that such nonlinear systems of PDEs even have solutions, and that such solutions are unique. Here, we show for the first time that IS-like theories with bulk and shear viscosity have unique local solutions that exist in very general function spaces. We also show that these properties are satisfied for nearly the entire region spanned by our causality constraints (save for a few endpoints) in Theorem~\ref{thm:causalitynodiffusion}. 

For a system of quasilinear PDEs, strong hyperbolicity \cite{ReulaStrongHyperbolic} is a desirable trait that can in particular be used to establish local well-posedness\footnote{When discussing existence, one must be precise about the form of such solutions. For this paper, we are concerned about solutions in Sobolev spaces. These function spaces are significantly more general than analytic and Gevrey-type functions, and allow for approximation by those in $C^\infty$ \cite{Disconzi-2024}.} and uniqueness of the Cauchy problem \cite{ChoquetBruhatGRBook}.

\begin{definition}
\label{D:Strong_hyperbolicity}
   We say a quasilinear system $(\mathbb{A}^\alpha\partial_\alpha + \mathbb{B})\Umb = \boldsymbol{0}$ is \emph{strongly hyperbolic} if, given some time-like vector $\xi^\mu$
\begin{enumerate}[label=(H\Roman*),nosep]
 \item $\det(\mathbb A^\alpha \xi_\alpha) \not= 0$, and
 \item for any space-like vector $\zeta^\mu$, the solutions of the eigenvalue equation $(\Lambda \xi_\alpha + \zeta_\alpha)\mathbb A^\alpha \mathbf r = \mathbf 0$ exist for $\Lambda\in\mathbb{R}$ and the right eigenvectors $\mathbf{r}$ span a complete basis.
\end{enumerate} 
\end{definition}

\begin{theorem}[Nonlinear Strong Hyperbolicity]\label{thm:nodiffusionstronghyperbolicity}
Let $\tau_\Pi,\tau_\pi\neq 0$,
\be
E + \Lambda_A \neq 0,\quad A = 0,1,2,3,
\ee
and $\Delta(\mathcal{P}_3;\chi^2,\kappa^2)$ be the discriminant of $\mathcal{P}_3$ defined by Eq.~\eqref{eq:discriminant}. If the following conditions hold simultaneously $\forall\chi^2,\kappa^2\in[0,1]$:
\bml
\bea
\label{eq:H1}
0& < &\Delta(\mathcal{P}_3;\chi^2,\kappa^2),\\
\label{eq:H2}
0 &<& \frac{\mathcal{A}_2(\chi^2,\kappa^2)}{3} \leq 1,\quad 0< \mathcal{A}_1(\chi^2,\kappa^2),\quad 0 < \mathcal{A}_0(\chi^2,\kappa^2)\\
\label{eq:H3}
0 &\leq & 1-\mathcal{A}_2(\chi^2,\kappa^2) + \mathcal{A}_1(\chi^2,\kappa^2) - \mathcal{A}_0(\chi^2,\kappa^2),\\
\label{eq:H4}
0 &\leq & 1-\frac{2}{3}\mathcal{A}_2(\chi^2,\kappa^2) + \frac{1}{3}\mathcal{A}_1(\chi^2,\kappa^2) ,\\
\label{eq:H5}
0 &< & \frac{\frac{1}{2} (2\eta + \lambda_{\pi\Pi}\Pi)+\frac{1}{4} \tau_{\pi\pi}\Lambda_{\hat{v}^2}}{\tau_\pi E} \leq 1,
\eea
\eml
then the system \eqref{eq:FirstOrderQPDE}, subject to the constraints \eqref{Constraints}, is strongly hyperbolic, with or without a baryon current.
\begin{proof}
The proof of strong hyperbolicity follows two steps as in the causality proof, with the second step showing the equivalence of the bounds of the coefficients to bounds on the roots. The first step relies on proving that the causality bounds in Eq.~\eqref{eq:C1}--\eqref{eq:C5} are sufficient for the given definition of strong hyperbolicity, while excluding the endpoints that allow the root $x = 0$ to gain higher multiplicity. A large portion of the proof is directed to proving that the eigenvectors of the eigenvalue problem exist and are linearly-independent, which takes use of fundamental linear algebra results. The interested reader can find the detailed analysis in Appendix~\ref{App:hyperbolicity}.
\end{proof}
\end{theorem}

It is important to highlight that for quasilinear systems as the one studied here, \emph{strong hyperbolicity depends on the solution variables,} since the matrices $\mathbb{A}^\alpha$ depend on $\Umb$, i.e., $\mathbb{A}^\alpha= \mathbb{A}^\alpha(\Umb)$. Thus, some attention to the argument's logic is needed when invoking strong hyperbolicity to establish existence of solutions since we need $\Umb$ to check whether $\mathbb{A}^\alpha(\Umb)$ is diagonalizable in the sense of Definition \ref{D:Strong_hyperbolicity}, but the existence of $\Umb$ is what we are trying to establish in the first place. The common practice is to establish strong hyperbolicity using the Cauchy data $\left.\Umb \right|_{t=0}$, since in this case the matrices $\left.\mathbb{A}^\alpha(\Umb)\right|_{t=0}$ are known. In such cases, under very general assumptions on the form of $\mathbb{A}^\alpha$ and the data, standard arguments can be employed to show that solutions exist (see, e.g., \cite{Shao:2023psr} for a short self-contained proof). 

In our case, however, there is an additional complication, namely, our proof of strong hyperbolicity explicitly uses the constraints \eqref{Constraints}. Even if such constraints are satisfied at $t=0$ (as they must for data for data for Israel-Stewart-like systems), there is in principle no guarantee that they will continue to hold for $t>0$ when solutions to \eqref{eq:FirstOrderQPDE} are known to exist. The standard arguments for existence that we referred above, however, apply to \emph{unconstrained systems,} and therefore cannot be immediately implemented in our setting. In order to understand where the loophole is, we need to briefly recall the structure of the argument that uses strong hyperbolicity to produce solutions, which goes as follows. Starting with $\Umb_0 \equiv \left.\Umb\right|_{t=0}$ and assuming that $\mathbb{A}^\alpha(\Umb_0)$ satisfies the conditions of Definition \ref{D:Strong_hyperbolicity}, we consider the \emph{linear} system
\begin{align}
    \begin{split}
        \label{E:Linear_system_SH}
\mathbb{A}^\alpha(\Umb_0)\partial_\alpha \Umb_1 + \mathbb{B}(\Umb_0)  &= 0,
\\
\left. \Umb_1 \right|_{t=0} &= \Umb_0
    \end{split}
\end{align}
for the unknown $\Umb_1$. By construction this is a linear strongly hyperbolic system and then techniques for such linear systems yield a solution $\Umb_1$. Among the aforementioned ``general assumptions on the form of $\mathbb{A}^\alpha$," the key one here is an \emph{openness} assumption: it is assumed that if $\mathbb{A}^\alpha(\Umb_0)$ satisfies Definition \ref{D:Strong_hyperbolicity}, then $\mathbb{A}^\alpha(\mathbf{V})$ also satisfies Definition \ref{D:Strong_hyperbolicity} for all $\mathbf{V}$ sufficiently close\footnote{Where closeness is measured in the natural topology of the problem, which in our case is in norms related to the Sobolev norm. Similarly for other topological statements here, like openness and convergence.} to $\Umb_0$. In other words, it is assumed that Definition \ref{D:Strong_hyperbolicity} holds in an open set $\mathcal{U}$ containing $\Umb_0$. Since $\left. \Umb_1 \right|_{t=0} = \Umb_0$, continuity of the solution $\Umb_1$ guarantees that $\Umb_1 \in \mathcal{U}$, at least for small time, so that $\mathbb{A}^\alpha(\Umb_1)$ also satisfies Definition \ref{D:Strong_hyperbolicity}. Proceeding in this way, we inductively construct a sequence of linear problems
\begin{align*}
    \mathbb{A}^\alpha(\Umb_n) \partial_\alpha \Umb_{n+1} + \mathbb{B}(\Umb_n) &=0,
    \\
    \left. \Umb_{n+1} \right|_{t=0} & = \Umb_0,
\end{align*}
for the unknowns $\Umb_{n+1}$, where at each step $\Umb_n$ is known and $\mathbb{A}^\alpha(\Umb_n)$ satisfies Definition \ref{D:Strong_hyperbolicity}, so that we obtain a solution $\Umb_{n+1}$ for each $n$. We can  prove that the sequence $\{\Umb_n\}_{n=0}^\infty$ converges to a limit $\Umb_\infty$. Such limit then solves the desired quasilinear problem since
\begin{align*}
    0 = \lim_{n\rightarrow\infty} \big( \mathbb{A}^\alpha(\Umb_n) \partial_\alpha \Umb_{n+1} + \mathbb{B}(\Umb_n)  \big)
    = \mathbb{A}^\alpha(\Umb_\infty) \partial_\alpha \Umb_{\infty} + \mathbb{B}(\Umb_\infty)
\end{align*}
and $ \Umb_0 =   \lim_{n\rightarrow\infty} \left. \Umb_{n+1} \right|_{t=0} = \left. \Umb_\infty \right|_{t=0}$.

Let us now see why the above argument in general fails for constrained systems. Assuming that the data satisfies the constraints, we can still construct the linear strongly hyperbolic system \eqref{E:Linear_system_SH} and produce a solution $\Umb_1$. However, if the constraints are not open (as in general constraints are not), we cannot, by the above continuity and openness argument, ensure that the subsequent iterate $\mathbb{A}^\alpha(\Umb_1)$ will satisfy Definition \ref{D:Strong_hyperbolicity}, so that the inductive argument breaks down. In our case, our proof of strong hyperbolicity uses \eqref{Constraints}. Even if the data satisfies \eqref{Constraints},
so that $\mathbb{A}^\alpha(\Umb_0)$ then satisfies Definition \ref{D:Strong_hyperbolicity}, the constraints \eqref{Constraints} need not to hold for the solution $\Umb_1$ because continuity and openness are not enough to ensure that an exact equality holds. To see this concretely in our case, consider for example the constraint $u_\alpha \pi^{\alpha \mu} = 0$. For the data, we have
$
    \left. u_\alpha  \pi^{\alpha\mu}\right|_{t=0} = 0
$,
so that for $(u_0)_\alpha$ and $(\pi_0)^{\alpha\mu}$ in $\Umb_0$ we have  $(u_0)_\alpha (\pi_0)^{\alpha\mu}=0$ and thus $\mathbb{A}^{\alpha}(\Umb_0)$ satisfies Definition \ref{D:Strong_hyperbolicity}. For $(u_1)_\alpha$ and $(\pi_1)^{\alpha\mu}$ in $\Umb_1$, however, it will not in general be true that  we have  $(u_1)_\alpha (\pi_1)^{\alpha\mu}=0$ because this does not follow from $(u_1)_\alpha$ and $(\pi_1)^{\alpha\mu}$ being close to $(u_0)_\alpha$ and $(\pi_0)^{\alpha\mu}$, i.e., for any small $\delta > 0$
\begin{align*}
    (u_0)_\alpha (\pi_0)^{\alpha\mu}=0
    \text{ and }
    | (u_0)_\alpha - (u_1)_\alpha | < \delta 
    \text{ and }
    |(\pi_0)^{\alpha\mu} - (\pi_1)^{\alpha\mu} | < \delta
    \mathrel{\rlap{\hskip .5em/}}\Longrightarrow
    (u_1)_\alpha (\pi_1)^{\alpha\mu}=0,
\end{align*}
so that $\mathbb{A}^{\alpha}(\Umb_1)$ will not in general satisfy Definition \ref{D:Strong_hyperbolicity}.

Fortunately, there is a different argument that goes back to Schauder and the earlier days of hyperbolic PDEs \cite{Schauder-1935} (see \cite[Chapter 5]{John-Book-1982} and \cite[Chapter VI]{Courant_and_Hilbert_book_2} for more recent expositions) that can be applied to a constrained system, provided the constraints are propagated by a suitable sub-class of solutions that can be obtained independently of strong hyperbolicity.  Strong hyperbolicity is still used in the argument to conclude local well-posedness in Sobolev spaces, but its use is indirect and legitimate. We discuss this in Section \ref{S:LWP}, after showing the needed nonlinear propagation of constraints in Section \ref{subsec:propagate}.

\begin{remark}
    We note that if a constraint-like restriction is given by an open condition, then the above argument usually goes through. For example, for the ideal fluid, strong hyperbolicity requires $\varepsilon > 0$. We can impose this condition on the initial data $\varepsilon_0$, so that $\mathbb{A}(\Umb_0)$ satisfies Definition \ref{D:Strong_hyperbolicity}. Then, if $\varepsilon_1 \in \Umb_1$ is sufficiently close to $\varepsilon_0$, we obtain $\varepsilon_1 > 0$, and so on.
\end{remark}

\subsection{Propagation of Constraints}\label{subsec:propagate}

When considering a system of nonlinear PDEs, one must ensure that the symmetries and algebraic constraints imposed on the initial data are preserved by the dynamical evolution. For the system consisting of the conservation equations for the currents in \eqref{eq:conserved_quantities} supplemented by the DNMR equations \eqref{eq:supplementalIS} for the viscous fields, these constraints are found in \eqref{Constraints}. The physical degrees of freedom are the energy density $\varepsilon$, baryon density $n$, the spatial components of the four-velocity $u^i$, the bulk viscous pressure $\Pi$, and the 5 independent components of the shear stress tensor $\pi^{\mu\nu}$. In particular, the constraints over $\pi^{\mu\nu}$ are directly obtained as algebraic solutions of \eqref{eq:shear-nodiffusion}---for example, contract this equation with $u_\nu$ to obtain that $\pi^{\mu\nu}u_\nu=0$. However, as illustrated in Sec.\ \ref{sec:augmented_system}, to prove causality and strong hyperbolicity in a fully covariant manner, it is advantageous to treat all 4 components of $u^\mu$ and all 16 components of $\pi^{\mu\nu}$ as independent dynamical variables within the principal part of the system. Therefore, we use the  augmented system presented in \eqref{eq:propsystem}.

The goal of this section is to rigorously prove that if the initial data satisfies the constraints and the conditions for strong hyperbolicity (Theorem \ref{thm:nodiffusionstronghyperbolicity}), then solutions of the augmented system will propagate these constraints for its entire lifetime. This ensures that the solutions of the augmented system that originate from physical initial data remain within the physical constraint surface, thereby validating the results of our analysis for the physical DNMR theory. As such, the proof presented in this section is crucial to ensure the validity of numerical approaches for solving IS-like theories. 

One particular method of ensuring the propagation of constraints is to show that the constraints satisfy equations of motion, derived from the original system, that admit unique solutions. For Eq.~\eqref{eq:propsystem}, we wish to show that this system propagates the symmetry constraints of our dynamic degrees of freedom $\Umb = (u^\nu,\varepsilon,n,\Pi,\pi^{\nu 0},\pi^{\nu 1},\pi^{\nu 2},\pi^{\nu 3})^{\textrm{T}}\in\mathbb{R}^{23}$ provided some initial data over the course of a locally well-posed Cauchy problem. Naturally, we abstain from using these constraints until their propagation is established. We also note that the argument for constraint propagation is conditional, i.e., we assume that we are given a solution to 
\eqref{eq:propsystem} whose data satisfies \eqref{Constraints}, but do not assume that such a solution satisfies \eqref{Constraints} for $t>0$. Then, our goal is to show that \eqref{Constraints} follows for $t>0$ as a consequence of these assumptions.

Because we are not assuming the constraints, the normalization of the four-velocity is in principle  arbitrary. Thus, the projection tensor to the space orthogonal to $u^\mu$ used here is
\be
\Delta^{\mu\nu} \equiv g^{\mu\nu} - \frac{u^\mu u^\nu}{u_\alpha u^\alpha}.
\ee
Even though $u^\mu$ is not assumed normalized, we notice that for small time it will remain timelike if so initially (as assumed).

We will retain the original definitions of the equations of motion unless otherwise stated. Now, we need to show that the constraints \eqref{Constraints} are propagated along the solutions of the system. We first begin with the symmetry of the shear tensor. One can subtract the transpose of Eq.~\eqref{eq:shearprop} from itself to arrive at
\begin{align}
\label{eq:pisymmetry}
0 &= \pi^{[\mu\beta]} + \left(\delta_{\pi\pi}-\frac{\tau_{\pi\pi}}{3}\right)\pi^{[\mu\beta]}\nabla_\alpha u^\alpha + \tau_\pi u^\alpha\nabla_\alpha\pi^{[\mu\beta]}
\end{align}
which holds since $\tensor{\mathcal{C}}{_\pi^{\mu\beta\alpha}_\nu}$, defined in \eqref{eq:C_def}, is symmetric in its first two indices $\mu\leftrightarrow \beta$. Furthermore, all other terms with indices $(\mu\beta)$ cancel out since they are symmetric. In particular, Eq.~\eqref{eq:pisymmetry} is a homogeneous linear differential equation for the antisymmetric part of the shear stress tensor, $\pi^{[\mu\nu]}$ and it constitutes a closed system for $\pi^{[\mu\nu]}$ alone. The coefficients of this equation, such as $\nabla_\mu u^\mu$, $\delta_{\pi\pi}$, $\tau_{\pi\pi}$, and $\tau_\pi$, are all functions of the background solution $(\varepsilon,n,u^\mu,\Pi,\pi^{\mu\nu})$, assumed to be known and sufficiently regular.

Given its homogeneous nature, the equation admits the trivial solution $\pi^{[\mu\nu]} = 0$. Furthermore, since the equations decouples for each unknown $\pi^{[\mu\beta]}$, its hyperbolicity can be analyzed independently. The principal part is $\tau_\pi u^\alpha \nabla_\alpha \pi^{[\mu\beta]}$, which is a transport operator along the flow lines defined by $u^\alpha$. This structure is inherently strictly hyperbolic \cite{ChoquetBruhatGRBook}, guaranteeing that the trivial solution $\pi^{[\mu\nu]} = 0$ is the unique solution for vanishing initial data. Thus, the constraints $\pi^{[\mu\nu]}=0$ imposed by the initial conditions are propagated for the lifetime of the solution.

Now, let $\Phi \equiv u_\alpha u^\alpha + 1$, $\Psi^\mu \equiv u_\alpha \pi^{\alpha\mu}$, and $\chi \equiv \tensor{\pi}{^\alpha_\alpha}$. We contract Eqs.~\eqref{eq:momentumprop} with $u^\mu$ and rearrange to arrive at
\begin{align}
\label{eq:phiprop}
\frac{E}{2}u^\alpha\nabla_\alpha \Phi + \nabla_\beta\Psi^\beta  = \Phi\pi^{\alpha\beta}\nabla_\beta u_\alpha.
\end{align}
Let us turn instead to Eq.~\eqref{eq:shearprop}, and manipulate it in two ways. Firstly, we will contract one of the indices with $u_\alpha$ (either way is the same due to the propagation of $\pi^{[\mu\nu]} = 0$, assuming it is imposed in the initial condition), and secondly, we shall take the trace of the equation. We arrive at
\bml
\label{eq:Phi_Chi}
\bea
&&\mathcal{D}^{\mu\alpha}\nabla_\alpha \Phi + \tau_\pi u^\alpha \nabla_\alpha \Psi^\mu
=-\frac{\tau_{\pi\pi}}{4} \left(\Psi^\alpha\nabla_\alpha u^\mu+ \Delta^{\alpha\mu} \Psi_\beta\nabla_\alpha u^\beta\right) \notag\\
&&\qquad \qquad
-\Psi^\mu\left[1  + \left(\delta_{\pi\pi}-\frac{\tau_{\pi\pi}}{3}\right)\nabla_\alpha u^\alpha\right]
+\tau_\pi u^\alpha \left(\Phi\tensor{\pi}{^{\mu}_\beta}\nabla_\alpha u^\beta + u^\mu \Psi_\beta\nabla_\alpha u^\beta\right),\\
&&\tau_\pi u^\alpha\nabla_\alpha\chi 
=-\chi - \left(\delta_{\pi\pi}-\frac{\tau_{\pi\pi}}{3}\right)\chi\nabla_\alpha u^\alpha-2\left(\frac{\tau_{\pi\pi}}{4} - \tau_\pi\right)u^\alpha\Psi_\beta\nabla_\alpha u^\beta,
\eea
\eml
where
\[\mathcal{D}^{\mu\alpha}\equiv \frac{\left(2\eta+\lambda_{\pi\Pi}\Pi\right)\Delta^{\alpha\mu}+\frac{\tau_{\pi\pi}}{2}\tensor{\pi}{^\alpha^{\mu}}}{4}.\]
Combined with Eq.\eqref{eq:phiprop}, this forms a homogeneous linear system of equations for the variables $(\Phi,\chi,\Psi_\nu)$, where the coefficients depend on the known solution $(\varepsilon,n,u^\mu,\Pi,\pi^{\mu\nu})$ with $\pi^{[\mu\nu]}=0$. In the propagation-of-constraints argument, this background solution is assumed to be given and sufficiently regular. The right-hand sides of these equations are linear in the variables $(\Phi,\chi,\Psi_\nu)$, with coefficients that depend on the background solution, and they contain no derivatives of these variables. All derivatives are contained in the principal part on the left-hand side. 

This system admits the trivial solution $(\Phi,\chi,\Psi_\nu) = \mathbf{0}$. The final step is to determine whether the system described by Eqs.~\eqref{eq:phiprop} and \eqref{eq:Phi_Chi} is strongly hyperbolic under the assumptions from Theorem~\ref{thm:nodiffusionstronghyperbolicity}, which would guarantee that this trivial solution is unique.

Let us write Eqs.~\eqref{eq:phiprop} and \eqref{eq:Phi_Chi} in matrix form $(M^\alpha \partial_\alpha+N) \mathbf{V}=\mathbf{0}$, where $\mathbf{V}=(\Phi,\Psi^\nu,\chi)^T\in\mathbb{R}^{6}$ is the vector of the unknowns, $N\in\mathbb{R}^{6\times6}$ contains terms from the RHS of each equation and depends only on the known solution, while
\begin{align*}
M^\alpha=\begin{bmatrix}
\frac{E}{2}u^\alpha & \delta^\alpha_\nu & 0\\
\mathcal{D}^{\mu\alpha} & \tau_\pi u^\alpha\delta^\mu_\nu & 0^\mu\\
0 & 0_\nu & \tau_\pi u^\alpha
\end{bmatrix}.
\end{align*}
The characteristic determinant (following the definitions in Sec.~\ref{subsec:characteristics}) is
\begin{align*}
\det[M^\alpha\phi_\alpha]&=\det\begin{bmatrix}
\frac{E}{2}z & \phi_\nu & 0\\
\mathcal{D}^{\mu\alpha}\phi_\alpha & \tau_\pi z\delta^\mu_\nu & 0^\mu\\
0 & 0_\nu & \tau_\pi z
\end{bmatrix}
=\frac{E\tau_\pi^5}{2}z^4\left[
z^2-\frac{2\mathcal{D}^{\mu\nu}\phi_\mu\phi_\nu}{E \tau_\pi}
\right]\\
&=\frac{E\tau_\pi^5 v^ 6}{2}\hat{z}^4\left[
\hat{z}^2-\frac{\frac{1}{2}(2\eta+\lambda_{\pi\Pi}\Pi)+\frac{\tau_{\pi\pi}\Lambda_{\hat{v}^2}}{4}}{E \tau_\pi}
\right].
\end{align*}
The characteristic roots are $\hat{z}=0$ and the roots of $\hat{z}^2 - C = 0$, where $C = \frac{\frac{1}{2}(2\eta+\lambda_{\pi\Pi}\Pi)+\frac{\tau_{\pi\pi}\Lambda_{\hat{v}^2}}{4}}{E \tau_\pi}$. These roots correspond to non-timelike covectors $\phi_\mu$ if and only if $C \in [0,1]$. This condition is precisely the necessary causality condition given by \eqref{eq:C5}.

As for the uniqueness of the trivial solution, we now show that Eqs.~\eqref{eq:phiprop} and \eqref{eq:Phi_Chi} are strongly hyperbolic. The fact that the roots of the characteristic determinant are real and correspond to non-timelike covectors automatically guarantees that $\det[M^\alpha\xi_\alpha]\ne 0$ for any timelike covector $\xi_\alpha$. Therefore, condition (HI) is satisfied.

Regarding condition (HII), we must show that for a given timelike covector $\xi_\alpha$, the eigenvalue problem $(\beta\xi_\alpha+\zeta_\alpha)M^\alpha\, \mathbf{r}=\mathbf{0}$ has only real eigenvalues $\beta$ and a complete set of right eigenvectors $\mathbf{r}$ in $\mathbb{R}^{6}$ for any real spacelike covector $\zeta_\alpha$. The reality of the roots $\phi_\alpha=\beta\xi_\alpha+\zeta_\alpha$ of the characteristic equation $\det[M^\alpha\phi_\alpha]=0$, ensured by condition \eqref{eq:C5}, guarantees that the eigenvalues $\beta$ are real.

If $\frac{\frac{1}{2}(2\eta+\lambda_{\pi\Pi}\Pi)+\frac{\tau_{\pi\pi}\Lambda_{\hat{v}^2}}{4}}{E \tau_\pi}=0$, the eigenvalue $u^\alpha(\beta\xi_\alpha+\zeta_\alpha)=0$ has multiplicity 6. However, in this degenerate case, the number of linearly independent eigenvectors is only 5, which is insufficient to satisfy condition (HII). Therefore, we shall prove strong hyperbolicity under the stricter constraint \eqref{eq:H5}, which ensures this ratio is positive. In this case, the complete set of eigenvectors is constructed as follows:

\begin{itemize}
\item \textbf{Eigenvalue $\beta_0$:} This corresponds to $u^\alpha\phi_\alpha=0$, yielding the eigenvalue $\beta_0=-u^\mu \zeta_\mu/u^\alpha \xi_\alpha$. The matrix $M^\alpha\phi_\alpha$, evaluated at $\phi^{(0)}_\alpha=\beta_0\xi_\alpha+\zeta_\alpha$, is
\[
\left.M^\alpha \phi_\alpha\right|_{\beta_0}=\begin{bmatrix}
0 & \phi^{(0)}_\nu & 0\\
\mathcal{D}^{\mu\alpha}\phi^{(0)}_\alpha & 0^\mu_\nu & 0^\mu\\
0 & 0_\nu & 0
\end{bmatrix},
\]
with a nullspace spanned by the linearly independent vectors
\[
\begin{bmatrix} 0 \\ 0^\mu\\ 1 \end{bmatrix},\quad \begin{bmatrix} 0 \\ u^\mu\\ 0 \end{bmatrix},\quad \begin{bmatrix} 0 \\ \omega_1^\mu\\ 0 \end{bmatrix},\quad\text{and}\quad \begin{bmatrix} 0 \\ \omega_2^\mu\\ 0 \end{bmatrix},
\]
where $\{u^\mu,\omega_1^\mu,\omega_2^\mu\}$ are three linearly independent vectors orthogonal to $\phi^{(0)}_\nu$. This yields 4 linearly independent eigenvectors for $\beta_0$, meaning the geometric multiplicity of this eigenvalue matches its algebraic multiplicity.

\item \textbf{Eigenvalues $\beta_\pm$:} These arise from the roots of $\hat{z}^2-\frac{\frac{1}{2}(2\eta+\lambda_{\pi\Pi}\Pi)+\frac{\tau_{\pi\pi}\Lambda_{\hat{v}^2}}{4}}{E \tau_\pi}=0$, i.e., from the equation $(u^\alpha\phi^{(\pm)}_\alpha)^2-\Delta^{\mu\nu}\phi^{(\pm)}_\mu \phi^{(\pm)}_\nu C=0$ with $\phi^{(\pm)}_\alpha=\beta_\pm\xi_\alpha+\zeta_\alpha$. This equation will give rise to two distinct real eigenvalues $\beta_\pm$ when $C=\frac{\frac{1}{2}(2\eta+\lambda_{\pi\Pi}\Pi)+\frac{\tau_{\pi\pi}\Lambda_{\hat{v}^2}}{4}}{E \tau_\pi}\in(0,1]$~\footnote{The proof showing that the eigenvalue, here $\beta$, is real and distinct once $C\in(0,1]$ can be found in Ref.~\cite{Bemfica:2020zjp}.}. Therefore, under condition \eqref{eq:H5}, the eigenvalues $\beta_0$, $\beta_+$, and $\beta_-$ are distinct and the remaining 2 linearly independent eigenvectors---one for $\beta_-$ and one for $\beta_+$---are guaranteed. This provides a total of 6 linearly independent eigenvectors. Then, the system defined by Eqs.~\eqref{eq:phiprop} and \eqref{eq:Phi_Chi} is strongly hyperbolic. Consequently, initial data satisfying the constraints evolve uniquely, preserving these constraints for the lifetime of the solution.
\end{itemize}

Thus, it follows that DNMR with shear and bulk viscosity preserves the symmetry and traceless constraints of $\pi^{\mu\nu}$, the normalization of $u_\alpha u^\alpha = -1$, and the orthogonality constraint $u_\alpha\pi^{\alpha\mu} = 0$. If these conditions are supplied alongside the remaining initial data, then it follows that they will evolve uniquely throughout the evolution of the Cauchy problem.

\subsection{Local well-posendess}
\label{S:LWP}

\begin{theorem}
    \label{T:LWP}
    The system \eqref{eq:propsystem} in a globally hyperbolic spacetime is locally well-posed in the Sobolev space $H^s$, $s > \frac{3}{2} + 2$, for data given along a Cauchy surface, provided that the data satisfies \eqref{Constraints} and the assumptions of Theorem \ref{thm:nodiffusionstronghyperbolicity} and the transport coefficients are analytic functions of their arguments.
\end{theorem}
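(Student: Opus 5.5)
Following the discussion preceding the theorem, the plan is to go through Schauder's argument and its Cauchy--Kovalevskaya step rather than the openness-based Picard iteration; the latter breaks down because the constraints \eqref{Constraints} are not open conditions.

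\textbf{Step 1: analytic existence with constraint propagation.} Approximate the $H^s$ data $\Umb_0$ along $\Sigma$ by analytic data $\Umb_0^{(k)}$ that still satisfy \eqref{Constraints} exactly. To preserve the constraints, approximate only the physical variables ($\varepsilon, n, \Pi, u^i$, and five independent components of $\pi^{\mu\nu}$). Then reconstruct $u^0$ from normalization and the remaining $\pi^{\mu\nu}$ components from symmetry, orthogonality and tracelessness. Each reconstruction is an analytic algebraic operation, valid because $u^\mu$ is timelike. Since the assumptions of Theorem~\ref{thm:nodiffusionstronghyperbolicity} are inequalities (strict where needed), they remain true for the approximants when $k$ is large, because $H^s\subset C^1$ for $s>3/2+2$. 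Condition (HI) makes $\Sigma$ noncharacteristic, and the transport coefficients are analytic, so Cauchy--Kovalevskaya gives analytic solutions $\Umb^{(k)}$ of \eqref{eq:propsystem} near $\Sigma$. Apply Section~\ref{subsec:propagate} to these solutions. The antisymmetric part obeys the transport equation \eqref{eq:pisymmetry}, and the system \eqref{eq:phiprop}--\eqref{eq:Phi_Chi} for $(\Phi,\Psi^\mu,\chi)$ is strongly hyperbolic under \eqref{eq:H5}. Both are linear, homogeneous and have trivial data, so $\Umb^{(k)}$ satisfies \eqref{Constraints} wherever it exists. By continuity the strict inequalities of Theorem~\ref{thm:nodiffusionstronghyperbolicity} also continue to hold there, so $\mathbb{A}^\alpha(\Umb^{(k)})$ is strongly hyperbolic along each solution.

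\textbf{Step 2: uniform energy estimates.} Strong hyperbolicity, together with the completeness of eigenvectors proven in Appendix~\ref{App:hyperbolicity}, yields a symmetrizer $H(\Umb,\xi)$ in the sense of Kreiss: a smooth, positive-definite matrix with $H\mathbb{A}^\alpha\zeta_\alpha$ symmetric. Smoothness follows from the eigenvalues being distinct or of constant multiplicity under the strict inequalities \eqref{eq:H1}, \eqref{eq:H2} and \eqref{eq:H5}. One could instead pass to a pseudodifferential symmetrizer. Using it, derive $H^s$ energy estimates for $\Umb^{(k)}$ with constants depending only on $\|\Umb_0^{(k)}\|_{H^s}$, which is uniformly bounded. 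This gives a time of existence $T>0$ independent of $k$ via the standard continuation criterion (analytic solutions persist while their $C^1$ norm is controlled, $s>5/2$). The same estimates at the $H^{s-1}$ level, applied to differences, show $\{\Umb^{(k)}\}$ is Cauchy in $C([0,T];H^{s-1})$. Its limit $\Umb$ lies in $C([0,T];H^s)$ by weak compactness and the standard Bona--Smith argument. It solves \eqref{eq:propsystem}, and it inherits \eqref{Constraints} because the constraints are closed conditions preserved under pointwise limits.

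\textbf{Step 3: uniqueness and continuous dependence.} Run the difference estimate between two solutions in $H^{s-1}$; both satisfy the constraints, so both are strongly hyperbolic. Continuity of the data-to-solution map in $H^s$ follows from the Bona--Smith regularization. Finite propagation speed from Theorem~\ref{thm:causalitynodiffusion} lets us localize, which handles a general globally hyperbolic background.

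The main obstacle I expect is Step 2. The estimates need a symmetrizer whose regularity in $(\Umb,\zeta)$ is uniform along the iterates, but strong hyperbolicity was proven only on the constraint surface. Step 1 is designed to keep every approximant exactly on that surface. The remaining difficulty is showing that eigenvector completeness holds uniformly, which requires the strict inequalities to rule out coalescing roots of $\mathcal{P}_3$ with each other, with $\hat z^2=C$, or with $\hat z=0$. My first attempt would be to verify that these root sets stay separated on the compact set $\chi^2,\kappa^2\in[0,1]$, so that the spectral projectors are smooth.
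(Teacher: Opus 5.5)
Your outline is the paper's own: analytic approximation, Cauchy--Kovalevskaya, constraint propagation on the analytic solutions, strongly hyperbolic energy estimates, a Cauchy sequence in $H^{s-1}$, then an upgrade to $H^s$. But Step~2 breaks exactly at the obstacle you flag, and your proposed cure (showing the root sets stay separated on $[0,1]^2$) cannot succeed. Conditions \eqref{eq:H1}--\eqref{eq:H5} separate the roots of $\mathcal{P}_3$ from one another and from $\hat z=0$. They do not separate them from $y_0$, the root of the factor $\hat z^2-y_0$, which is the speed of the constraint subsystem \eqref{eq:phiprop}--\eqref{eq:Phi_Chi}. Where the two meet, the augmented symbol has a Jordan block.

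To see this, note that $u_\mu\mathcal{X}^\mu{}_\nu=\lambda_0u_\nu$ with $\lambda_0=Ez^{-1}(z^2-y_0v^2)$. So in the basis $\{u,e_a\}$ the matrix $\mathcal{X}$ is block lower triangular: a corner $\lambda_0$, a $3\times3$ block $Y$ with $\det Y\propto\mathcal{P}_3$, and a coupling column $c^a=e^a_\mu\mathcal{X}^\mu{}_\nu u^\nu$. In the Maxwell--Cattaneo subcase one finds $c=-Kv$ with $K=P_\varepsilon E+nP_n+(\zeta+\delta_{\Pi\Pi}\Pi)/\tau_\Pi+\eta_{\mathrm{eff}}/\tau_\pi$. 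One also finds $zY^a{}_b=d_b\delta^a_b-v^a\tilde v_b$, with $d_b=(E+\Lambda_b)z^2-\eta_{\mathrm{eff}}v^2/\tau_\pi$ and $\tilde v_b=(E+\Lambda_b)c^2_{\mathrm{eff},b}v_b$. On $\hat z^2=y_0$ one has $d_b=\Lambda_bz^2$, so $\det Y=0$ there iff $\sum_b\hat v_b^2(E+\Lambda_b)c^2_{\mathrm{eff},b}/\Lambda_b=y_0$. With $c^2_{\mathrm{eff},b}\ge0$ the $b=1$ weight is non-positive, so this is solvable on the sphere as soon as one weight exceeds $y_0$. If all $\Lambda_b\neq0$, a left null vector $\ell$ of $Y$ obeys $\ell_bd_b=(\ell\cdot v)\tilde v_b$, which forces $\ell\cdot v\neq0$. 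Hence, for $K\neq0$, $c\notin\operatorname{range}Y$, and $\ker\mathcal{X}$ (equivalently $\ker\mathbb{A}^\alpha\phi_\alpha$, since $z\neq0$) is one-dimensional at a double root of $\det\mathbb{A}^\alpha\phi_\alpha$. So (HII) fails there, and no symmetrizer of any regularity exists.

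This is compatible with every hypothesis. Take $n=\Pi=\zeta=0$, $E=1$, $P_\varepsilon=0.01$, $\eta_{\mathrm{eff}}/\tau_\pi=0.3$, $(\Lambda_1,\Lambda_2,\Lambda_3)=(-0.55,0.05,0.5)$. This satisfies \eqref{eq:H1}--\eqref{eq:H5} for all angles: the roots of $\mathcal{P}_3$ stay distinct and lie in $[0.2,0.9]$. Here $K=0.31$, yet $y_0=0.3$ is a root of $\mathcal{P}_3$ at $\hat v_3=0$, $\hat v_2^2\approx0.20$.

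Citing Theorem~\ref{thm:nodiffusionstronghyperbolicity}, as the paper does through the energy estimates of \cite{Shao:2023psr}, does not close this. Its proof infers that all eight nonzero roots are distinct from $\Delta>0$ alone. A workable route is to derive the estimates for the reduced eleven-unknown system (the splitting used in the ideal-fluid example). Its symbol lacks the constraint factor $\hat z^2-y_0$, and your approximants and their limit satisfy it because they lie on the constraint surface. You would still have to check its diagonalizability.

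Separately, the uniform existence time does not follow from a ``standard continuation criterion''. Cauchy--Kovalevskaya controls the time through the size of a complex neighbourhood on which the data extend holomorphically, and through bounds there. These must degenerate along analytic data converging in $H^s$ to non-analytic data; otherwise, by Montel, the limit would be analytic. Persistence of analyticity under $C^1$ or $H^s$ control is a genuine theorem. Its usual proofs start from an already existing smooth solution, which is exactly what the analytic detour is meant to avoid. The paper's $T_\omega=\min_nT_{\omega,n}>0$ has the same hole.

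Two smaller points:
\begin{itemize}
\item Cauchy--Kovalevskaya needs an analytic metric, so on a general background $g$ must also be approximated, with estimates uniform in that approximation.
\item The non-strict hypotheses ($\mathcal{A}_2\le3$, \eqref{eq:H3}, \eqref{eq:H4}, $y_0\le1$) survive neither approximation of the data nor continuity along the flow. They should be replaced by the open condition that the slices $\Sigma_t$ stay non-characteristic.
\end{itemize}
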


The proof is given in Appendix \ref{App:LWPSchauder}. Here, we outline the logic of the argument, showing how we can still rely on strong hyperbolicity, albeit in an indirect fashion, despite the presence of the constraints. We consider the system \eqref{thm:nodiffusionstronghyperbolicity} written in the form \eqref{eq:FirstOrderQPDE}.

By Theorem \ref{thm:causalitynodiffusion} applied to the initial data $\Umb_0 \equiv \left. \Umb\right|_{t=0}$, the initial hypersurface, which we can take as parametrized by $t=0$, is non-characteristic. Thus, given analytic initial-data $\Umb_0^\omega$, there exists an analytic solution to \eqref{eq:FirstOrderQPDE} defined in a neighborhood of $\{t=0\}$. Given Sobolev-regular initial data $\Umb_0$, let $\Umb_{0,n}^\omega$ be a sequence of analytic initial data converging to $\Umb_0$. Without loss of generality we can assume that each $\Umb_{0,n}^\omega$ satisfies \eqref{Constraints} and the assumptions of Theorem \ref{thm:nodiffusionstronghyperbolicity}. For each $n$, there thus exists a corresponding analytic solution $\Umb_n^\omega$. It is known that we can further construct the sequence of analytic solutions in such a way that they are all defined on the same time interval $[0,T^\omega]$. Applying the result of Section \ref{subsec:propagate} to this sequence, we therefore obtain that the solutions $\Umb_n^\omega$ satisfy the constraints \eqref{Constraints} on $[0,T^\omega]$. Therefore, we can invoke Theorem \ref{thm:nodiffusionstronghyperbolicity} to conclude that for each $n$ the corresponding system for $\Umb_n^\omega$ is strongly hyperbolic, i.e, $\mathbb{A}^\alpha(\Umb_n^\omega)$ satisfies Definition \ref{D:Strong_hyperbolicity} on $[0,T^\omega]$. Standard energy estimates for strongly hyperbolic systems give that the sequence $\{ \Umb_n^\omega \}$ is uniformly bounded in $H^s$ and the sequence of differences $\Umb_n^\omega - \Umb_m^\omega$ is a Cauchy sequence in $H^{s-1}$ which thus converges to a limit that itself belongs to $H^s$. Passing to the limit on \eqref{eq:FirstOrderQPDE}, we see that the limit of the sequence satisfies the equation and by construction takes the correct initial data. Uniqueness follows by the same type of energy estimate for the difference of two solutions.

The key point to emphasize is that \emph{we do not obtain solutions by invoking the existence theory for strongly hyperbolic systems,} since such existence theory applies to unconstrained systems but our system is constrained, with the very proof of strong hyperbolicity relying on the use of the constraints. The only element of strong hyperbolicity that is used are the \emph{energy estimates for these systems applied to already proven to exist and constraint-preserving analytic solutions.}

\section{Angle-Independent Generalized Maxwell-Cattaneo}\label{sec:MC}

The most general subclass of IS-like hydrodynamics we consider is that of a ``generalized" Maxwell-Cattaneo limit, in which one allows the transport coefficients to still depend on the viscous fluxes $\Pi$ and $\Lambda_a$, while also retaining second order transport coefficients that preserve the original tensor symmetries of the equations of Maxwell-Cattaneo. This subclass that we present here is significant from a phenomenological perspective, as we are able to remove the discriminant condition provided in Eq.~\eqref{eq:C1} of Theorem~\ref{thm:causalitynodiffusion} that guarantees the existence of all roots in $\mathbb{R}$, while simultaneously removing the non-trivial angular dependence of the remaining constraints in Eq.~\eqref{eq:C2}--\eqref{eq:C5} on $\chi$ and $\kappa$. This very general subclass retains most of the transport coefficients while significantly reducing the amount of constraints that need to be evaluated per fluid cell in simulations. Therefore, the subclass of theories discussed here is still sufficiently general and, thus, very well-suited for numerical simulations and phenomenological applications in heavy-ion collisions and astrophysics. 

Traditionally, one recovers the Maxwell-Cattaneo limit by setting all transport coefficients apart from $\eta,\zeta,\tau_\pi,\tau_\Pi$ to zero. Here, we will only choose $\delta_{\pi\pi} = \tau_{\pi\pi} = \lambda_{\Pi\pi} = 0$, as one can note that the other transport coefficients do not add any new non-trivial structures (e.g. tensors) to the equation. In this limit, note that Eq.~\eqref{eq:supplementalIS} takes the form
\bml
\label{eq:supplementalIS-MC}
\bea
\label{eq:bulk-nodiffusion-MC}
\tau_\Pi u^\alpha\nabla_\alpha\Pi + \Pi &=& -(\zeta + \delta_{\Pi\Pi}\Pi)\nabla_\alpha u^\alpha,\\
\label{eq:shear-nodiffusion-MC}
\tau_\pi\tensor{\Delta}{^{\mu\nu}_{\beta\gamma}}u^\alpha\nabla_\alpha\pi^{\beta\gamma} + \pi^{\mu\nu} &=& -2\left(\eta + \frac{1}{2}\lambda_{\pi\Pi}\Pi\right)\sigma^{\mu\nu},
\eea
\eml
where we remark that $\delta_{\Pi\Pi}$ and $\lambda_{\pi\Pi}$ directly act as second-order corrections to the bulk viscosity $\zeta$ and shear viscosity $\eta$, respectively, even though all transport coefficients may depend on $\Pi$ and $\Lambda_a$ in principle (thus, this class of theories is still sufficiently general). We will introduce the following notation to simplify constraints for the rest of the section:
\be
\frac{\zeta_{\textrm{eff}}}{\tau_\Pi} \equiv nP_n + \frac{\zeta + \delta_{\Pi\Pi}\Pi}{\tau_\Pi}  + \frac{1}{3}\frac{\eta_{\textrm{eff}}}{\tau_\pi};\qquad \eta_{\textrm{eff}}\equiv \eta + \frac{1}{2}\lambda_{\pi\Pi}\Pi.
\ee
This notation motivates writing
\be
c_{\textrm{eff},a}^2 \equiv P_\varepsilon + \frac{1}{E + \Lambda_a}\frac{\zeta_{\textrm{eff}}}{\tau_\Pi}
\ee
as the viscous corrections to the speed of sound (computed at constant $s/n$) squared $c_s^2 = P_\varepsilon + n P_n/(\varepsilon + P)$ along the principal axes of the shear stress tensor. An important fact to mention is that all transport coefficients and pressure gradients may be coalesced into these three effective viscous terms. The determinant of the principal part in this case is found to be
\be
\label{eq:MCdet}
\det(\mathbb{A}^\alpha\phi_\alpha) = \tau^{17}v^{17}\hat{z}^{15}\left(E\hat{z}^2 - \frac{\eta_{\textrm{eff}}}{\tau_\pi}\right)\mathcal{P}_3(\hat{z}^2,\hat{v}_1^2,\hat{v}_2^2,\hat{v}_3^2).
\ee
One can read off the nonlinear causality constraint pertaining to the effective shear viscosity as
\be
0\leq \frac{\eta_{\textrm{eff}}}{\tau_\pi E}\leq 1.
\ee
The cubic portion has the same structure as Eq.~\eqref{eq:nodiffusiondet} with but instead with the reduced coefficients
\bml
\label{eq:MCcoeff}
\bea
\mathcal{A}_0(\chi^2,\kappa^2) &=& \dfrac{\eta_{\textrm{eff}}^2}{\tau_\pi^2}\frac{\dfrac{\zeta_{\textrm{eff}}}{\tau_\Pi} + \dfrac{\eta_{\textrm{eff}}}{\tau_\pi} + P_\varepsilon \left(E + \Lambda_{\hat{v}^2}\right)}{ (E + \Lambda_1) (E + \Lambda_2) (E + \Lambda_3)},\\
\mathcal{A}_1(\chi^2,\kappa^2) &=& \frac{\eta_{\textrm{eff}}}{\tau_\pi}\left[\dfrac{3\dfrac{\eta_{\textrm{eff}}^2}{\tau_\pi^2} + 2\dfrac{\zeta_{\textrm{eff}}}{\tau_\Pi}\left(E - \dfrac{1}{2}\Lambda_{\hat{v}^2}\right)}{(E + \Lambda_1)(E + \Lambda_2)(E +\Lambda_3)} + P_\varepsilon\sum_{k = 1}^3\frac{1 - \hat{v}_k^2}{E + \Lambda_k}\right],\\
\mathcal{A}_2(\chi^2,\kappa^2) &=& P_\varepsilon + \dfrac{\zeta_{\textrm{eff}}}{\tau_\Pi}\sum_{k = 1}^3\frac{\hat{v}_k^2}{E + \Lambda_k} + \dfrac{\eta_{\textrm{eff}}}{\tau_\pi}\sum_{k = 1}^3\frac{1}{E + \Lambda_k}.
\eea
\eml
Due to the analogous structure of the characteristic determinant, Theorem~\ref{thm:causalitynodiffusion} can be applied by simply setting $\lambda_{\Pi\pi} = \delta_{\pi\pi} = \tau_{\pi\pi} = 0$ and replacing the coefficient definitions with Eq.~\eqref{eq:MCcoeff}. We remark that strong hyperbolicity with these new coefficients is still provided by Theorem~\ref{thm:nodiffusionstronghyperbolicity}. However, we can further simplify these constraints using both multivariate optimization and properties of cubic polynomials. Recall that the shear stress tensor is traceless such that $\Lambda_1 + \Lambda_2 + \Lambda_3 = 0$. Therefore, unless $\Lambda_a = 0$ for each $a = 1,2,3$ (we ignore this trivial case), there always exists one eigenvalue that is strictly positive and one that is strictly negative. We will rearrange indices such that
\bml
\bea
\Lambda_1 &\leq& \Lambda_2\leq \Lambda_3,\\
\Lambda_1 &<&0 <\Lambda_3.
\eea
\eml
Furthermore, we shall also assume that the shear stress cannot grow large enough (relative to the energy density and pressure $E = \varepsilon + P + \Pi$) such that
\be
E + \Lambda_1 > 0.
\ee
Otherwise, the characteristic determinant would be allowed to vanish (and thus become singular) at some points, potentially causing unphysical singularities in the characteristics themselves. Naturally, this behavior would lead to a breakdown of the equations of motion. To remove the angles, we simply determine the global maxima/minima that the coefficients reach within the interval $\chi^2,\kappa^2\in [0,1]$, and check only those extrema. These extremal values provide necessary and sufficient conditions for the original bounds, while removing the angles completely. For instance, Eq.~\eqref{eq:C2} is a nonlinear constraint that enforces the condition $\mathcal{A}_0(\chi^2,\kappa^2)\geq 0$ for all $\chi^2,\kappa^2\in [0,1]$. Suppose that this function reaches a minimum at $(\chi^2,\kappa^2) = (\widetilde{\chi}^2,\widetilde{\kappa}^2)\in [0,1]^2$. Then the following constraints are equivalent:
\be
\forall \chi^2,\kappa^2\in [0,1];\quad \mathcal{A}_0(\chi^2,\kappa^2)\geq 0\quad\Leftrightarrow\quad \min_{\chi^2,\kappa^2\in[0,1]}\mathcal{A}_0(\chi^2,\kappa^2) \equiv \mathcal{A}_0(\widetilde{\chi}^2,\widetilde{\kappa}^2)\geq 0.
\ee
Here we are taking an angle-dependent constraint and then enforcing the strictest bound possible for one particular angle (corresponding to a critical point or boundary term). This process then allows us to satisfy the constraints for all angles immediately. For the rest of this section, we shall assume that 
\be
\label{eq:assumptionMC}
P_\varepsilon \geq 0;\qquad \frac{\zeta_{\textrm{eff}}}{\tau_\Pi} \geq 0,
\ee
which then implies that $c_{\textrm{eff},a}^2\geq 0$. This assumption is very physical and most likely covers most realistic initial data - in fact, in the linear regime, causality enforces $\frac{\zeta_{\textrm{eff}}}{\tau_\Pi}\geq 0$, whereas $P_\varepsilon$ is always positive in the linear regime when there is no baryon current, since $c_s^2 = P_\varepsilon + \frac{nP_n}{\varepsilon + P}$ is the hydrodynamic speed of sound. The interested reader can consult Appendix~\ref{App:MC} to see how the relevant constraints change when $P_\varepsilon < 0$ for instance, but we refrain from listing them here since the former case is most likely more relevant. 
\begin{corollary}[Angle-Independent Nonlinear Causality of Generalized Maxwell-Cattaneo]
Let $\tau_\pi,\tau_\Pi\neq 0$, $E + \Lambda_1 > 0$, and assume that $\Lambda_1 \leq \Lambda_2 \leq \Lambda_3$ such that $\Lambda_1 < 0 <\Lambda_3$. Furthermore, assume that for all $a = 1,2,3$ the effective hydrodynamic sound speeds are non-negative:
\be
\normalfont
c_{\textrm{eff},a}^2\geq 0.
\ee
The following conditions are then necessary and sufficient for causality, that is, they are equivalent to Theorem~\ref{thm:causalitynodiffusion} and do not depend on the parameters $\chi^2$ and $\kappa^2$, given that they are satisfied for all such $a = 1,2,3$, and that $P_\varepsilon,\zeta_{\normalfont{\textrm{eff}}}/\tau_\Pi \geq 0$ (see Appendix~\ref{App:MC} for the negative cases):
\bml
\normalfont
\bea
\label{eq:MCC1}
0 \leq \frac{\eta_{\textrm{eff}}}{\tau_\pi E} &\leq & 1,\\
\label{eq:MCC2}
c_{\textrm{eff},1}^2 + \dfrac{1}{E + \Lambda_1}\dfrac{\eta_{\textrm{eff}}}{\tau_\pi} &\geq & 0,\\
\label{eq:MCC3}
\left(E - \dfrac{\Lambda_a}{2}\right)(E + \Lambda_a)c_{\textrm{eff},a}^2+ \dfrac{3}{2}\dfrac{\eta_{\textrm{eff}}}{\tau_\pi}E &\geq & 0,\\
\label{eq:MCC4}
c_{\textrm{eff},3}^2 + \frac{\eta_{\textrm{eff}}}{\tau_\pi}\sum_{k = 1}^3\frac{1}{E +\Lambda_k} &\geq & 0,\\
\label{eq:MCC5}
c_{\textrm{eff},1}^2 + \frac{\eta_{\textrm{eff}}}{\tau_\pi}\sum_{k = 1}^3\frac{1}{E +\Lambda_k} &\leq & 3,\\
\label{eq:MCC6}
1 -\frac{2}{3}\left[c_{\textrm{eff},a}^2 + \frac{\eta_{\textrm{eff}}}{\tau_\pi}\sum_{k = 1}^3\frac{1}{E + \Lambda_k}\right]+ \dfrac{\eta_{\textrm{eff}}}{\tau_\pi}\left[\frac{\left(E - \dfrac{\Lambda_a}{3}\right) (E + \Lambda_a)c_{\textrm{eff},a}^2 + \dfrac{8}{9}\dfrac{\eta_{\textrm{eff}}}{\tau_\pi}E}{(E + \Lambda_1)(E + \Lambda_2)(E + \Lambda_3)}\right] &\geq & 0,
\eea
\eml
along with the following condition for all permutations $(a,b,c) = (1,2,3),(1,3,2),(2,3,1)$ that
\be
\normalfont
\label{eq:MCC7}
\left(1 - \dfrac{1}{E + \Lambda_a}\dfrac{\eta_{\textrm{eff}}}{\tau_\pi}\right)\left(1 - \dfrac{1}{E + \Lambda_b}\dfrac{\eta_{\textrm{eff}}}{\tau_\pi}\right)\left(1-c_{\textrm{eff,c}}^2 - \dfrac{1}{E + \Lambda_c}\dfrac{\eta_{\textrm{eff}}}{\tau_\pi}\right)\geq 0.
\ee
Furthermore, note that one does not need to check the sign of the discriminant, as real roots are guaranteed under the above assumptions.
\begin{proof}
The proof begins by applying Theorem~\ref{thm:causalitynodiffusion} to the generalized Maxwell-Cattaneo subcase, which we are able to drastically simplify by calculating minima and maxima of the coefficients and linear combinations thereof with respect to the parameters $\chi^2,\kappa^2\in[0,1]$. Furthermore, we show, using elementary analysis results (e.g. intermediate value theorem, implicit function theorem \cite{krantz2013implicit}) that three real roots are always guaranteed under these assumptions, and thus, one does not need to check the discriminant. In some of the conditions, the extrema of the constraints in Eq.~\eqref{eq:C2}--\eqref{eq:C5} have multiple subcases depending on the relative sizes of viscous fluxes, transport coefficients and other dynamic variables. Therefore, checking the above conditions for all given values and/or permutations of the indices $a,b,c\in\{1,2,3\}$ then condenses all of these subcases into a single statement. Appendix~\ref{App:MC} contains all details of the calculation for the interested reader.
\end{proof}
\end{corollary}

\section{Einstein's Equations}\label{sec:Einstein}

In the previous sections, we assumed a completely general, yet non-dynamical metric $g_{\mu\nu}$, in the sense that it was not evolved alongside the fluid equations of motion in Eq.~\eqref{eq:FirstOrderQPDE}. While the causality and strong hyperbolicity bounds in Theorem~\ref{thm:causalitynodiffusion} and \ref{thm:nodiffusionstronghyperbolicity} signify a significant leap in understanding the matter portion of IS-like theories, a non-dynamical metric obviously fails to account for non-trivial coupling of the matter and spacetime, which may be relevant particularly for back-reactions of matter on the spacetime in extreme astrophysical phenomena. Here, we show not only that our causality result is compatible with a dynamic spacetime --- we also show that the entire system of matter/fluid plus Einstein's equations admits a strongly hyperbolic development of the Cauchy problem under the \emph{exact same constraints} as Theorem~\ref{thm:causalitynodiffusion} and \ref{thm:nodiffusionstronghyperbolicity}. 

To see this, consider the original (full) set of hydrodynamic equations of motions given in Eq.~\eqref{eq:FirstOrderQPDE}. We shall relabel each of the components in this system with the subscript `$M$' in order to distinguish our matter variables from our gravity variables:
\be
(\mathbb{A}_M^\alpha\partial_\alpha  + \mathbb{B}_M)\textbf{\textrm{U}}_M = \boldsymbol{0}_{23}.
\ee
Here, $\boldsymbol{0}_n\in\mathbb{R}^n$ is the zero vector. In addition to this set of $23$ equations, we add $10$ degrees of freedom due to the (symmetric) metric in the form of Einstein's equations:
\be
\mathcal{R}_{\mu\nu} - \frac{1}{2}\mathcal{R}g_{\mu\nu} + \Lambda g_{\mu\nu} = 8\pi G\,T_{\mu\nu},
\ee
where $G$ is Newton's gravitational constant, and $\Lambda$ is the cosmological constant. Here, $\mathcal{R}_{\mu\nu}$ is the Ricci curvature tensor and $\mathcal{R} \equiv \tensor{\mathcal{R}}{^\alpha_\alpha}$ is the Ricci scalar. Exploiting the coordinate independence of Einstein's equations, we work in the harmonic gauge \cite{ChoquetBruhatGRBook}, defined as the wave-like coordinate $x^\mu$ solutions:
\be
\nabla^\alpha\nabla_\alpha x^\mu = 0^\mu.
\ee
This definition implies that harmonic gauge constrains the Christoffel symbols $\tensor{\Gamma}{^\rho_\mu_\nu}$ via the condition
\be
g^{\alpha\beta}\tensor{\Gamma}{^\mu_\alpha_\beta} = 0.
\ee
Using this condition and taking the trace of Einstein's equations (to rewrite the Ricci scalar in terms of matter variables) provides us with
\be
\label{eq:HarmonicEinsteins}
-\frac{1}{2}\partial^\alpha\partial_\alpha g_{\mu\nu}= \mathcal{O}(g,\textbf{\textrm{U}}_M,\partial g)
\ee
where the right side of the equation consists of lower-order derivative terms that do not contribute to the principal part. Note that, together, Eq.~\eqref{eq:HarmonicEinsteins} alongside the augmented system in \eqref{eq:propsystem} constitute a system of 33 mixed-order equations. When considering systems of mixed order, there inherently lies an ambiguity in how one assigns terms to the principal part. Historically, this is handled by assigning relative weights to each term in the equations of motion \cite{LerayOhyaNonlinear}. However, since Einstein's equations contain no coordinate derivatives of the matter variables, and vice-versa, the principal part becomes decoupled, and one can simply treat the coefficients of the first-order derivatives of matter and second-order derivatives as gravity as the sole contributors to the principal part.\footnote{In terms of Leray-Ohya indices, we assign a weight $n_j = 0$ for each of the $33$ equations, $m_{G,\frak{A}} = 2$ for each of the $\frak{A} = 1,2,\dotso,10$ Einstein's equations, and $m_{M,\mathscr{A}} = 1$ to the remaining $\mathscr{A} = 1,2,\dotso,23$ matter equations. Each variable $i$ whose derivative order satisfies $\alpha = m_i - n_j$ in equation $j$ is kept in the principal part, anything with $\alpha < m_i - n_j$ is excluded. The Leray-Ohya indices are correctly constructed if there is it precludes the case where $\alpha > m_i - n_j$ in  equation $i$.} We can then express the total system as
\be
\label{eq:mixedorderEOM}
(\hat{A}(\partial,\partial^2) + \mathbb{B})\textbf{\textrm{U}} = \boldsymbol{0}_{33}.
\ee
This system is a mixed-order quasilinear system of partial differential equations. We once again proceed by the method of characteristics, see Chapter II, Appendix I in \cite{Courant_and_Hilbert_book_2}. As per usual, $\mathbb{B}$ is the $33\times 33$ matrix containing all terms not present in the principal part, which we suppress here. The degrees of freedom are uploaded into the solution vector $\textbf{\textrm{U}} =
\textbf{\textrm{U}}_M\oplus \textbf{\textrm{U}}_G = (u^\nu,\varepsilon,n,\Pi,\pi^{0\nu},\pi^{1\nu},\pi^{2\nu},\pi^{3\nu},g_{\frak{A}})^T\in\mathbb{R}^{33}$ where $g_{\frak{A}}$ is the representative element for the $10$-dimensional vector of independent metric components:
\be
\frak{A} = 1,2,\dotso,10;\quad g_{\frak{A}}\in\{g_{\mu\nu}:\mu \leq \nu\}.
\ee
We write the principal symbol $\hat{A}$ as the matrix differential operator in block form:
\be
\hat{A}(\partial,\partial^2) = 
\begin{bmatrix}
\mathbb{A}_M^\alpha\partial_\alpha& \boldsymbol{0}_{23\times 10}\\
\boldsymbol{0}_{10\times 23}& \mathbb{A}_G^{\alpha\beta}\partial_\alpha\partial_\beta
\end{bmatrix},
\ee
We remark that since we keep all first-order derivative terms in the matter sector, $\mathbb{A}_M^\alpha\phi_\alpha$ remains the same as in Eq.~\eqref{eq:fluidprincipalpart}. Let $\boldsymbol{0}_{M\times N}$ be the null matrix in $\mathbb{R}^{M\times N}$ and $I_{N}$ the $N\times N$ identity. We then introduce the diagonal gravity coefficient matrix:
\be
\mathbb{A}_G^{\alpha\beta} = g^{\alpha\beta}I_{10}
\ee
We divided Einstein's equations by an unimportant overall factor of $-1/2$ for aesthetic purposes. Once again introducing the characteristic vector $\phi^\mu$ provides us with the principal part:
\be
\mathbb{A}(\phi) = 
\begin{bmatrix}
\mathbb{A}_M^\alpha\phi_\alpha& \boldsymbol{0}_{23\times 10}\\
\boldsymbol{0}_{10\times 23}& \mathbb{A}_G^{\alpha\beta}\phi_\alpha\phi_\beta
\end{bmatrix}
\ee
The characteristic determinant is now the product of block matrices:
\be
\det(\mathbb{A}(\phi)) = \det(\mathbb{A}_M^\alpha\phi_\alpha)\det(\mathbb{A}_G^{\alpha\beta}\phi_\alpha\phi_\beta)
\ee
We remark that since the matter portion of the principal part remains identical, so does its determinant. Thus, $ \det(\mathbb{A}_M^\alpha\phi_\alpha)$ is equivalent to Eq.~\eqref{eq:nodiffusiondet}. Since the gravity portion is diagonal, we can immediately read off the determinant. 
\be
\det(\mathbb{A}_G^{\alpha\beta}\phi_\alpha\phi_\beta) = (\phi_\alpha\phi^\alpha)^{10} = v^{20}(\hat{z}^2 - 1)^{10}
\ee
Nonlinear causality of the \emph{entire} mixed system of fluid plus Einstein's equations is immediate. We state this result as follows.
\begin{corollary}[Nonlinear Causality with a Dynamic Metric]\label{cor:causalityeinstein}
Solutins $\textbf{U}$ of the mixed system in Eq.~\eqref{eq:mixedorderEOM} are causal if and only if the matter portion of the principal part $\mathbb{A}_M^\alpha$ satisfies the assumptions and constraints prescribed by Theorem~\ref{thm:causalitynodiffusion}.
\begin{proof}
We already proved the necessary and sufficient conditions for nonlinear causality of the matter sector in Theorem~\ref{thm:causalitynodiffusion}. Thus, it remains to show that $\det(\mathbb{A}_G^{\alpha\beta}\phi_\alpha\phi_\beta) = 0$ satisfies (CI) and (CII) (or to show where it does). Note that $\hat{z}^2 = 1$ is a root with multiplicity $10$, which immediately satisfies the reality condition (CI) as this root is equivalent to $\phi_0^2 = \sum\limits_{i = 1}^3\phi_i^2$ in the local rest frame. Furthermore, (CII) holds since $\phi^\alpha\phi_\alpha = 0$ is null by definition. 
\end{proof}
\end{corollary}
We remark that this corollary implies that Theorem~\ref{thm:causalitynodiffusion} immediately provides nonlinear causality to the more general set of equations (fluid plus Einstein's) \emph{without any extra constraints or assumptions}. 

The standard definitions of strong hyperbolicity (HI) and (HII) are not directly applicable to this mixed-order system, as they are formulated for first-order quasilinear systems. However, the principal part of the coupled matter-gravity system, given in Eq. (68), is block-diagonal. This structure is key to establishing local well-posedness.

Specifically, the matter sector has been proven to be strongly hyperbolic in Theorem \ref{thm:nodiffusionstronghyperbolicity}. The gravity sector, governed by the principal symbol $g^{\mu\nu}\phi_\mu\phi_\nu I_{10}$, decouples into ten independent, identical wave equations $g^{\mu\nu}\partial_\mu\partial_\nu g_{\mathfrak{A}} = \text{(lower-order terms)}$ for each independent metric component $g_{\mathfrak{A}}$. The scalar wave operator $g^{\mu\nu}\partial_\mu\partial_\nu$ is \emph{strictly hyperbolic}. Since the right-hand side of the mater sector involves at most $\partial g_{\mu\nu}$ (coming from Christoffel symbols) and the right-hand side of the Einstein sector involves no derivatives of the matter fields (coming from the energy-momentum tensor), it is immediate to close energy estimates for the coupled system, and the same argument used in the proof of Theorem \ref{T:LWP} applies. We summarize the results below.

\begin{corollary}[Local well-posedness of the Einstein-Israel-Stewart system]
    Theorem \ref{T:LWP} generalizes to the Einstein-Israel system, where it is assumed the initial metric is in $H^{s+1}$ and the initial second fundamental form is in $H^s$ (and, naturally, the Einstein constraint equations are satisfied\footnote{The existence of data satisfying the constraints can be obtained by the methods of \cite{Isenberg-Maxwell-2024}.}).
\end{corollary}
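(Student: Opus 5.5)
The plan is to reduce the statement to Theorem~\ref{T:LWP} by exploiting the block-diagonal structure of the principal symbol. First, reformulate the harmonic-gauge Einstein equations \eqref{eq:HarmonicEinsteins} as a first-order system in the variables $(g_{\mu\nu},\partial_\alpha g_{\mu\nu})$, or keep them as second-order wave equations. Then combine them with the augmented matter system \eqref{eq:propsystem}, in which the metric appears only through $g_{\mu\nu}$ and the Christoffel symbols. The Leray--Ohya weighting already fixed in the text places matter fields in $H^s$ and the metric in $H^{s+1}$, with $\partial_t g\in H^s$. With this weighting, every coupling term is of lower order: $\partial g\in H^s$ enters the matter equations undifferentiated, and $T_{\mu\nu}(\Umb_M,g)\in H^s$ enters the wave equations as a source.

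Next, I would repeat the Schauder-type argument of Appendix~\ref{App:LWPSchauder}. The initial data consist of Sobolev matter data satisfying \eqref{Constraints} and the hypotheses of Theorem~\ref{thm:nodiffusionstronghyperbolicity}, together with $(g,k)\in H^{s+1}\times H^s$ satisfying the Einstein constraints and the harmonic gauge condition. Approximate these by analytic data preserving all three sets of constraints: the fluid constraints, the Einstein constraints, and the gauge condition. The Einstein constraints make this step nontrivial; I would cite \cite{Isenberg-Maxwell-2024} for approximating constraint-satisfying data, and adjust the lapse and shift data so that the gauge condition holds. The initial surface is non-characteristic for the full system, since the determinant factors as $\det(\mathbb{A}_M^\alpha\phi_\alpha)\,(\phi^\alpha\phi_\alpha)^{10}$ and both factors are nonzero for timelike normals (Corollary~\ref{cor:causalityeinstein}). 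Cauchy--Kovalevskaya therefore gives analytic solutions on a common interval.

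On these analytic solutions, three constraint sets must be shown to propagate. For the fluid constraints, the argument of Section~\ref{subsec:propagate} goes through verbatim, since it holds on any sufficiently regular metric. For the gauge condition, the standard argument applies: the contracted Bianchi identity together with $\nabla_\mu T^{\mu\nu}=0$ shows that $\Gamma^\mu\equiv g^{\alpha\beta}\Gamma^\mu_{\alpha\beta}$ satisfies a homogeneous linear wave equation. Its initial value and first time derivative vanish because of the gauge choice and the Einstein constraints, so $\Gamma^\mu\equiv 0$. Divergence-freeness of $T^{\mu\nu}$ does hold here, because the matter equations with the constraints restored are exactly $\nabla_\mu T^{\mu\nu}=0$ and $\nabla_\mu J^\mu=0$. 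Hence the reduced equations solve the actual Einstein equations.

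The last step is the uniform energy estimate. I would take the matter energy from the strong hyperbolicity of Theorem~\ref{thm:nodiffusionstronghyperbolicity}, i.e.\ a symmetrizer built from the complete eigenvector basis depending smoothly on $\Umb$ and $g$. I would take the gravity energy from the standard $H^{s+1}$ wave energy for $g^{\alpha\beta}\partial_\alpha\partial_\beta$. Summing the two and using Moser-type product estimates (valid since $s>7/2$) should close a Gr\"onwall bound uniform in the approximation index. The main obstacle is that the matter symmetrizer depends on $g$, so its time derivative contains $\partial g$. This is controlled in $L^\infty$ by $H^{s+1}$ via Sobolev embedding, but the derivative commutators must be checked to avoid $s+1$ derivatives of $g$ in the matter estimate, which the weighting should exclude. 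Convergence in $H^{s-1}$ and uniqueness then follow exactly as in Theorem~\ref{T:LWP}.
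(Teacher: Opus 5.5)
Your skeleton is the paper's. It has a block-diagonal principal symbol, cross-couplings that are lower order once $g$ is carried one derivative above the matter, and a rerun of the analytic-approximation and energy-estimate argument of Theorem~\ref{T:LWP}. You go beyond the paper by propagating the harmonic gauge via the Bianchi identity plus $\nabla_\mu T^{\mu\nu}=0$, correctly invoked only after \eqref{Constraints} are propagated. The paper leaves this implicit, but it is what turns a solution of the reduced system into a solution of Einstein's equations. On your ``main obstacle'': $s+1$ derivatives of $g$ \emph{do} enter the matter estimate, when $\partial^s$ hits the Christoffel terms in $\mathbb{B}_M$. They enter only in $L^2$, which is exactly what the $H^{s+1}$ wave energy controls. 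Outside the principal terms, nothing with $s+2$ derivatives of $g$ or $s+1$ derivatives of $\Umb_M$ occurs, so the coupled estimate closes as claimed.

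The step that is not justified as written is requiring the analytic approximants to satisfy the Einstein constraints. The paper invokes \cite{Isenberg-Maxwell-2024} only for the \emph{existence} of constraint-satisfying data. You would instead need a density statement: every $H^{s+1}\times H^s\times H^s$ solution of the coupled Einstein and fluid constraints is a limit of analytic ones. That does not follow from existence. It would require, for example, pointwise inversion of the map from $(\varepsilon,u^i,\Pi,\pi^{\mu\nu})$ to the energy and momentum densities, or a conformal-method correction with invertible linearization.

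The requirement is also unnecessary. Neither the Einstein constraints nor the gauge condition is needed to solve the harmonic-gauge reduced system. Section~\ref{subsec:propagate} uses only the matter equations on whatever Lorentzian metric is present. So, exactly as in Theorem~\ref{T:LWP}, impose on the approximants only the hypotheses of Theorem~\ref{thm:nodiffusionstronghyperbolicity} and \eqref{Constraints}, now with respect to $g_n$. To enforce the latter, normalize $u_n$ with $g_n$ and apply the traceless symmetric projector built from $g_n,u_n$; both are analytic operations that preserve $H^s$ convergence.

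Then run your Bianchi argument once, on the limit solution with the original data. For $s>7/2$, trading time derivatives for spatial ones through the equations gives $g\in C^3$ and $\Umb_M\in C^2$. Hence $\Gamma^\mu$ is a classical $C^2$ solution of a homogeneous linear wave equation with vanishing data, and so vanishes identically.
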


\begin{corollary}[Nonlinear Strong Hyperbolicity with a Dynamic Metric]\label{cor:SHeinstein}
If the assumptions and constraints of Theorem~\ref{thm:nodiffusionstronghyperbolicity} are satisfied by the matter portion of the principal part, then the mixed system in Eq.~\eqref{eq:mixedorderEOM} is strongly hyperbolic.
\begin{proof}
In \cite{Fischer_Marsden_Einstein_FOSH_1972}, it is shown that Einstein's equations may be cast as a first-order \emph{symmetric} hyperbolic system of equations \cite{ChoquetBruhatGRBook}. This statement paired with Theorem~\ref{thm:nodiffusionstronghyperbolicity} shows that (HI) immediately holds for the composite system. 

To prove that (HII) remains true, note that the total principal part is manifestly block-diagonal, with one block \emph{identical} to the original fluid principal part in Eq.~\eqref{eq:fluidprincipalpart}, and the other corresponding to Einstein's equations. Thus, their eigenbases are entirely independent. Since Einstein's equations were shown to be symmetric hyperbolic, and Theorem~\ref{thm:nodiffusionstronghyperbolicity} provides strong hyperbolicity of the fluid block, the two linearly-independent eigenbases form a new composite eigenbasis for the entire system, satisfying (HII). See Appendix~\ref{App:hyperbolicity} for a more explicit proof of this argument. 
\end{proof}
\end{corollary}

\section{Conclusions and Outlook}
\label{sec:conclusions}

Transient relativistic hydrodynamic theories were put forward by Israel 50 years ago \cite{MIS-2} as an attempt to provide a consistent formulation of viscous fluid dynamics in general relativity. 
This paper resolves the corresponding decades-old problem regarding the nonlinear causality and local well-posedness of strong solutions for a general class of Israel-Stewart-type theories of relativistic hydrodynamics. We have presented the first set of simultaneously necessary and sufficient constraints that govern nonlinear causality for this large class of theories, which include the equations derived from kinetic theory \cite{MIS-6,Denicol:2012cn,Rocha:2023ilf} and other approaches \cite{Baier:2007ix}, that are abundantly used in simulations of the quark-gluon plasma. Our results provide a definitive tool for numerical simulations of such relativistic hydrodynamic theories in applications in both high-energy nuclear physics and astrophysics, which ensures they operate within their causal regime, eliminating the guesswork associated with previously incomplete constraints.

We used the standard (fully nonlinear) definition of causality from general relativity, which requires that solutions are only influenced by perturbations within their past light cone. Consequently, our constraints simultaneously restrict all dynamical information in the theory---including the degrees of freedom, transport coefficients, and viscous fluxes. We further argued that this comprehensive restriction is generally unattainable by energy conditions or linear causality analyses. For instance, the dominant energy condition depends only on the energy-momentum tensor's matter content and is blind to the evolution equations (e.g., transport coefficients). Conversely, linear constraints depend solely on equilibrium properties and contain no information about the magnitude of viscous fluxes.

The generality of our results ensures their wide applicability. In fact, the generality of our results is rooted in a framework that allows all transport coefficients to depend on the viscous fluxes $\Pi$ and $\pi^{\mu\nu}$, which defines a large class of theories of transient relativistic fluid dynamics. It is within this general context that our constraints are derived, meaning that previous formulations \cite{MIS-6,Denicol:2012cn} are particular subclasses of the theories we cover. Our results are also directly relevant to astrophysics, as they incorporate a dynamic baryon current and a dynamic spacetime metric---both essential for modeling phenomena like neutron star mergers and black hole accretion (after including the effects from electromagnetic fields). Furthermore, our constraints are agnostic to the equation of state and make no symmetry assumptions on the underlying geometry. We also demonstrate that they reproduce and generalize previous results for bulk viscosity \cite{Bemfica:2019cop} and the zero-baryon/non-dynamical metric case \cite{Bemfica:2020xym}. 

Furthermore, we prove for the first time that the causal region of these theories is also strongly hyperbolic, guaranteeing locally well-posed solutions of Sobolev type. This property allows us to further prove that the solutions propagate relevant physical symmetries---for instance, the normalization of the four-velocity is preserved for the lifetime of the solution, provided it is imposed on suitable initial data. This is a critical result that is crucial for numerical simulations, as it ensures that these IS-type theories not only have locally well-posed solutions within a definite causal regime but also maintain their fundamental physical properties throughout their evolution, a behavior that is not automatically guaranteed for nonlinear systems of PDEs.

A key practical outcome of our work is that these constraints simplify drastically for a generalized extension of Maxwell-Cattaneo theory. This case provides a significantly simpler set of conditions that retain most of the transport coefficients, which are still arbitrary functions of the fluid variables and dissipative currents. We expect this new generalized Maxwell-Cattaneo theory proposed here to be particularly useful for implementing robust causality constraints in current and future simulations, largely due to the removal of their angular dependence. Therefore, the generalized Maxwell-Cattaneo theory put forward in this work paves the way for the next generation of state-of-the-art causality-respecting numerical simulations of relativistic viscous fluids with shear and bulk viscosity in the baryon-rich regime not only in heavy-ion collisions, but also in astrophysics or cosmology where coupling with Einstein's equations becomes important.

This work opens up several promising avenues for future research in both the physics and the mathematics of relativistic viscous dynamics. A primary open challenge is a complete understanding of diffusion in the sense investigated here. Currently, necessary and sufficient causality constraints for phenomenological Israel-Stewart theory are only known for restricted cases: for particle number diffusion in the Landau frame and for heat diffusion in the Eckart frame \cite{cordeiro2025diffusion} (both cases without bulk or shear viscosity). The fundamental difficulty in adding a vector current (e.g., for number or heat diffusion) is that it introduces odd-powered terms of the form $\hat{z}^{2j+1}$ into the characteristic determinant. Since the polynomial $\mathcal{P}_3$ in Eq.~\eqref{eq:nodiffusiondet} is of order 3 in $\hat{z}^2$ (i.e., equivalent to $\hat{z}^6$), the inclusion of diffusion would couple the dynamics, raising the polynomial's overall order in $\hat{z}$ beyond 5 and introducing non-trivial odd powers. This structure makes an analytical solution via root-finding impossible in general. Therefore, determining the full causal region for the complete theory \cite{Betz:2008me}---including all viscous currents---remains a significant open problem. 

An equally crucial next step is to determine the causal region upon coupling to magnetohydrodynamics (MHD), and, more ambitiously, to resistive MHD. This would be particularly impactful for modeling black hole accretion disks \cite{Chandra:2015iza,Most:2021rhr,Cordeiro:2023ljz}, with potential further applications in both heavy-ion collisions and neutron star physics. For phenomenological Israel-Stewart theory, this has been achieved thus far only in the strong magnetic field limit with shear viscosity \cite{Cordeiro:2023ljz}; a general treatment remains an open challenge. Finally, while we focused on the Landau hydrodynamic frame in this work, our results can be generalized to consider a wider class of different hydrodynamic frames (see \cite{Noronha:2021syv}). We leave this extension to future work.

The recent progress in our understanding of the physics and the mathematics of relativistic fluids in the nonlinear regime through the last decade has arguably impacted also other areas of theoretical physics, such as for instance the investigation of local well-posedness of generalizations of Einstein's theory of general relativity, see for instance \cite{Lehner:2017yes,Kovacs:2020pns,Kovacs:2020ywu,Figueras:2024bba,Figueras:2025wtx,Gavassino:2026xjw}. It would be interesting to see how the solution of the nonlinear causality and well-posedness questions in Israel-Stewart-like theories reported in this work could be useful to motivate new developments in investigations of local well-posedness and the initial value problem in extensions of general relativity.     

In summary, this paper has rigorously established the most general causality and well-posedness bounds for a large class of Israel-Stewart-type theories of viscous relativistic hydrodynamics, encompassing both traditional phenomenological IS theory and modern formulations derived from kinetic theory and other approaches. Our work thereby affirms the nonlinear validity of a broad class of second-order theories, going significantly beyond simple linear analyses \cite{Hiscock_Lindblom_stability_1983,Olson:1989ey}. We reaffirm that these constraints are instrumental not only in selecting physically realistic initial conditions but also in verifying that the evolution remains within the physical, causal regime throughout simulations. Therefore, we expect that the work presented here provides a new crucial tool for determining the regime of validity of relativistic fluid dynamics and ensuring the reliability of numerical simulations of relativistic viscous fluids across a diverse range of fields, including astrophysics, cosmology, and high-energy nuclear physics.

\acknowledgements

We thank L.~Gavassino for the helpful suggestions that simplified the nonlinear causality constraints in Section~\ref{sec:MC}.  JN and IC are partly supported by the U.S. Department of Energy, Office of Science, Office for Nuclear Physics under Award No. DE-SC0023861. E.S. was supported by the Rita Levi Montalcini Program of the Italian Ministry of University and Research. MMD is partially supported by NSF grants DMS-2406870 and NRT-2125764, and from DOE grant DE-SC0024711.
This research was partly supported by the National Science Foundation under Grant No. NSF PHY-1748958 and NSF PHYS-2316630. Any opinions, findings, and conclusions or recommendations expressed in this material are those of the author(s) and do not necessarily reflect the views of the National Science Foundation.

\def\cprime{$'$}

\newpage

\appendix

\section{Proof of Causality}\label{App:causality}

We recall the definition of causality from Definition \ref{def:causality}. This can be recast into the following algebraic conditions \cite[Appendix B]{Disconzi-2024}: $(\mathbb{A}^\alpha\partial_\alpha + \mathbb{B})\Umb = 0$ is \textit{causal} if (CI) the roots of the characteristic equation $\det(\mathbb{A}^\alpha\phi_\alpha) = 0$, given by $\phi_0 \equiv \phi_0(\phi_i)$ are real, where $\phi^\mu \equiv \nabla^\mu\varphi$, and $\{\varphi(x) = 0\}$ are the characteristic hypersurfaces and (CII) $\phi^\mu$ is non-timelike, i.e. $\phi^\al\phi_\al \geq 0$ \cite{ChoquetBruhatGRBook}. Recall that the determinant of the principal part of Israel-Stewart from kinetic theory with bulk, shear viscosity and a baryon current takes the form
\bml
\bea
\label{supeq:nodiffusiondet}
\frac{\det(\mathbb{A}^\alpha\phi_\alpha)}{v^{23}} &=& \left[\prod_{k = 0}^3(E + \Lambda_k)\right]\tau_\Pi\tau_\pi^{16}\hat{z}^{15}\left[\hat{z}^2 -\frac{\frac{1}{2} (2\eta + \lambda_{\pi\Pi}\Pi)+\frac{1}{4} \tau_{\pi\pi}\Lambda_{\hat{v}^2}}{\tau_\pi E}\right]\mathcal{P}_3(\hat{z}^2,\chi^2,\kappa^2),\\
\mathcal{P}_3(\hat{z}^2,\chi^2,\kappa^2) &=& \hat{z}^6 - \mathcal{A}_2(\chi^2,\kappa^2)\hat{z}^4 + \mathcal{A}_1(\chi^2,\kappa^2)\hat{z}^2 -\mathcal{A}_0(\chi^2,\kappa^2),\\
\Lambda_{\hat{v}^2} &\equiv& \sum_{k = 1}^3 \Lambda_k \hat{v}_k^2(\chi^2,\kappa^2),
\eea
\eml
Suppose that $\tau_\Pi,\tau_\pi,E+\Lambda_k\neq 0$ for all $k = 0,1,2,3$. Let $\phi^\mu$ be such that $\det(\mathbb{A}^\al\phi_\al) = 0$ (i.e. $\phi^\mu = \nabla^\mu\varphi$ for characteristic hypersurface $\varphi(x)$). Denote $\Delta(\mathcal{P}_3;\chi^2,\kappa^2)$ as the discriminant of $\mathcal{P}_3$ defined by
\begin{align}
\label{supeq:discriminant}
\Delta(\mathcal{P}_3;\chi^2,\kappa^2) \equiv& - 27\left[2\left(\frac{\mathcal{A}_2(\chi^2,\kappa^2)}{3}\right)^3 - \frac{\mathcal{A}_2(\chi^2,\kappa^2)}{3}\mathcal{A}_1(\chi^2,\kappa^2) +\mathcal{A}_0(\chi^2,\kappa^2)\right]^2\notag\\
&-4\left[\mathcal{A}_1(\chi^2,\kappa^2) - 3\left(\frac{\mathcal{A}_2(\chi^2,\kappa^2)}{3}\right)^2\right]^3.
\end{align}
We wish to show that, for all $\chi^2,\kappa^2\in[0,1]$, the following bounds are necessary and sufficient for causality:
\bml
\bea
\label{supeq:C1}
0&\leq &\Delta(\mathcal{P}_3;\chi^2,\kappa^2),\\
\label{supeq:C2}
0 &\leq & \frac{\mathcal{A}_2(\chi^2,\kappa^2)}{3} \leq 1,\quad 0\leq \mathcal{A}_1(\chi^2,\kappa^2),\quad 0\leq \mathcal{A}_0(\chi^2,\kappa^2)\\
\label{supeq:C3}
0 &\leq & 1-\mathcal{A}_2(\chi^2,\kappa^2) + \mathcal{A}_1(\chi^2,\kappa^2) - \mathcal{A}_0(\chi^2,\kappa^2) ,\\
\label{supeq:C4}
0 &\leq & 1-\frac{2}{3}\mathcal{A}_2(\chi^2,\kappa^2) + \frac{1}{3}\mathcal{A}_1(\chi^2,\kappa^2) ,\\
\label{supeq:C5}
0 &\leq & y_0(\chi^2,\kappa^2)\equiv \frac{\frac{1}{2} (2\eta + \lambda_{\pi\Pi}\Pi)+\frac{1}{4} \tau_{\pi\pi}\Lambda_{\hat{v}^2}}{\tau_\pi E} \leq 1.
\eea
\eml
Define $y_k(\chi^2,\kappa^2)$ as the roots of $\mathcal{P}_3$.\\

\noindent\textbf{Step 1:} We first wish to show that, for all $\chi^2,\kappa^2\in[0,1]$ that the conditions prescribed by
\be
\Omega_1 \equiv \{\Delta(\mathcal{P}_3;\chi^2,\kappa^2)\geq 0\}\quad\cup\quad\{0\leq y_k(\chi^2,\kappa^2)\leq 1\}_{k = 0}^3,\\
\ee
are both necessary and sufficient for causality of the system. We proceed as follows: the conditions in $\Omega_1$ are built from Lorentz scalars and are thus invariant on a proper choice of frame. Using Eq.~\eqref{supeq:nodiffusiondet}, one can identify the roots $\phi_0 = \phi_0(\phi_i)$ from $\det(\mathbb{A}^\al\phi_\al) = 0$ by rotating to the local rest frame ($g^{\mu\nu}\rightarrow \eta^{\mu\nu}$ and $u^\mu = (1,0,0,0)$ w.l.o.g.) such that $z = u^\al \phi_\al\rightarrow \phi_0$ and $v^2 = \phi_\al\phi_\beta\Delta^{\al\beta} \rightarrow \vec{\phi}^2$ where $\vec{\phi}^2 \equiv \sum\limits_{i = 1}^3\phi_i^2$. In this frame, the roots are then $\phi_0^2 = y_k\phi_i^2$, or $\phi_0^2 = 0$. In the latter case, the roots trivially satisfy conditions (CI) and (CII), so we will focus on the former roots.

The condition on the discriminant $\Delta \geq 0$ guarantees that the roots $\phi_0^2 = y_a\vec{\phi}^2$ for $a = 1,2,3$ are real due to the properties of the cubic discriminant. Furthermore, the condition that $y_k\geq 0$ for $k = 0,1,2,3$ is needed to guarantee that $\phi_0 = \pm\sqrt{y_k}|\vec{\phi}|\in\mathbb{R}$. Thus, we find that conditions in $\Omega_1$ satisfy condition (CI).

Finally, (CII) requires $\phi_\al\phi^\al \geq 0$ ($\phi_0^2\leq \vec{\phi}^2$ in the local rest frame), which then enforces the conditions that
\begin{equation}
y_k\leq 1,\qquad k = 0,1,2,3.
\end{equation}
Therefore, $\Omega_1$ consists of necessary and sufficient conditions for causality.\\

\noindent\textbf{Step 2:} The next step we wish to show is that Eq.~\eqref{supeq:C5} combined with the bounds in Eq.~\eqref{supeq:C1}--\eqref{supeq:C4} are necessary and sufficient conditions for those provided in Eq.~\eqref{supeq:C1}--\eqref{supeq:C2}. The condition given in Eq.~\eqref{supeq:C5} is identical to the condition $0\leq y_0\leq 1$, so we will henceforth only consider the roots $y_1,y_2$ and $y_3$ corresponding to the cubic $\mathcal{P}_3(y)$. Consider the set of conditions
\bml
\bea
\Omega_2 &\equiv & \{\Delta \geq 0,\quad \mathcal{A}_0,\mathcal{A}_1,\mathcal{A}_2\geq 0\},\\
\Omega_3 &\equiv & \left\{\Delta \geq 0,\quad 1-\mathcal{A}_2 + \mathcal{A}_1 -\mathcal{A}_0\geq 0,\quad 3 - 2\mathcal{A}_2+\mathcal{A}_1\geq 0,\quad \frac{\mathcal{A}_2}{3}\leq 1\right\}
\eea
\eml
We wish to show that $\Omega_2\cup\Omega_3$ for all $\chi^2,\kappa^2\in[0,1]$ provides a set of necessary and sufficient conditions for $\Omega_1$ (along with $0\leq y_0\leq 1$). We start by showing that $\Omega_2$ is necessary and sufficient for the nonnegativity of the roots, and then following by showing that $\Omega_3$ is necessary and sufficient for bounding the roots by unity. Consider writing $\mathcal{P}_3$ in two ways:
\be
(y - y_1)(y - y_2)(y - y_3) = \mathcal{P}_3(y) = y^3  -\mathcal{A}_2y^2 + \mathcal{A}_1y -\mathcal{A}_0.
\ee
Expanding the left-hand side and rewriting provides
\bml
\bea
\label{supeq:A0root}
\mathcal{A}_0 &=& y_1y_2y_3,\\
\label{supeq:A1root}
\mathcal{A}_1 &=& y_1y_2 + y_2y_3 + y_3y_1,\\
\label{supeq:A2root}
\mathcal{A}_2 &=& y_1 + y_2 + y_3.
\eea
\eml
Assume that $y_k\geq 0$ for all $k = 1,2,3$. As $\Delta_{\mathcal{P}_3}\geq 0$ guarantees the roots exist in $\mathbb{R}$, it is then straightforward to see that $\mathcal{A}_0,\mathcal{A}_1,\mathcal{A}_2\geq 0$ from the nonnegativity of the roots. This shows that $\Omega_2$ is necessary for the nonnegativity of the roots.\\

\noindent To show that these conditions are sufficient, assume instead that $\Omega_2$ holds. The condition $\Delta  \geq 0$ again guarantees that these roots are real. As such, assume now by contradiction that there exists some $y_i<0$. Without loss of generality, we say that $i = 3$ by symmetry. The other cases hold identically since the coefficients are symmetric under exchange of indices. Since $\mathcal{A}_0 \geq 0$ and $y_3 < 0$, we find that $y_1y_2 \leq 0$. Furthermore, since $\mathcal{A}_1\geq 0$, we have
\begin{align}
(y_1 + y_2)y_3 \geq -y_1y_2 \geq 0.
\end{align}
However, since $y_3 < 0$ we have $y_1 + y_2 \leq 0$. Furthermore, note that $ \mathcal{A}_2 \geq 0$, so
\begin{align}
0 \leq  \mathcal{A}_2 = y_1 + y_2 + y_3 \leq y_3.
\end{align}
Thus, $y_3 \geq 0$. This statement contradicts our assumption, which proves that instead, it must be that $y_1,y_2,y_3 \geq 0$. Therefore, we have shown that $\Omega_2$ consists of necessary and sufficient conditions for the non-negativity of the roots.\\

\noindent Next, we must show that $\Omega_3$ consists of necessary and sufficient conditions for the roots to be bounded above by unity. Define $b = 1 - y$. Therefore, the condition that $b\geq 0$ is equivalent to $y \leq 1$. Rearranging, we know that $y = 1 - b$, meaning that our cubic can be expressed as
\begin{align}
\mathcal{P}_3(y = 1 - b) \equiv \mathcal{P}_3(b) &= (1 - b)^3  -\mathcal{A}_2(1 - b)^2 + \mathcal{A}_1(1 - b) -\mathcal{A}_0,\notag\\
&= -b^3 + (-\mathcal{A}_2 + 3)b^2 - (3-2\mathcal{A}_2 + \mathcal{A}_1)b + (-\mathcal{A}_2 + \mathcal{A}_1 -\mathcal{A}_0 + 1).
\end{align}
Since $\Delta\geq 0$ guarantees the existence of the roots $y_1,y_2,y_3\in\mathbb{R}$, so too are the roots $b_1,b_2,b_3\in\mathbb{R}$ (the real numbers are a closed field, and $y_i = 1 - b_i$ for all $i = 1,2,3$) note that we also have
\be
\mathcal{P}_3(b) = -(b - b_1)(b - b_2)(b - b_3).
\ee
Upon equating and expanding as before, we get the Vieta's Formulae for a cubic polynomial
\bml
\bea
3-\mathcal{A}_2 &=& b_1 + b_2 + b_3,\\
3 - 2\mathcal{A}_2 + \mathcal{A}_1  &=& b_1b_2 + b_2b_3 + b_3b_1,\\
1-\mathcal{A}_2 + \mathcal{A}_1 -\mathcal{A}_0 &=& b_1b_2b_3.
\eea
\eml
This is the exact same list of conditions given in Eqs.~\eqref{supeq:A2root}--\eqref{supeq:A0root}, except with the substitutions $\{\mathcal{A}_2,\mathcal{A}_1,\mathcal{A}_0\}\longrightarrow\{3-\mathcal{A}_2,3-2\mathcal{A}_2 + \mathcal{A}_1, 1-\mathcal{A}_2 + \mathcal{A}_1 -\mathcal{A}_0\}$ and $\{y_1,y_2,y_3\}\longrightarrow\{b_1,b_2,b_3\}$. Therefore, an analogous proof applies to showing that the roots $b_1,b_2,b_3$ satisfy (1) existence in $\mathbb{R}$ and (2) are nonnegative applies, identically to that of the proof that $y_1,y_2,y_3\geq 0$, except instead with the conditions provided in $\Omega_3$ (which are algebraic rearrangements of the new effective coefficients in terms of $b$ instead of $y$). Thus, $\Omega_3$ provides necessary and sufficient conditions for the condition $y_k\leq 1$.\\

\noindent In summary, we find that $\forall\chi^2,\kappa^2\in[0,1]$, $\Omega_2\cup\Omega_3$ are necessary and sufficient conditions for causality of the system, as they are necessary and sufficient for $\Omega_1$. This result proves the claim.

\section{Proof of strong hyperbolicity}\label{App:hyperbolicity}
We recall the definition of strong hyperbolicity from Definition \ref{D:Strong_hyperbolicity}:
we say a quasilinear system $(\mathbb{A}^\alpha\partial_\alpha + \mathbb{B})\Umb = \boldsymbol{0}$ is \emph{strongly hyperbolic} if, given some time-like vector $\xi^\mu$
\begin{enumerate}[label=(H\Roman*),nosep]
\item $\det(\mathbb A^\alpha \xi_\alpha) \not= 0$, and
\item for any space-like vector $\zeta^\mu$, the solutions of the eigenvalue equation $(\Lambda \xi_\alpha + \zeta_\alpha)\mathbb A^\alpha \mathbf r = \mathbf 0$ exist for $\Lambda\in\mathbb{R}$ and the right eigenvectors $\mathbf{r}$ span a complete basis. 
\end{enumerate}
Let $\tau_\Pi,\tau_\pi\neq 0$ and $E + \Lambda_A>0$ for all $A = 0,1,2,3$, with $\Delta(\mathcal{P}_3;\chi^2,\kappa^2)$ be the discriminant of $\mathcal{P}_3$ defined as in Eq.~\eqref{supeq:discriminant}. We wish to prove that $\forall\chi^2,\kappa^2\in[0,1]$, the following set of inequalities are sufficient for strong hyperbolicity:
\bml
\bea
\label{supeq:H1}
0& < &\Delta(\mathcal{P}_3;\chi^2,\kappa^2),\\
\label{supeq:H2}
0 &<& \frac{\mathcal{A}_2(\chi^2,\kappa^2)}{3} \leq 1,\quad 0< \mathcal{A}_1(\chi^2,\kappa^2),\quad 0 < \mathcal{A}_0(\chi^2,\kappa^2)\\
\label{supeq:H3}
0 &\leq & 1-\mathcal{A}_2(\chi^2,\kappa^2) + \mathcal{A}_1(\chi^2,\kappa^2) - \mathcal{A}_0(\chi^2,\kappa^2) ,\\
\label{supeq:H4}
0 &\leq & 1-\frac{2}{3}\mathcal{A}_2(\chi^2,\kappa^2) + \frac{1}{3}\mathcal{A}_1(\chi^2,\kappa^2) ,\\
\label{supeq:H5}
0 &< & \frac{\frac{1}{2} (2\eta + \lambda_{\pi\Pi}\Pi)+\frac{1}{4} \tau_{\pi\pi}\Lambda_{\hat{v}^2}}{\tau_\pi E} \leq 1.
\eea
\eml

\noindent\textbf{Step 1:} We begin the proof by first showing that the set of conditions prescribed in $\Omega_1$ are sufficient conditions for the strong hyperbolicity of the system:
\be
\Omega_1 \equiv \{\Delta(\chi^2,\kappa^2) > 0\}\quad\cup\quad\{0 < y_k(\chi^2,\kappa^2)\leq 1\}_{k = 0}^3,\\
\ee
Note that (HI) immediately holds since it is the contrapositive of (CI), which we showed in the proof of causality. To show that the conditions satisfy (HII), we need to (a) show that the eigenvalues of the equation
\be
\label{supeq:eigenvalue}
\det(\mathbb{A}^\alpha\Xi_\alpha^{(a)}) = 0,
\ee
where $\Xi_\alpha^{(a)} = \Lambda^{(a)}\xi_\alpha + \zeta_\alpha$ are real, i.e. $\forall a=z,0,1,2,3$, $\Lambda^{(a)}\in\mathbb{R}$, and (b) all right eigenvectors $\textrm{\textbf{r}}\in\mathbb{R}^{23}$ form a complete basis. We introduce the notation $z^{(a)} = u^\alpha\Xi_\alpha^{(a)}$ and $(v^{(a)})^\mu = \Xi_\alpha^{(a)}\Delta^{\alpha\mu}$.

Let us focus on showing that (a) holds. For this part, we only need the less-stringent causality conditions from Eq.~\eqref{supeq:C1}-\eqref{supeq:C2} rather than \eqref{supeq:H1}-\eqref{supeq:H2}. We define $\Lambda^{(a)}$ for $a\neq z$ as the solutions of
\be
\left(u^\alpha\Xi_\alpha^{(a)}\right)^2 - y_a\Xi_\alpha^{(a)}\Xi_\beta^{(a)}\Delta^{\alpha\beta} = 0,
\ee
which immediately satisfies Eq.~\eqref{supeq:eigenvalue}. For the roots $a = z$, then this implies that
\be
\Lambda^{(z)} = -\frac{u^\alpha\zeta_\alpha}{u^\beta\xi_\beta}\in\mathbb{R}
\ee
with multiplicity 2. Since $\xi_\alpha$ and $u^\alpha$ are timelike, $u^\alpha\xi_\alpha\neq 0$. If $a\neq z$, then one finds
\be
\Lambda_\pm^{(a)} = \frac{y_a\xi_\alpha\zeta_\beta\Delta^{\alpha\beta}-\left(u^\alpha\xi_\alpha\right)\left(u^\beta\zeta_\beta\right)\pm\sqrt{\mathcal{Z}_a}}{\left(u^\alpha\xi_\alpha\right)^2-y_a\xi_\alpha\xi_\beta\Delta^{\alpha\beta}},
\ee
with
\begin{align}
\mathcal{Z}_a = \left[\left(u^\alpha\xi_\alpha\right)\left(u^\beta\zeta_\beta\right)y_a\xi_\alpha\zeta_\beta\Delta^{\alpha\beta}\right]^2 -\left[\left(u^\alpha\xi_\alpha\right)^2-y_a\xi_\alpha\xi_\beta\Delta^{\alpha\beta}\right]\left[\left(u^\alpha\zeta_\alpha\right)^2-y_a\zeta_\alpha\zeta_\beta\Delta^{\alpha\beta}\right]
\end{align}
where the $\pm$ sign stands for the two possible solutions for each $a = 0,1,2,3$. Notice that these roots (at the very least) exist in $\mathbb{C}$ since the denominator satisfies 
\begin{align}
\left(u^\alpha\xi_\alpha\right)^2-y_a\xi_\alpha\xi_\beta\Delta^{\alpha\beta} &= (1 - y_a)\left(u^\alpha\xi_\alpha\right)^2-y_a\xi_\alpha\xi^\alpha,\notag\\
&\geq -\xi^\alpha\xi_\alpha > 0,
\end{align}
which holds by the causality conditions from Eq.~\eqref{supeq:C1}-\eqref{supeq:C2} ($y_a\leq 1$) and the fact that $\xi^\alpha\xi_\alpha < 0$. Furthermore, the roots are real since the discriminant is nonnegative. Namely: 
\begin{align}
\mathcal{Z}_a &= y_a\left[\left(u^\alpha\xi_\alpha\right)^2\zeta_\alpha\zeta_\beta\Delta^{\alpha\beta} -2\left(u^\alpha\xi_\alpha\right)\left(u^\beta\zeta_\beta\right)\xi_\alpha\zeta_\beta\Delta^{\alpha\beta}+ \xi_\alpha\xi_\beta\Delta^{\alpha\beta}\left(u^\alpha\zeta_\alpha\right)^2\right]\notag\\
&\qquad - y_a^2\left[\left(\xi_\alpha\xi_\beta\Delta^{\alpha\beta}\right)\left(\zeta_\alpha\zeta_\beta\Delta^{\alpha\beta}\right)-\left(\xi_\alpha\zeta_\beta\Delta^{\alpha\beta}\right)^2\right].
\end{align}
As $\Delta^{\mu\nu}$ defines an inner product, the Cauchy-Schwarz inequality gives us 
\be
\left(\xi_\alpha\xi_\beta\Delta^{\alpha\beta}\right)\left(\zeta_\alpha\zeta_\beta\Delta^{\alpha\beta}\right)-\left(\xi_\alpha\zeta_\beta\Delta^{\alpha\beta}\right)^2 \geq 0.
\ee
Also, since $0\leq y_a\leq 1$ from the causality conditions, $y_a^2\leq y_a$, so
\begin{align}
\mathcal{Z}_a &\geq y_a\bigg[\left(u^\alpha\xi_\alpha\right)^2\zeta_\alpha\zeta_\beta\Delta^{\alpha\beta} -2\left(u^\alpha\xi_\alpha\right)\left(u^\beta\zeta_\beta\right)\xi_\alpha\zeta_\beta\Delta^{\alpha\beta}+ \xi_\alpha\xi_\beta\Delta^{\alpha\beta}\left(u^\alpha\zeta_\alpha\right)^2\notag\\
&\qquad -\left(\xi_\alpha\xi_\beta\Delta^{\alpha\beta}\right)\left(\zeta_\alpha\zeta_\beta\Delta^{\alpha\beta}\right)+\left(\xi_\alpha\zeta_\beta\Delta^{\alpha\beta}\right)^2\bigg],\notag\\
&= y_a\left[\xi_\alpha\zeta_\beta\Delta^{\alpha\beta}-\left(u^\alpha\xi_\alpha\right)\left(u^\beta\zeta_\beta\right)\right]^2 + y_a\left(-\xi^\alpha\xi_\alpha\right)\left(\zeta^\alpha\zeta_\alpha\right) > 0
\end{align}
which follows from the fact that $\xi^\alpha\xi_\alpha <0$ and $\zeta^\alpha\zeta_\alpha > 0$. Thus, $\Lambda_\pm^{(a)} \in\mathbb{R}$, which shows that (a) is satisfied. Now, let us show that the right eigenvectors span a basis of $\mathbb{R}^{23}$. For the remainder of the proof, we shall impose the stronger conditions from Eq.~\eqref{supeq:H1}-\eqref{supeq:H2} for reasons that will be apparent momentarily.

\begin{description}
\item[$\Lambda^{(z)}$] Consider the case where $z^{(z)} = 0$, which has multiplicity 15 (meaning that $\Lambda^{(z)}$ also does). The principal part may be expressed as
\begin{align}
\mathbb{A}^\alpha\Xi_\alpha^{(z)} &=
\begin{bmatrix}
E\Xi_\nu^{(z)} + w_\nu^{(z)}& 0& 0 & 0 & 0_\nu& 0_\nu& 0_\nu& 0_\nu\\
n\Xi_\nu^{(z)}& 0 & 0& 0 & 0_\nu& 0_\nu& 0_\nu& 0_\nu\\
-u^\mu w_\nu^{(z)}& (v^\mu)^{(z)} P_\varepsilon& (v^\mu)^{(z)} P_n& (v^\mu)^{(z)}& \tensor{g}{^\mu_\nu}\Xi_0^{(z)}& \tensor{g}{^\mu_\nu}\Xi_1^{(z)}& \tensor{g}{^\mu_\nu}\Xi_2^{(z)}& \tensor{g}{^\mu_\nu}\Xi_3^{(z)}\\
(\zeta + \delta_{\Pi\Pi}\Pi)\Xi_\nu^{(z)} + \lambda_{\Pi\pi} w_\nu^{(z)}& 0& 0& 0& 0_\nu& 0_\nu& 0_\nu& 0_\nu\\
(\tensor{\mathcal{C}}{_\pi^{0\mu}_\nu})^{(z)}& 0^{\mu}& 0^{\mu}& 0^{\mu}& \tensor{0}{^\mu_\nu}& \tensor{0}{^\mu_\nu}& \tensor{0}{^\mu_\nu}& \tensor{0}{^\mu_\nu}\\
(\tensor{\mathcal{C}}{_\pi^{1\mu}_\nu})^{(z)}& 0^{\mu}& 0^{\mu}& 0^{\mu}& \tensor{0}{^\mu_\nu}& \tensor{0}{^\mu_\nu}& \tensor{0}{^\mu_\nu}& \tensor{0}{^\mu_\nu}\\
(\tensor{\mathcal{C}}{_\pi^{2\mu}_\nu})^{(z)}& 0^{\mu}& 0^{\mu}& 0^{\mu}& \tensor{0}{^\mu_\nu}& \tensor{0}{^\mu_\nu}& \tensor{0}{^\mu_\nu}& \tensor{0}{^\mu_\nu}\\
(\tensor{\mathcal{C}}{_\pi^{3\mu}_\nu})^{(z)}& 0^{\mu}& 0^{\mu}& 0^{\mu}& \tensor{0}{^\mu_\nu}& \tensor{0}{^\mu_\nu}& \tensor{0}{^\mu_\nu}& \tensor{0}{^\mu_\nu}
\end{bmatrix},
\end{align}
where we have
\begin{align}
(\tensor{\mathcal{C}}{_\pi^{\beta\mu}_\nu})^{(z)} &= (2\eta + \lambda_{\pi\Pi}\Pi)\left[\frac{1}{2}((v^\mu)^{(z)}\tensor{g}{^\beta_\nu} + \tensor{g}{^\mu_\nu}(v^\beta)^{(z)}) - \frac{1}{3}\Delta^{\mu\beta}v_\nu^{(z)}\right] \notag\\
&\quad + \left(\delta_{\pi\pi}-\frac{\tau_{\pi\pi}}{3}\right)\pi^{\mu\beta}\Xi_\nu^{(z)} + \frac{\tau_{\pi\pi}}{2}\left((v^{(\beta})^{(z)}\tensor{\pi}{^{\mu)}_\nu} + (w^{(\mu})^{(z)}\tensor{g}{^{\beta)}_\nu}\right)-\frac{\tau_{\pi\pi}}{3}\Delta^{\mu\beta}w_\nu^{(z)}.
\end{align}

Repeated column reductions lead to
\be
\mathbb{A}^\alpha\Xi_\alpha^{(z)}\sim
\begin{bmatrix}
E\Xi_\nu^{(z)} + w_\nu^{(z)}& 0& 0 & 0 & 0_\nu& 0_\nu& 0_\nu& 0_\nu\\
n\Xi_\nu^{(z)}& 0 & 0& 0 & 0_\nu& 0_\nu& 0_\nu& 0_\nu\\
\tensor{0}{^\mu_\nu}& 0^\mu& 0^\mu& 0^\mu& \tensor{0}{^\mu_\nu}& \tensor{0}{^\mu_\nu}& \tensor{0}{^\mu_\nu}& \tensor{g}{^\mu_\nu}\Xi_3^{(z)}\\
(\zeta + \delta_{\Pi\Pi}\Pi)\Xi_\nu^{(z)} + \lambda_{\Pi\pi} w_\nu^{(z)}& 0& 0& 0& 0_\nu& 0_\nu& 0_\nu& 0_\nu\\
(\tensor{\mathcal{C}}{_\pi^{0\mu}_\nu})^{(z)}& 0^{\mu}& 0^{\mu}& 0^{\mu}& \tensor{0}{^\mu_\nu}& \tensor{0}{^\mu_\nu}& \tensor{0}{^\mu_\nu}& \tensor{0}{^\mu_\nu}\\
(\tensor{\mathcal{C}}{_\pi^{1\mu}_\nu})^{(z)}& 0^{\mu}& 0^{\mu}& 0^{\mu}& \tensor{0}{^\mu_\nu}& \tensor{0}{^\mu_\nu}& \tensor{0}{^\mu_\nu}& \tensor{0}{^\mu_\nu}\\
(\tensor{\mathcal{C}}{_\pi^{2\mu}_\nu})^{(z)}& 0^{\mu}& 0^{\mu}& 0^{\mu}& \tensor{0}{^\mu_\nu}& \tensor{0}{^\mu_\nu}& \tensor{0}{^\mu_\nu}& \tensor{0}{^\mu_\nu}\\
(\tensor{\mathcal{C}}{_\pi^{3\mu}_\nu})^{(z)}& 0^{\mu}& 0^{\mu}& 0^{\mu}& \tensor{0}{^\mu_\nu}& \tensor{0}{^\mu_\nu}& \tensor{0}{^\mu_\nu}& \tensor{0}{^\mu_\nu}
\end{bmatrix},
\ee
Note that $\Xi_\nu^{(z)} \neq 0$ since $\zeta^\mu$ and $\xi^\mu$ are assumed to be spacelike and timelike, and thus, cannot be linearly dependent, or the zero vector. This means that there exists some element of $\Xi_\nu^{(z)}$ that is nonzero. Assume without loss of generality that this element is $\Xi_3^{(z)}$. Of course, an analogous argument holds for the remaining components.

Since $\Lambda^{(z)}$ has multiplicity $15$, it follows that there are at most $15$ LI eigenvectors. However it is clear from the above row/column reductions that the above matrix has a null space of minimum dimension $15$. For instance, one can immediately read off $15$ right eigenvectors that annihilate the above matrix by choosing orthonormal vectors (for instance, in the local rest frame) whose only non-zero component lies in one of the following rows: $5-19$. Thus, it follows that there exists exactly $15$ LI eigenvectors.

\item[$\Lambda^{(a)}_\pm$] For $a = 0,1,2,3$, the remaining eigenvalues are distinct. We know this since the discriminant of the cubic polynomial is greater than zero $\Delta > 0$ per the assumptions in the theorem (Eq.~\eqref{supeq:H1}), meaning that the roots of the cubic are distinct and real. In addition we know (3) that these roots $(z^{(a)})^2$ are nonzero and positive due to Eq.~\eqref{supeq:H2}. Thus, they are distinct from the $z^{(z)} = 0$ roots as well. Therefore, we find that the remaining $8$ roots correspond to distinct, LI eigenvectors.
\end{description} 

Thus, we have shown that there exists a set of $23$ LI eigenvectors in $\mathbb{R}^{23}$ , which shows that the conditions in $\Omega_1$ imply the strongly hyperbolicity of the theory.\\

\noindent\textbf{Step 2:} The next step we wish to show is that Eq.~\eqref{supeq:C5} combined with the bounds in Eq.~\eqref{supeq:C1}--\eqref{supeq:C4} are necessary and sufficient conditions for those provided in Eq.~\eqref{supeq:C1}--\eqref{supeq:C5}. The condition given in Eq.~\eqref{supeq:C5} is identical to the condition $0\leq y_0\leq 1$, so we will henceforth only consider the roots $y_1,y_2$ and $y_3$ corresponding to the cubic $\mathcal{P}_3(y)$. Consider the set of conditions
\bml
\bea
\Omega_2 &\equiv & \{\Delta > 0,\quad  \mathcal{A}_2,\mathcal{A}_1, \mathcal{A}_0,y_0 > 0\},\\
\Omega_3 &\equiv & \{\Delta > 0,\quad \mathcal{A}_2/3,y_0 \leq 1,\quad 0 \leq -\mathcal{A}_2 + \mathcal{A}_1 -\mathcal{A}_0 + 1,\quad 0\leq 3-2\mathcal{A}_2 + \mathcal{A}_1\}.
\eea
\eml
We wish to show that $\Omega_2\cup\Omega_3$ provides a set of necessary and sufficient conditions for the roots of the cubic polynomial $\mathcal{P}_3$ to be positive and real. Showing that $\Omega_3$ is equivalent to (necessary and sufficient for) the condition that the roots satisfy $y_k\leq 1$ has already been done in the proof of causality, so we will not reiterate it here. We shall henceforth consider $\Omega_2$. Again, expanding $\mathcal{P}_3$ in two ways provides the conditions in Eq.~\eqref{supeq:A0root}--\eqref{supeq:A2root}. Given $\Omega_1$, it is thus straightforward to see that it must be that $\mathcal{A}_0,\mathcal{A}_1,\mathcal{A}_2 > 0$ from the positivity of the roots. Furthermore, the condition $\Delta > 0$ guarantees the reality and distinctness of the roots. This shows that $\Omega_2$ consists of necessary conditions for $\Omega_1$ to hold. To show that these conditions are sufficient, assume instead that $\Omega_2$ holds. The condition $\Delta > 0$ guarantees that these roots are real and distinct.\\

\noindent As such, assume now by contradiction that there exists some $y_i\leq 0$. Without loss of generality, we say that $i = 3$ by symmetry. The other cases hold identically since the coefficients are symmetric under exchange of indices. Notice that if $y_3 = 0$, we already reach a contradiction since $\mathcal{A}_0 > 0$. Consider the case where $y_3 < 0$. This condition implies $y_1y_2 < 0$. Also, $\mathcal{A}_1 > 0$, we have
\begin{align}
(y_2 + y_1)y_3 > -y_1y_2 > 0.
\end{align}
However, since $y_3 < 0$ we have $y_1 + y_2 < 0$. Furthermore, note that $ \mathcal{A}_2 > 0$, so
\begin{align}
0 <  \mathcal{A}_2 = y_1 + y_2 + y_3 < y_3.
\end{align}
Thus, $y_3 > 0$. This statement contradicts our assumption, so we instead must have $y_1,y_2,y_3 > 0$. Therefore, we have shown that $\Omega_2\cup\Omega_3$ consists of necessary and sufficient conditions for $\Omega_1$.\\

\noindent In summary, we showed that the conditions $\Delta(\chi^2,\kappa^2) > 0$ and $0 < y_k(\chi^2,\kappa^2) \leq 1$ are sufficient conditions for strong hyperbolicity of the system, assuming this holds $\forall\chi^2,\kappa^2\in[0,1]$. In particular, the conditions $\Omega_2\cup\Omega_3$ are necessary and sufficient conditions for these strong hyperbolicity bounds to hold, which proves that Eqs.~\eqref{supeq:C1}--\eqref{supeq:C5} are sufficient conditions for strong hyperbolicity as well.

\subsection{No Baryon Current}

The case without a baryon current ($n = 0$) is analogous due to the characteristic determinant maintaining a nearly-identical structure. The roots corresponding to $\Lambda_\pm^{(a)}$ for $a = 0,1,2,3$ retain the same multiplicity, whereas those corresponding to $\Lambda^{(z)}$ can be obtained in a similar fashion by noting that the principal part without a baryon current can be column reduced similarly as
\begin{align}
\mathbb{A}^\alpha\Xi_\alpha^{(z)} &=
\begin{bmatrix}
E\Xi_\nu^{(z)} + w_\nu^{(z)}& 0& 0 & 0_\nu& 0_\nu& 0_\nu& 0_\nu\\
- u^\mu w_\nu^{(z)}& (v^\mu)^{(z)}P_\varepsilon& (v^\mu)^{(z)}& \tensor{g}{^\mu_\nu}\Xi_{0}^{(z)}& \tensor{g}{^\mu_\nu}\Xi_{1}^{(z)}& \tensor{g}{^\mu_\nu}\Xi_{2}^{(z)}& \tensor{g}{^\mu_\nu}\Xi_{3}^{(z)}\\
(\zeta + \delta_{\Pi\Pi}\Pi)\Xi_{\nu}^{(z)} + \lambda_{\Pi\pi}w_\nu^{(z)}& 0& 0& 0_\nu& 0_\nu& 0_\nu& 0_\nu\\
\tensor{\mathcal{C}}{_\pi^{0\mu}_\nu}& 0^{\mu}& 0^{\mu}& \tensor{0}{^\mu_\nu}& \tensor{0}{^\mu_\nu}& \tensor{0}{^\mu_\nu}& \tensor{0}{^\mu_\nu}\\
\tensor{\mathcal{C}}{_\pi^{1\mu}_\nu}& 0^{\mu}& 0^{\mu}& \tensor{0}{^\mu_\nu}& \tensor{0}{^\mu_\nu}& \tensor{0}{^\mu_\nu}& \tensor{0}{^\mu_\nu}\\
\tensor{\mathcal{C}}{_\pi^{2\mu}_\nu}& 0^{\mu}& 0^{\mu}& \tensor{0}{^\mu_\nu}& \tensor{0}{^\mu_\nu}& \tensor{0}{^\mu_\nu}& \tensor{0}{^\mu_\nu}\\
\tensor{\mathcal{C}}{_\pi^{3\mu}_\nu}& 0^{\mu}& 0^{\mu}& \tensor{0}{^\mu_\nu}& \tensor{0}{^\mu_\nu}& \tensor{0}{^\mu_\nu}& \tensor{0}{^\mu_\nu}
\end{bmatrix},\notag\\
&\sim
\begin{bmatrix}
E\Xi_\nu^{(z)} + w_\nu^{(z)}& 0& 0 & 0_\nu& 0_\nu& 0_\nu& 0_\nu\\
\tensor{0}{^\mu_\nu}& 0^\mu& 0^\mu& \tensor{0}{^\mu_\nu}& \tensor{0}{^\mu_\nu}& \tensor{0}{^\mu_\nu}& \tensor{g}{^\mu_\nu}\Xi_{3}^{(z)}\\
(\zeta + \delta_{\Pi\Pi}\Pi)\Xi_{\nu}^{(z)} + \lambda_{\Pi\pi}w_\nu^{(z)}& 0& 0& 0_\nu& 0_\nu& 0_\nu& 0_\nu\\
\tensor{\mathcal{C}}{_\pi^{0\mu}_\nu}& 0^{\mu}& 0^{\mu}& \tensor{0}{^\mu_\nu}& \tensor{0}{^\mu_\nu}& \tensor{0}{^\mu_\nu}& \tensor{0}{^\mu_\nu}\\
\tensor{\mathcal{C}}{_\pi^{1\mu}_\nu}& 0^{\mu}& 0^{\mu}& \tensor{0}{^\mu_\nu}& \tensor{0}{^\mu_\nu}& \tensor{0}{^\mu_\nu}& \tensor{0}{^\mu_\nu}\\
\tensor{\mathcal{C}}{_\pi^{2\mu}_\nu}& 0^{\mu}& 0^{\mu}& \tensor{0}{^\mu_\nu}& \tensor{0}{^\mu_\nu}& \tensor{0}{^\mu_\nu}& \tensor{0}{^\mu_\nu}\\
\tensor{\mathcal{C}}{_\pi^{3\mu}_\nu}& 0^{\mu}& 0^{\mu}& \tensor{0}{^\mu_\nu}& \tensor{0}{^\mu_\nu}& \tensor{0}{^\mu_\nu}& \tensor{0}{^\mu_\nu}
\end{bmatrix}.
\end{align}
Again, since the multiplicity of $z^{(z)} = 0$ is 14, and it is straightforward to read off $14$ LI eigenvectors from the $14$ null columns in the above (similar) matrix, this argument verifies that strong hyperbolicity also holds in the zero baryon limit, e.g. in \cite{Bemfica:2020xym}.

\subsection{Dynamic Metric}

Here, we elaborate more on the claim that Theorem~\ref{thm:nodiffusionstronghyperbolicity} is compatible with Einstein's equations. We sketch the argument as follows. Let $K_{\rho\mu\nu}$ be the gradients of the spacetime metric:
\be
K_{\rho\mu\nu} = \partial_\rho g_{\mu\nu}.
\ee
Let $K_{\rho,\frak{A}} \in\{K_{\rho\mu\nu}:\mu\leq \nu\}$ be the $10$ independent components of $K_{\rho\mu\nu}$ for fixed $\rho$. Here, instead of evolving the $10$ independent components of the metric for the mixed-order system in Eq.~\eqref{eq:mixedorderEOM}, we instead let
\be
\textbf{\textrm{U}}_{G} = \begin{bmatrix}
g_{\frak{A}}\\
K_{0,\frak{A}}\\
K_{1,\frak{A}}\\
K_{2,\frak{A}}\\
K_{3,\frak{A}}
\end{bmatrix}\in\mathbb{R}^{50}
\ee
be the $50$ independent degrees of freedom for the gravity portion. Einstein's equations in the Harmonic gauge may then be recast as
\bml
\bea
K_{0\mu\nu} &=& \partial_0 g_{\mu\nu}\\
g^{ij}\partial_0K_{i\mu\nu} &=& g^{ij}\partial_i K_{0\mu\nu},\\
-g^{00}\partial_0K_{0\mu\nu} &=& 2g^{0j}\partial_jK_{0\mu\nu} + g^{ij}\partial_jK_{i\mu\nu} + \mathcal{O}(g,K_\rho),
\eea
\eml
We remark that this system of equations is more general than Einstein's equations, as it includes $40$ extra degrees of freedoms, one must supply the initial data hypersurface $\Sigma$ with
\be
\mathring{K}_{\rho\mu\nu} = \partial_\rho \mathring{g}_{\mu\nu}
\ee
such that the relationship between the degrees of freedom propagate throughout their time of existence. The total solution vector is then
\be
\textbf{\textrm{U}} = \textbf{\textrm{U}}_M \oplus \textbf{\textrm{U}}_G \in \mathbb{R}^{73}.
\ee
We then may cast Einstein's equations along with the matter equations as a quasilinear system of effective order 1:
\be
\mathbb{A}^\alpha\partial_\alpha\textbf{\textrm{U}} + \mathbb{B} = \boldsymbol{0}_{73}
\ee
The total principal part is written in block form as
\be
\label{eq:blocks}
\mathbb{A}^\alpha =
\begin{bmatrix}
\mathbb{A}_M^\alpha & \boldsymbol{0}_{50\times 23}\\
\boldsymbol{0}_{23\times 50} & \mathbb{A}_G^\alpha
\end{bmatrix}.
\ee
The coefficients of the principal part for the gravity sector then takes the form
\bml
\label{eq:gravityPPcoeff}
\bea
\mathbb{A}_G^0 &=&
\begin{bmatrix}
 I_{10}& \boldsymbol{0}_{10\times 10}& \boldsymbol{0}_{10\times 10}& \boldsymbol{0}_{10\times 10}& \boldsymbol{0}_{10\times 10}\\
\boldsymbol{0}_{10\times 10}& -g^{00} I_{10}& \boldsymbol{0}_{10\times 10}& \boldsymbol{0}_{10\times 10}&\boldsymbol{0}_{10\times 10}\\
\boldsymbol{0}_{10\times 10}& \boldsymbol{0}_{10\times 10}& g^{11} I_{10}& g^{12} I_{10}& g^{13} I_{10}\\
\boldsymbol{0}_{10\times 10}& \boldsymbol{0}_{10\times 10}& g^{21} I_{10}& g^{22} I_{10}& g^{23} I_{10}\\
\boldsymbol{0}_{10\times 10}& \boldsymbol{0}_{10\times 10}& g^{31} I_{10}& g^{00} I_{10}& g^{33} I_{10}
\end{bmatrix},\\
\mathbb{A}_G^j &=&
\begin{bmatrix}
\boldsymbol{0}_{10\times 10}& \boldsymbol{0}_{10\times 10}& \boldsymbol{0}_{10\times 10}& \boldsymbol{0}_{10\times 10}& \boldsymbol{0}_{10\times 10}\\
\boldsymbol{0}_{10\times 10}& 2g^{0j} I_{10}& g^{1j} I_{10}& g^{2j} I_{10}& g^{3j} I_{10}\\
\boldsymbol{0}_{10\times 10}& g^{j1} I_{10}& \boldsymbol{0}_{10\times 10}& \boldsymbol{0}_{10\times 10}& \boldsymbol{0}_{10\times 10}\\
\boldsymbol{0}_{10\times 10}& g^{j2} I_{10}& \boldsymbol{0}_{10\times 10}& \boldsymbol{0}_{10\times 10}& \boldsymbol{0}_{10\times 10}\\
\boldsymbol{0}_{10\times 10}& g^{j3} I_{10}& \boldsymbol{0}_{10\times 10}& \boldsymbol{0}_{10\times 10}& \boldsymbol{0}_{10\times 10}\\
\end{bmatrix}.
\eea
\eml
We proved that the matter sector was strongly hyperbolic, and in \cite{Fischer_Marsden_Einstein_FOSH_1972}, it was shown that this first-order casting of Einstein's equations above was symmetric hyperbolic, as long as the metric remains Lorentzian. These two results immediately satisfy (HI), since as in the mixed-order case, we know
\be
\det(\mathbb{A}^\alpha\xi_\alpha) = \det(\mathbb{A}_M^\alpha\xi_\alpha)\det(\mathbb{A}_G^{\alpha}\xi_\alpha),
\ee
and both the matter and gravity sector have non-vanishing characteristic determinant for timelike $\xi^\mu$ by definition. Furthermore, the principal part for Einstein's equations are completely decoupled from the matter equations, meaning that proving the existence of an eigenbasis for each the matter system and gravity system individually gives an eigenbasis for the entire matter + gravity system of equations immediately. We proved the existence of $23$ linearly-independent right eigenvectors in Theorem~\ref{thm:nodiffusionstronghyperbolicity}, and \cite{Fischer_Marsden_Einstein_FOSH_1972} provides the remaining $50$ due to the symmetric hyperbolicity of the system. It is clear by the block diagonal form of Eq.~\eqref{eq:blocks} that these sets are independent of one another. Thus, there exists $73$ linearly independent eigenvectors for the fluid + Einstein system, thus verifying (HII) and proving strong hyperbolicity.

\section{Proof of local well-posedness in Sobolev spaces}\label{App:LWPSchauder}
The previous results (causality, strong hyperbolicity, and propagation of constraints) were all derived assuming the existence of solutions under a prescribed Cauchy problem. Here, we synthesize these results and formally show local well-posedness in Sobolev space $H^s(\Sigma_t)$ for the system of $23$ fluid equations. We shall define the fixed-time slices via
\be
\Sigma_t = \{ (\tau,x^j)\in\mathbb{R}\times\Sigma \,|\,\tau = t\}.
\ee
Here, we assume that $\Sigma$ is a Cauchy surface that is a smooth, compact $d = 3$ dimensional manifold. We will suppress the domain for the remainder of the proof for brevity, and it will be understood that $s > \frac{d}{2} + 2 = 7/2$ for the relevant energy estimates discussed later in the proof. Similarly to the discussion in Appendix~\ref{App:hyperbolicity}, we shall discuss the coupling to Einstein's equations at the end.
We observe that the conditions of Theorem \ref{T:LWP} imply causality of solutions for small time. Therefore, we can always localize the problem and it is therefore enough to provide a proof for the case $(t,x^i) \in [0,T]\times \mathbb{R}^3$.

To prove local well-posedness of the fluid theory, we shall proceed via a standard approximation by analytic solutions already outlined in the main text. The general idea is to take advantage of well-known existence and uniqueness theorems for the case of analytic initial data (e.g. Cauchy-Kovalevskaya, see Chapter 1, Sec.~7 in \cite{Courant_and_Hilbert_book_2} and Sec.~4.6.3 in \cite{EvansPDE}), and show that a well-suited sequence converges to a Sobolev-type solution. We shall show existence first. As a refresher, consider the Israel-Stewart fluid theory from DNMR in quasilinear form
\be
\label{eq:ISsystem}
(\mathbb{A}^\alpha\partial_\alpha + \mathbb{B})\Umb = \boldsymbol{0},
\ee
with solution vector $\Umb = (u^\nu,\varepsilon,n,\Pi,\pi^{\nu 0},\pi^{\nu 1},\pi^{\nu 2},\pi^{\nu 3})^T\in\mathbb{R}^{23}$, $\mathbb{A}^\alpha(\Umb),\mathbb{B}(\Umb)$ are $23\times 23$ real-valued matrices whose elements are analytic functions $C^\omega(\mathbb{R}^{23})$, which we shall restate for some characteristic vector $\phi^\mu$ as
\be
\label{eq:fluidprincipalpartappendix}
\mathbb{A}^\alpha\phi_\alpha =
\begin{bmatrix}
Ez\tensor{g}{^\mu_\nu} - u^\mu w_\nu& v^\mu P_\varepsilon& v^\mu P_n& v^\mu& \tensor{g}{^\mu_\nu}\phi_0& \tensor{g}{^\mu_\nu}\phi_1& \tensor{g}{^\mu_\nu}\phi_2& \tensor{g}{^\mu_\nu}\phi_3\\
E\phi_\nu + w_\nu& z& 0 & 0 & 0_\nu& 0_\nu& 0_\nu& 0_\nu\\
n\phi_\nu& 0 & z& 0 & 0_\nu& 0_\nu& 0_\nu& 0_\nu\\
\tensor{\mathcal{C}}{_\Pi_\nu}& 0& 0& \tau_\Pi z& 0_\nu& 0_\nu& 0_\nu& 0_\nu\\
\tensor{\mathcal{C}}{_\pi^{0\mu}_\nu}& 0^{\mu}& 0^{\mu}& 0^{\mu}& \tau_\pi z\tensor{g}{^\mu_\nu}& \tensor{0}{^\mu_\nu}& \tensor{0}{^\mu_\nu}& \tensor{0}{^\mu_\nu}\\
\tensor{\mathcal{C}}{_\pi^{1\mu}_\nu}& 0^{\mu}& 0^{\mu}& 0^{\mu}& \tensor{0}{^\mu_\nu}& \tau_\pi z\tensor{g}{^\mu_\nu}& \tensor{0}{^\mu_\nu}& \tensor{0}{^\mu_\nu}\\
\tensor{\mathcal{C}}{_\pi^{2\mu}_\nu}& 0^{\mu}& 0^{\mu}& 0^{\mu}& \tensor{0}{^\mu_\nu}& \tensor{0}{^\mu_\nu}& \tau_\pi z\tensor{g}{^\mu_\nu}& \tensor{0}{^\mu_\nu}\\
\tensor{\mathcal{C}}{_\pi^{3\mu}_\nu}& 0^{\mu}& 0^{\mu}& 0^{\mu}& \tensor{0}{^\mu_\nu}& \tensor{0}{^\mu_\nu}& \tensor{0}{^\mu_\nu}& \tau_\pi z\tensor{g}{^\mu_\nu}
\end{bmatrix}
\ee
with
\bml
\bea
\tensor{\mathcal{C}}{_\Pi_\nu} &=& (\zeta + \delta_{\Pi\Pi}\Pi)\phi_\nu + \lambda_{\Pi\pi}w_\nu,\\
\tensor{\mathcal{C}}{_\pi^{\beta \mu}_\nu} &=& \left(\eta + \frac{1}{2}\lambda_{\pi\Pi}\Pi\right)\left[v^\mu \tensor{g}{^\beta_\nu}+ v^\beta \tensor{g}{^\mu_\nu}-\frac{2\Delta^{\mu\beta}v_\nu}{3}\right]- \tau_\pi x\left(\tensor{\pi}{^\mu_\nu}u^\beta + u^\mu\tensor{\pi}{_\nu^\beta}\right) \notag\\
&&\quad + \left(\delta_{\pi\pi}-\frac{\tau_{\pi\pi}}{3}\right)\pi^{\mu\beta}\phi_\nu + \frac{\tau_{\pi\pi}}{2}\left(v^{(\beta}\tensor{\pi}{^{\mu)}_\nu} + w^{(\mu}\tensor{g}{^{\beta)}_\nu}\right)-\frac{\tau_{\pi\pi}}{3}\Delta^{\mu\beta}w_\nu,
\eea
\eml
along with the projections along the characteristic vector
\be
z = u^\alpha\phi_\alpha,\quad v^\mu = \Delta^{\mu\alpha}\phi_\alpha,\quad w^\mu = \pi^{\mu\alpha}\phi_\alpha.
\ee
We wish to show the local well-posedness provided some initial data
\be
\mathring{\Umb} \equiv \Umb(t = 0,x^i). 
\ee
Suppose that $\mathring{\Umb}^\omega\in C^\omega(\Sigma_0)$ is analytic non-characteristic initial data satisfying $\Delta(\mathcal{P}_3;\chi^2,\kappa^2)> 0$ and Eq.~\eqref{supeq:H1}--\eqref{supeq:H5} for all such $\chi^2,\kappa^2\in[0,1]$. We assume that at $t = 0$, the assumptions that $\tau_\Pi,\tau_\pi\neq 0$ and $E + \Lambda_A > 0$ for all $A = 0,1,2,3$, should the eigenvalues of $\pi^{\mu\nu}$ exist. The matrix elements of the principal and non-principal part are all analytic expressions of the degrees of freedom. Moreover, as the aforementioned assumptions are enclosed under (e.g. a subset of) the assumptions of our causality constraints, it follows that $\mathbb{A}^0$ is invertible, since $\mathring{\Umb}$ is non-characteristic, allowing us to cast Eq.~\eqref{eq:ISsystem} locally in Cauchy-Kovalevskaya form
\be
\label{eq:CKeom}
\partial_0\mathbf{U} = -\left(\mathbb{A}^0\right)^{-1}\left(\mathbb{A}^k\partial_k +\mathbb{B}\right)\mathbf{U}.
\ee
The equations of motion now are cast in a form with analytic coefficients satisfying the Cauchy-Kovalevskaya theorem. We therefore find that such (analytic) initial data provides the existence of a unique analytic solution $\mathbf{U}^\omega\in C^\omega$ to Eq.~\eqref{eq:ISsystem} for some time of existence $T_\omega > 0$. Notably, our proof of propagation of constraints shows that this solution propagates the symmetry, normalization and orthogonality relationships as shown in Sec.~\ref{subsec:propagate}. Due to the denseness of $C^\omega$ in $H^s$ \cite[Section~5.3]{EvansPDE}, we construct a sequence of analytic initial data $\mathring{\mathbf{U}}_n\in C^\omega$ converging to Sobolev-type initial data $\mathring{\mathbf{U}}_n\rightarrow\mathring{\mathbf{U}}\in H^s$. By construction, each analytic initial data implies an analytic solution $\mathbf{U}_n$ with time of existence $T_{\omega,n}$. We call $T_\omega = \min\limits_n\{T_{\omega,n}\}$ the shared time of existence between all solutions and observe that $T_\omega > 0$. As argued previously, each solution propagates the constraints over their time of existence. Thus, assuming that they satisfy the provided assumptions, they are solutions to the strongly hyperbolic system \eqref{eq:ISsystem}. For any $\mathbf{U}\in H^s(\Sigma_t)$ denote the $L^2$ based norm in Sobolev space $H^s$ as
\be
\left|\left|\mathbf{U}\right|\right|_{H^s(\Sigma_t)}^2 \equiv \sum_{|\alpha|\leq s}\left|\left|\partial^{(\alpha)}\mathbf{U}\right|\right|_{L^2(\Sigma_t)}^2
\ee
where $(\alpha) = (\alpha_1,\alpha_2,\alpha_3)$ is a $d = 3$ dimensional spatial multi-index (whose elements are non-negative integers) using the conventional shorthand
\be
\partial^{(\alpha)} = \prod\limits_{j = 1}^3\partial_j^{\alpha_j}.
\ee
It is understood that all mentions of $\partial_j$ are coordinate derivatives in the \emph{weak sense}, which coincide with classical coordinate derivatives, should they exist (p. 255) \cite{EvansPDE}. We emphasize that these norms are taken over the spatial coordinates at a fixed time. One should interpret the subsequent Sobolev norms as occurring at a fixed time slice. Schematically,
\be
\left|\left|\mathbf{U}\right|\right|_{H^s(\Sigma_t)}^2\equiv \left|\left|\mathbf{U}\right|\right|_{H^s}^2\bigg|_t.
\ee
Note that we will write
\be
\mathcal{T}([0,T],\mathcal{S}(\Omega))
\ee
to distinguish between the function space of the time component $\mathcal{T}$ with domain $[0,T]$ and the function space $\mathcal{S}$ corresponding to the spatial components with domain $\Omega$.\\

\noindent\emph{Existence.} The goal of this portion of the proof is to show existence of a Sobolev-type solution with regularity $s > d/2 + 2$. We break the argument into parts.

\begin{enumerate}
\item\underline{Individual elements of the sequence $\{\mathbf{U}_n\}$ are bounded in $[0,T_\omega]$}

Each $\mathbf{U}_n$ in the sequence represents a solution of a strongly hyperbolic system, meaning that they each satisfy an energy estimate\footnote{From the explicit form of the equations and the characteristic determinant we can readily check that the regularity and structural conditions in the assumptions of \cite{Shao:2023psr} are satisfied.} for all such $n$ \cite{Shao:2023psr}
\begin{align}
||\mathbf{U}_n||_{H^s(\Sigma_t)} &\leq \mathfrak{F}\left(||\mathbf{U}_n||_{H^s(\Sigma_0)}\right),\notag\\
&=  \mathfrak{F}\left(||\mathring{\mathbf{U}}||_{H^s(\Sigma_0)}\right) < \infty.
\end{align}
where $\mathfrak{F}\in C^\infty(\mathbb{R}_+)$ is a non-negative, increasing function with fixed point $\mathfrak{F}(0) = 0$. \\

\item\underline{The analytic sequence $\{\mathbf{U}_n\}$ is Cauchy in $C^0([0,T_\omega],H^{s-1})$}

Next, we remark that a difference of two arbitrary members of our analytic sequence satisfies an analogous estimate with a slight loss in regularity \cite{Shao:2023psr}
\be
||\mathbf{U}_n - \mathbf{U}_m||_{H^{s-1}(\Sigma_t)} \leq \mathfrak{G}_{n,m}\left(||\mathring{\mathbf{U}}_n - \mathring{\mathbf{U}}_m||_{H^{s-1}(\Sigma_0)}\right)
\ee
In principle, $\mathfrak{G}_{n,m}$ depends on the individual members of the sequence. However, since each of these norms are bounded, we could always choose a function independent of the sequential index, say (schematically) $\forall n,m,\: \mathfrak{G}\geq \mathfrak{G}_{n,m}$, such that all elements in the sequence remain bounded while also sharing the same smoothness, monotonicity, and fixed point properties that $\mathfrak{F}$ has. Therefore, we shall instead impose the index-independent estimate 
\be
\label{eq:noindexenergydiff}
||\mathbf{U}_n - \mathbf{U}_m||_{H^{s-1}(\Sigma_t)} \leq \mathfrak{G}\left(||\mathring{\mathbf{U}}_n - \mathring{\mathbf{U}}_m||_{H^{s-1}(\Sigma_0)}\right)
\ee
Note that as $n\rightarrow\infty$, $\mathring{\mathbf{U}}_n\rightarrow\mathring{\mathbf{U}}$ by assumption. Therefore, taking $n,m\rightarrow\infty$, and passing the limit
\be
||\mathring{\mathbf{U}}_n - \mathring{\mathbf{U}}_m||_{H^{s-1}(\Sigma_0)}\rightarrow 0.
\ee
Finally, by squeeze theorem and by passing limit to Eq.~\eqref{eq:noindexenergydiff} by continuity of $\mathfrak{G}$, we find that
\be
||\mathbf{U}_n - \mathbf{U}_m||_{H^{s-1}(\Sigma_t)}\rightarrow 0,
\ee
meaning that the difference of sequential elements converges in $H^{s-1}$. Moreover, this argument proves the sequence $\mathbf{U}_n$ is a Cauchy sequence in $C^0([0,T_\omega], H^{s-1}(\Sigma_t))$ for the shared time of existence.\footnote{Here, we mean that $\mathbf{U}_n$ is $C^0$ in the time coordinate $t$ taking values in $H^s$. Continuity in time follows directly from the uniform limit theorem, since the sequence of analytic functions is smooth and converges uniformly for $t\in[0,T_\omega]$.} From our equations of motion in Eq.~\eqref{eq:CKeom}, and their strong hyperbolicity (for analytic solutions at this point), it follows that this sequence is also Cauchy in $C^0([0,T_\omega], H^{s-1}(\Sigma_t))\cap C^1([0,T_\omega], H^{s-2}(\Sigma_t))$, proving that there exists some $\mathbf{U}_n\rightarrow\mathbf{U}_{\infty}$ in this same space that the sequence converges to. It remains to show that this element is indeed a solution of the IS-system. Furthermore, we will also show that one can improve the top-level regularity from $s-1$ to $s$.\\

\item\underline{Existence of a solution in $H^{s-1}$}

The Sobolev embedding theorem provides a continuous embedding of $H^{s-1}$ into $C^1$, allowing us to pass the limit into Eq.~\eqref{eq:CKeom} via
\begin{align}
\lim\limits_{n\rightarrow\infty}\partial_0\mathbf{U}_n &= -\lim\limits_{n\rightarrow\infty}\left[\left(\mathbb{A}^0(\mathbf{U}_n)\right)^{-1}\left(\mathbb{A}^k(\mathbf{U}_n)\partial_k +\mathbb{B}(\mathbf{U}_n)\right)\mathbf{U}_n\right],\notag\\
\Leftrightarrow\quad \partial_0\mathbf{U}_{\infty} &= -\left(\mathbb{A}^0(\mathbf{U}_{\infty})\right)^{-1}\left(\mathbb{A}^k(\mathbf{U}_{\infty})\partial_k +\mathbb{B}(\mathbf{U}_{\infty})\right)\mathbf{U}_{\infty},
\end{align}
where it is understood that $\mathbf{U}_{\infty}\equiv \lim\limits_{n\rightarrow\infty}\mathbf{U}_n$. Hence, it follows that $\mathbf{U}_{\infty}$ is a solution of Eq.~\eqref{eq:CKeom}. Finally, we will improve the regularity below.\\

\item\underline{Recovering the top-level regularity $H^s$}

Finally, the convergence of the sequence $\mathbf{U}_n$ in $H^{s-1}$ with the energy estimates and boundedness of $\mathbf{U}_n$ in $H^s$ may then be shown via an analogous argument in (Chapter 9) \cite{RingstromCauchyBook} that $\mathbf{U}_{\infty}\in C^0([0,T_\omega],H^{s}(\Sigma_t))\cap C^1([0,T_\omega],H^{s-1}(\Sigma_t))$, satisfying existence.\\
\end{enumerate}

\noindent\emph{Uniqueness.} Let $\mathbf{U},\widetilde{\mathbf{U}}\in H^s(\Sigma_t)$ be two solutions generated by the Cauchy problem under the same initial data $\mathring{\mathbf{U}} = \mathring{\widetilde{\mathbf{U}}}\in\Sigma_{t = 0}$. Taking the difference
\be
||\mathbf{U} - \widetilde{\mathbf{U}}||_{H^{s-1}(\Sigma_t)} \leq \mathfrak{G}\left(||\mathring{\mathbf{U}} -\mathring{\widetilde{\mathbf{U}}}||_{H^{s-1}(\Sigma_0)}\right) = \mathfrak{G}(0) = 0,
\ee
showing uniqueness on the interval $t\in[0,T_{s}]$.\\

\section{Causality of Generalized Maxwell-Cattaneo}\label{App:MC}

Here, we list more general forms of the extrema calculated in the generalized Maxwell-Cattaneo limit where $\lambda_{\Pi\pi} = \delta_{\pi\pi} = \tau_{\pi\pi} = 0$, these conditions include, but do not impose the physical assumptions made in the main paper for the interested reader. For convenience, we once again provide the determinant in this case:
\be
\label{eq:MCdetApp}
\det(\mathbb{A}^\alpha\phi_\alpha) = \tau^{17}v^{17}\hat{z}^{15}\left(E\hat{z}^2 - \frac{\eta_{\textrm{eff}}}{\tau_\pi}\right)\mathcal{P}_3(\hat{z}^2,\hat{v}_1^2,\hat{v}_2^2,\hat{v}_3^2).
\ee
as well as the corresponding (angle-dependent) coefficients:
\bml
\label{eq:MCcoeffApp}
\bea
\mathcal{A}_0(\chi^2,\kappa^2) &=& \dfrac{\eta_{\textrm{eff}}^2}{\tau_\pi^2}\frac{\dfrac{\zeta_{\textrm{eff}}}{\tau_\Pi} + \dfrac{\eta_{\textrm{eff}}}{\tau_\pi} + P_\varepsilon \left(E + \Lambda_{\hat{v}^2}\right)}{ (E + \Lambda_1) (E + \Lambda_2) (E + \Lambda_3)},\\
\mathcal{A}_1(\chi^2,\kappa^2) &=& \frac{\eta_{\textrm{eff}}}{\tau_\pi}\left[\dfrac{3\dfrac{\eta_{\textrm{eff}}^2}{\tau_\pi^2} + 2\dfrac{\zeta_{\textrm{eff}}}{\tau_\Pi}\left(E - \dfrac{1}{2}\Lambda_{\hat{v}^2}\right)}{(E + \Lambda_1)(E + \Lambda_2)(E +\Lambda_3)} + P_\varepsilon\sum_{k = 1}^3\frac{1 - \hat{v}_k^2}{E + \Lambda_k}\right],\\
\mathcal{A}_2(\chi^2,\kappa^2) &=& P_\varepsilon + \dfrac{\zeta_{\textrm{eff}}}{\tau_\Pi}\sum_{k = 1}^3\frac{\hat{v}_k^2}{E + \Lambda_k} + \dfrac{\eta_{\textrm{eff}}}{\tau_\pi}\sum_{k = 1}^3\frac{1}{E + \Lambda_k}.
\eea
\eml
Recall the definitions
\bml
\bea
\frac{\zeta_{\textrm{eff}}}{\tau_\Pi} &\equiv & nP_n + \frac{\zeta}{\tau_\Pi} + \delta_{\Pi\Pi}\Pi + \frac{1}{3}\frac{\eta_{\textrm{eff}}}{\tau_\pi},\\\frac{\eta_{\textrm{eff}}}{\tau_\pi} &\equiv & \eta + \frac{1}{2}\lambda_{\pi\Pi}\Pi,\\ c_{\textrm{eff},a}^2 &\equiv & P_\varepsilon + \frac{1}{E +\Lambda_a}\frac{\zeta_{\textrm{eff}}}{\tau_\Pi}.
\eea
\eml
We will rearrange indices and impose:
\bml
\bea
\Lambda_1 &\leq& \Lambda_2\leq \Lambda_3,\\
-E < \Lambda_1 &\leq &0 \leq \Lambda_3.
\eea
\eml
We stress that the causality conditions in Theorem~\ref{thm:causalitynodiffusion} can be applied as usual due to the identical structure of the characteristic determinant. However, these original nonlinear causality conditions depend on $\chi^2,\kappa^2\in[0,1]$ which can be thought of as cosine functions in local spherical coordinates. These conditions must be evaluated for all angles in the unit interval, meaning that one could calculate the minima or maxima relative to $\chi^2$ and $\kappa^2$, and then impose the constraints solely on these extrema (instead of the entire interval). Thus, constraining \emph{only the extrema} is necessary and sufficient for the original causality conditions, yet significantly less computationally intensive. In terms of equations, we reiterate the main paper's example by remarking that the following statements are equivalent (where $(\chi^2,\kappa^2) = (\widetilde{\chi}^2,\widetilde{\kappa}^2)\in[0,1]^2$ is a critical point providing the global minimum in the domain $[0,1]^2$):
\be
\forall \chi^2,\kappa^2\in [0,1];\quad \mathcal{A}_0(\chi^2,\kappa^2)\geq 0\quad\Leftrightarrow\quad \min_{\chi^2,\kappa^2\in[0,1]}\mathcal{A}_0(\chi^2,\kappa^2) \equiv \mathcal{A}_0(\widetilde{\chi}^2,\widetilde{\kappa}^2)\geq 0.
\ee
The coefficients $\mathcal{A}_1$, $\mathcal{A}_2$ and superpositions thereof may be dealt with exactly in this manner.\\

\noindent\emph{Step I: Removing angles.} We shall assume that $\eta_{\textrm{eff}}/\tau_\pi\neq 0$ along with $\Lambda_1\neq \Lambda_3$, as these conditions are always true for non-zero shear-stress (also, plugging in $\eta_{\textrm{eff}}/\tau_\pi = 0$ decouples the cubic into a monomial in $y = \hat{z}^2$ times $y^2$). The existence of real roots is already proven for only bulk viscosity in \cite{Bemfica:2019cop}. Thus, the conditions $\mathcal{A}_0(\chi^2,\kappa^2)\geq 0$ and $0\leq \mathcal{A}_2(\chi^2,\kappa^2)\leq 3$ are equivalent to 
\bml
\bea
0\leq \min_{\chi^2,\kappa^2\in[0,1]}\mathcal{A}_0(\chi^2,\kappa^2) &=&
\begin{dcases}
\frac{\eta_{\textrm{eff}}^2}{\tau_\pi^2}\dfrac{c_{\textrm{eff},1}^2 + \dfrac{1}{E + \Lambda_1}\dfrac{\eta_{\textrm{eff}}}{\tau_\pi}}{(E + \Lambda_2)(E + \Lambda_3)}, &P_\varepsilon\geq 0,\\
\frac{\eta_{\textrm{eff}}^2}{\tau_\pi^2}\dfrac{c_{\textrm{eff},3}^2 + \dfrac{1}{E + \Lambda_3}\dfrac{\eta_{\textrm{eff}}}{\tau_\pi}}{(E + \Lambda_1)(E + \Lambda_2)}, &P_\varepsilon < 0.
\end{dcases}\\
0\leq \min_{\chi^2,\kappa^2\in[0,1]}\mathcal{A}_2(\chi^2,\kappa^2) &=&
\begin{dcases}
c_{\textrm{eff},3}^2 + \frac{\eta_{\textrm{eff}}}{\tau_\pi}\sum_{k = 1}^3\frac{1}{E +\Lambda_k}, &\frac{\zeta_{\textrm{eff}}}{\tau_\Pi}\geq 0,\\
c_{\textrm{eff},1}^2 + \frac{\eta_{\textrm{eff}}}{\tau_\pi}\sum_{k = 1}^3\frac{1}{E +\Lambda_k}, &\frac{\zeta_{\textrm{eff}}}{\tau_\Pi} < 0.
\end{dcases}\\
3\geq \max_{\chi^2,\kappa^2\in[0,1]}\mathcal{A}_2(\chi^2,\kappa^2) &=& 
\begin{dcases}
c_{\textrm{eff},1}^2 + \frac{\eta_{\textrm{eff}}}{\tau_\pi}\sum_{k = 1}^3\frac{1}{E +\Lambda_k}, &\frac{\zeta_{\textrm{eff}}}{\tau_\Pi}\geq 0,\\
c_{\textrm{eff},3}^2 + \frac{\eta_{\textrm{eff}}}{\tau_\pi}\sum_{k = 1}^3\frac{1}{E +\Lambda_k}, &\frac{\zeta_{\textrm{eff}}}{\tau_\Pi} < 0.
\end{dcases}
\eea
\eml
All these calculations are performed on Mathematica, using the \texttt{Maximize[]} and \texttt{Minimize[]} functions that evaluate global extrema provided a set of constraints \cite{CordeiroGit2026}. As we cannot guarantee the relative sizes of all of the dynamics/transport coefficients, the maxima and minima often have a large amount of subcases. Thus, for the remaining conditions, we evaluate the constraint at multiple values of $\chi^2$ and $\kappa^2$ such that it is guaranteed that the true minimum exists as one of the constraints. For instance, one may show that $\mathcal{A}_1(\chi^2,\kappa^2)\geq 0$ for all $\chi^2,\kappa^2\in[0,1]$ is equivalent to satisfying the following condition for all $a = 1,2,3$:
\be
\frac{\eta}{\tau_\pi}\frac{2\left(E - \dfrac{\Lambda_a}{2}\right)(E + \Lambda_a)c_{\textrm{eff},a}^2+ 3\dfrac{\eta_{\textrm{eff}}}{\tau_\pi}E}{(E + \Lambda_1)(E + \Lambda_2)(E + \Lambda_3)} \geq 0
\ee
Notice that it suffices to show for all $(a,b,c) = (1,2,3),(1,3,2),(2,3,1)$ that $1 - \mathcal{A}_2 + \mathcal{A}_1 -\mathcal{A}_0 \geq 0$ for all $\chi^2,\kappa^2\in[0,1]$ is equivalent to
\be
\left(1 - \dfrac{1}{E + \Lambda_a}\dfrac{\eta_{\textrm{eff}}}{\tau_\pi}\right)\left(1 - \dfrac{1}{E + \Lambda_b}\dfrac{\eta_{\textrm{eff}}}{\tau_\pi}\right)\left(1-c_{\textrm{eff,c}}^2 - \dfrac{1}{E + \Lambda_c}\dfrac{\eta_{\textrm{eff}}}{\tau_\pi}\right)\geq 0
\ee
as well as for all $a = 1,2,3$ that
\begin{align}
1 &-\frac{2}{3}\left(c_{\textrm{eff},a}^2 + \frac{\eta_{\textrm{eff}}}{\tau_\pi}\sum_{k = 1}^3\frac{1}{E + \Lambda_k}\right) + \dfrac{\eta_{\textrm{eff}}}{\tau_\pi}\left[\frac{\left(E - \dfrac{\Lambda_a}{3}\right) (E + \Lambda_a)c_{\textrm{eff},a}^2 + \dfrac{8}{9}\dfrac{\eta_{\textrm{eff}}}{\tau_\pi}E}{(E + \Lambda_1)(E + \Lambda_2)(E + \Lambda_3)}\right]\geq 0
\end{align}
is equivalent to $1 - \frac{2}{3}\mathcal{A}_2 + \frac{1}{3}\mathcal{A}_1 \geq 0$ for all $\chi^2,\kappa^2\in[0,1]$. Using the above constraints removes the angular dependence from Theorem~1, except for the discriminant condition $\Delta\geq 0$.\\

\noindent\emph{Step II: Removing the discriminant condition.} We thank L.~Gavassino for suggesting this step. Here, we show that there exists a regime where the discriminant condition in Eq.~\eqref{eq:C1} may be removed entirely for the generalized Maxwell-Cattaneo equations. Define the points
\be
\hat{z}_a^2 = \frac{\eta_{\textrm{eff}}}{\tau_\pi}\frac{1}{E + \Lambda_a}.
\ee
By our choice of indices, we remark that $0\leq \hat{z}_3^2 \leq \hat{z}_2^2 \leq \hat{z}_1^2$. Evaluating the cubic polynomial at these three points in Eq.~\eqref{eq:MCdetApp}, one finds that the cubic portion of the characteristic determinant evaluated at these points is equal to 
\be
\mathcal{P}_3(\hat{z}^2 = \hat{z}_a^2) = (-1)^a\frac{\eta_{\textrm{eff}}^2}{\tau_\pi^2}\hat{v}_a^2c_{\textrm{eff},a}^2\prod_{b\neq a}\frac{|\Lambda_b - \Lambda_a|}{(E + \Lambda_a)(E + \Lambda_b)}
\ee
Notice that the overall sign of the polynomial depends on two terms. $(-1)^a$ and effective modulations to the hydrodynamic speed of sound, $c_{\textrm{eff},a}^2$. For the rest of the proof, we shall assume that
\be
\forall a = 1,2,3;\qquad c_{\textrm{eff},a}^2 \geq 0.
\ee
Notice that this assumption is not that stringent. In fact, it is always true if $P_\varepsilon \geq 0$ and $\zeta_{\textrm{eff}}/\tau_\Pi \geq 0$ - two conditions that are guaranteed by \emph{linear} causality \cite{Hiscock_Lindblom_stability_1983,Olson:1989ey}. Suppose that for all $a,b= 1,2,3$, $\Lambda_a\neq \Lambda_b$ for any $b\neq a$. This assumption guarantees that $\hat{z}_3^2 < \hat{z}_2^2 < \hat{z}_1^2$. Since $\mathcal{P}_3(y)$ is a cubic in $y \in\mathbb{R}$, it follows that $\mathcal{P}_3\in C^\infty (\mathbb{R})$. Thus, we are guaranteed $3$ real roots since the function crosses the $y = 0$ axis three times, recognizing that for arbitrarily large/small $y$,
\be
\lim\limits_{y\rightarrow \pm\infty}\mathcal{P}_3(y) = \pm \infty,
\ee
and then applying the intermediate value theorem three times using $\mathcal{P}_3(\hat{z}_1^2) \leq 0$, $\mathcal{P}_3(\hat{z}_2^2) \geq 0$, and $\mathcal{P}_3(\hat{z}^2 = \hat{z}_3^2) \leq 0$. It remains to show that this holds for the cases where $\Lambda_1 = \Lambda_2$ ($\hat{z}_1^2 = \hat{z}_2^2$) and $\Lambda_2 = \Lambda_3$ ($\hat{z}_2^2 = \hat{z}_3^2$). Remember that we are assuming that not all three can be equal, as this case would be the zero shear-stress case. If $\Lambda_1 = \Lambda_2$, then $\hat{z}_1^2 = \hat{z}_2^2$ is a root of $\mathcal{P}_3$. One can write
\be
\mathcal{P}_3(y)\bigg|_{\Lambda_2 = \Lambda_1} = \left(y - \hat{z}_1^2\right)\mathcal{Q}_{13}(y),
\ee
where $\mathcal{Q}_{13}$ is a quadratic polynomial in $y$ such that
\begin{align}
\mathcal{Q}_{13}(y) = y^2& -\left[P_\varepsilon + \left(\frac{1 - \hat{v}_3^2}{E + \Lambda_1} + \frac{\hat{v}_3^2}{E + \Lambda_3}\right)\frac{\zeta_{\textrm{eff}}}{\tau_\Pi} + \left(\frac{1 }{E + \Lambda_1} + \frac{1}{E + \Lambda_3}\right)\frac{\eta_{\textrm{eff}}}{\tau_\Pi}\right]y\notag\\
& + \left[\left(\frac{1 - \hat{v}_3^2}{E + \Lambda_3} + \frac{\hat{v}_3^2}{E + \Lambda_1}\right)P_\varepsilon + \frac{1}{(E + \Lambda_1)(E + \Lambda_3)}\left(\frac{\zeta_{\textrm{eff}}}{\tau_\Pi} + \frac{\eta_{\textrm{eff}}}{\tau_\Pi}\right)\right]\frac{\eta_{\textrm{eff}}}{\tau_\Pi}.
\end{align}
In a similar fashion, notice that
\bml
\bea
\mathcal{Q}_{13}(\hat{z}_1^2) &=& -(1 - \hat{v}_3^2)\frac{\eta_{\textrm{eff}}}{\tau_\Pi}\frac{\Lambda_3 - \Lambda_1}{(E + \Lambda_1)(E + \Lambda_3)}c_{\textrm{eff},1}^2,\\
\mathcal{Q}_{13}(\hat{z}_3^2) &=& \hat{v}_3^2\frac{\eta_{\textrm{eff}}}{\tau_\Pi}\frac{\Lambda_3 - \Lambda_1}{(E + \Lambda_1)(E + \Lambda_3)}c_{\textrm{eff},3}^2.
\eea
\eml
Per our assumptions, notice that $\hat{z}_1^2 > \hat{z}_3^2$, while $\mathcal{Q}_{13}(\hat{z}_1^2)\leq 0$ and $\mathcal{Q}_{13}(\hat{z}_3^2)\geq 0$. Thus, it follows that there exists one \emph{real} root in between $\hat{z}_3^2$ and $\hat{z}_1^2$ by the intermediate value theorem. Since roots appear in conjugate pairs for polynomials with real coefficients, it follows that there then must be two real roots in $\mathcal{Q}_{13}$, thus proving that there must be three real roots in $\mathcal{P}_3$.

Now, suppose instead that $\Lambda_2 = \Lambda_3$ (and not $\Lambda_1$). Then $\hat{z}_2^2 = \hat{z}_3^2$ is a root, such that $\mathcal{P}_3(\hat{z}_3^2) = 0$. However, recall that $\hat{z}_3^2 < \hat{z}_1^2$, and $\mathcal{P}_3(\hat{z}_1^2) \leq 0$, whereas there exists $y' > \hat{z}_1^2$ such that $\mathcal{P}_3(y') > 0$, as $\mathcal{P}_3$ is a cubic with a positive leading order coefficient, and is thus an increasing, continuous function for large $y$. Therefore, the intermediate value theorem guarantees that either $\hat{z}_1^2$ is a root, or that there exists some root $y_0 > \hat{z}_1^2$ such that $\mathcal{P}_3$ has at least two real roots. Again, $\mathcal{P}_3$ is a cubic with real coefficients, meaning that there must then be three real roots. Therefore, in any case, we have three real roots under the assumption that $c_{\textrm{eff},a}^2\geq 0$ for any $a = 1,2,3$, meaning that we can discard the discriminant condition in Eq.~\eqref{eq:C1}, as it is superfluous.

\end{document}